\title{Close relatives of Feedback Vertex Set without single-exponential algorithms parameterized by treewidth}
\titlerunning{Close relatives of FVS without single-exponential algorithms in treewidth}
\author{Benjamin Bergougnoux}{Department of Informatics, University of Bergen, Bergen, Norway}{benjamin.bergougnoux@uib.no}{https://orcid.org/0000-0002-6270-3663}{}
\author{\'{E}douard Bonnet}{Univ Lyon, CNRS, ENS de Lyon, Université Claude Bernard Lyon 1, LIP UMR5668, France}{edouard.bonnet@ens-lyon.fr}{https://orcid.org/0000-0002-1653-5822}{}
\author{Nick Brettell}{School of Mathematics and Statistics, Victoria University of Wellington, New Zealand}{nick.brettell@vuw.ac.nz}{https://orcid.org/0000-0002-1136-418X}{}
\author{O-joung Kwon}{Department of Mathematics, Incheon~National~University, Incheon,~South~Korea \\ Discrete Mathematics Group, Institute~for~Basic~Science~(IBS), Daejeon,~South~Korea}{ojoungkwon@gmail.com}{https://orcid.org/0000-0003-1820-1962}{}
\authorrunning{B. Bergougnoux, \'E. Bonnet, N. Brettell, and O. Kwon}
\keywords{Subset Feedback Vertex Set, Multiway Cut, Parameterized Algorithms, Treewidth, Graph Modification, Vertex Deletion Problems}
\newcommand\cC{\mathcal{C}}
\newcommand\cR{\mathcal{R}}
\newcommand\cP{\mathcal{P}}
\newcommand\cB{\mathcal{B}}
\newcommand\cA{\mathcal{A}}
\mathchardef\mhyphen="2D
\newcommand{\aux}{\text{Aux}}
\newcommand{\inc}{\text{Inc}}
\newcommand\abs[1]{\lvert #1\rvert}
\newtheorem{fact}[theorem]{Fact}
\newcommand{\shortfvs}{\textsc{FVS}\xspace}
\newcommand{\fvs}{\textsc{Feedback Vertex Set}\xspace}
\newcommand{\oct}{\textsc{Odd Cycle Transversal}\xspace}
\newcommand{\shortect}{\textsc{ECT}\xspace}
\newcommand{\ect}{\textsc{Even Cycle Transversal}\xspace}
\newcommand{\sfvs}{\textsc{Subset Feedback Vertex Set}\xspace}
\newcommand{\soct}{\textsc{Subset Odd Cycle Transversal}\xspace}
\newcommand{\ssfvs}{\textsc{Subset FVS}\xspace}
\newcommand{\ssoct}{\textsc{Subset OCT}\xspace}
\newcommand{\sfes}{\textsc{Restricted Edge-Subset Feedback Edge Set}\xspace}
\newcommand{\ssfes}{\textsc{RESFES}\xspace}
\newcommand{\mwc}{\textsc{Multiway Cut}\xspace}
\newcommand{\vmwc}{\textsc{Node Multiway Cut}\xspace}
\newcommand{\dmwc}{\textsc{Directed Multiway Cut}\xspace}
\newcommand{\tw}{\mathsf{tw}}
\newcommand{\pw}{\mathsf{pw}}
\newcommand{\Oh}{\ensuremath{\mathcal{O}}}
\newcommand{\NP}{\textsf{NP}\xspace}
\newcommand{\FPT}{\textsf{FPT}\xspace}
\newcommand{\reduce}{\mathsf{reduce}}
\newcommand{\defparproblem}[4]{
 \vspace{1mm}
\noindent\fbox{
 \begin{minipage}{0.96\textwidth}
 \begin{tabular*}{\textwidth}{@{\extracolsep{\fill}}lr} #1 & {\bf{Parameter:}} #3 \\ \end{tabular*}
 {\bf{Input:}} #2 \\
 {\bf{Question:}} #4
 \end{minipage}
 }
 \vspace{1mm}
}
\begin{document}

\maketitle

\begin{abstract}
  The Cut \& Count technique and the rank-based approach have lead to single-exponential \FPT algorithms parameterized by treewidth, that is, running in time $2^{\Oh(\tw)}n^{\Oh(1)}$, for \textsc{Feedback Vertex Set} and connected versions of the classical graph problems (such as \textsc{Vertex Cover} and \textsc{Dominating Set}). 
  We show that \sfvs, \soct, \sfes, \vmwc, and \mwc are unlikely to have such running times.
  More precisely, we match algorithms running in time $2^{\Oh(\tw \log \tw)}n^{\Oh(1)}$ with tight lower bounds under the Exponential-Time Hypothesis (ETH), ruling out $2^{o(\tw \log \tw)}n^{\Oh(1)}$, where $n$ is the number of vertices and $\tw$ is the treewidth of the input graph.
  Our algorithms extend to the weighted case, while our lower bounds also hold for the larger parameter pathwidth and do not require weights.
  We also show that, in contrast to \oct, there is no $2^{o(\tw \log \tw)}n^{\Oh(1)}$-time algorithm for \ect under the ETH.
\end{abstract}

\section{Introduction}\label{sec:intro}

Many \NP-hard graph problems admit polynomial-time algorithms on graphs with bounded \emph{treewidth}, a~measure of how well a graph accommodates a decomposition into a tree-like structure.
In fact, Courcelle's Theorem \cite{courcelle} states that any problem definable in MSO$_2$ logic can be solved in linear time on graphs of bounded treewidth.
To obtain a more fine-grained perspective on the dependence on treewidth for certain problems, it is useful to study the parameterized complexity with respect to treewidth.  In particular, we can ask: what is the ``smallest'' function $f$ for which we can obtain an algorithm that, given a graph with treewidth~$\tw$, has running time $f(\tw)n^{\Oh(1)}$?
For \fvs, standard dynamic programming techniques can be used to obtain an algorithm running in $2^{\Oh(\tw \log \tw)}n^{\Oh(1)}$ time, and for a while many believed this to be essentially best possible.
However, this changed in 2011 when Cygan et al.~\cite{Cygan11} developed the Cut\&Count technique, by which they obtained a \emph{single-exponential} $3^{\tw}n^{\Oh(1)}$-time randomized algorithm.
Following this, Bodlaender et al.~\cite{Bodlaender15} showed there is a deterministic $2^{\Oh(\tw)}n^{\Oh(1)}$-time algorithm, using a rank-based approach and the concept of representative sets.
The same year, Pilipczuk~\cite{Pilipczuk11} exhibited a logic fragment whose model checking admits a single-exponential algorithm parameterized by the treewidth of the input graph, thereby providing a scaled-down but more fine-grained version of Courcelle's theorem. 
Moreover, also in 2011, Lokshtanov et al.~\cite{Lokshtanov11} developed a framework yielding $2^{\Omega(\tw \log \tw)}n^{\Oh(1)}$-time lower bounds under the Exponential Time Hypothesis (ETH).
Recall that the ETH asserts that there is a real number $\delta > 0$ such that \textsc{$3$-SAT} cannot be solved in time $2^{\delta n}$ on $n$-variable formulas~\cite{ImpagliazzoP01}. 
Lokshtanov et al.'s paper prompted several authors to investigate the exact time-dependency on treewidth for a variety of graph modification problems.

For a \emph{vertex-deletion problem}, the task is to delete at most $k$ vertices so that the resulting graph is in some target class.
\fvs can be viewed as a vertex-deletion problem where the graphs in the target class consist of blocks with at most two vertices (a \emph{block} is a maximal subgraph $H$ such that $H$ has no cut vertices).
Bonnet et al.~\cite{BonnetBKM19} considered the class of problems, generalizing \fvs, where the target graphs are those consisting of blocks each of which has a bounded number of vertices, and is in some fixed hereditary, polynomial-time recognizable class $\mathcal{P}$.
They showed that such a problem is solvable in time $2^{\Oh(\tw)}n^{\Oh(1)}$ precisely when each graph in $\mathcal{P}$ is chordal (when $\mathcal{P}$ does not satisfy this condition, an algorithm with running time $2^{o(\tw \log \tw)}n^{\Oh(1)}$ would refute the ETH).
Baste et al.~\cite{Baste19} studied another generalization of \fvs: the vertex-deletion problem where the target graphs are those having no minor isomorphic to a fixed graph $H$.
They showed a single-exponential parameterized algorithm in treewidth is possible precisely when $H$ is a minor of the banner (the cycle on four vertices with a degree-1 vertex attached to it), but $H$ is not $P_5$ (the path graph on five vertices), assuming the ETH holds.

So-called \emph{slightly superexponential parameterized algorithms}, running in time $2^{\Oh(\tw \log \tw)}n^{\Oh(1)}$, are by no means a formality for problems that are \FPT in treewidth.
For instance, Pilipczuk~\cite{Pilipczuk11} showed that deciding if a graph has a transversal of size at most $k$ hitting all cycles of length exactly $\ell$ (or length at most $\ell$) for a fixed value $\ell$ cannot be solved in time $2^{o(\tw^2)} n^{\Oh(1)}$, unless the ETH fails.
This lower bound matches a dynamic-programming based algorithm running in time $2^{\Oh(\tw^2)} n^{\Oh(1)}$.
Cygan et al.~\cite{Cygan17} investigated the more general problem of hitting all subgraphs $H$ of a given graph $G$, for a fixed pattern graph $H$, again parameterized by treewidth.
For various $H$, they found algorithms running in time $2^{\Oh(\tw^{u(H)})} n^{\Oh(1)}$, and proved ETH lower bounds in $2^{\Omega(\tw^{\ell(H)})} n^{\Oh(1)}$, for values $1 < \ell(H) \leqslant u(H)$ depending on $H$.
Another recent example is provided by Sau and Uéverton~\cite{Sau20} who prove similar results for the analogous problem where ``subgraphs'' is replaced by ``induced subgraphs''.
Finally, for the vertex-deletion problem where the target class is a proper minor-closed class given by the non-empty list of forbidden minors, it is still open if the double-exponential dependence on treewidth is asymptotically best possible~\cite{Baste17}.

Sometimes, only a seemingly slight generalization of \fvs can result in problems with no single-exponential algorithm parameterized by treewidth.
Bonamy et al.~\cite{Bonamy18} showed that \textsc{Directed Feedback Vertex Set} can be solved in time $2^{\Oh(\tw \log \tw)}n^{\Oh(1)}$ but not faster under the ETH, where $\tw$ is the treewidth of the underlying undirected graph.
In this paper, we consider another collection of problems that generalize \fvs, and that do not have single-exponential algorithms parameterized by treewidth.
An equivalent formulation of \shortfvs is to find a transversal of \emph{all} cycles in a given graph.
We consider problems where the goal is to find a transversal of \emph{some subset} of the cycles of a given graph.
If this subset of cycles is those that intersect some fixed set of vertices~$S$, we obtain the following problem:

\defparproblem{\sfvs (\ssfvs)}{A graph $G$, a subset of vertices $S \subseteq V(G)$, and an integer $k$.}{$\tw(G)$}{Is there a set of at most $k$ vertices hitting all the cycles containing a vertex in $S$?}

If we further restrict this set of cycles to those that are odd, we obtain the next problem:

\defparproblem{\soct (\ssoct)}{A graph $G$, a subset of vertices $S \subseteq V(G)$, and an integer $k$.}{$\tw(G)$}{Is there a set of at most $k$ vertices hitting all the odd cycles containing a vertex in $S$?}

\noindent
Both of these problems are \NP-complete.
By setting $S = V(G)$, one sees that the latter problem generalizes \oct, for which Fiorini et al.~\cite{Fiorini08} presented a $2^{\Oh(\tw)}n^{\Oh(1)}$-time algorithm.

Alternatively, one can require a transversal of even cycles.  We first consider the problem of finding a transversal of \emph{all} even cycles since, to the best of our knowledge, the fine-grained complexity of this problem parameterized by treewidth has not previously been studied.

\defparproblem{\ect (\shortect)}{A graph $G$ and an integer $k$.}{$\tw(G)$}{Is there a set of at most $k$ vertices hitting all the even cycles of $G$?}

\noindent
We note that parameterizations by solution size have been studied for these three problems~\cite{CyganPPW13,kakimura2015fixed,lokshtanov2017hitting,MisraRRS12,Wahlstrom14}.

We now move to edge variants of \shortfvs.
Note that \textsc{Feedback Edge Set}, where the goal is to find a set of edges of weight at most $k$ that hits the cycles, can be solved in linear time, since it is equivalent to finding a maximum-weight spanning forest.
Xiao and Nagamochi showed that the subset variants \textsc{Vertex-Subset Feedback Edge Set} and \textsc{Edge-Subset Feedback Edge Set}, where the deletion set only needs to hit cycles containing a vertex or an edge (respectively) of a given set $S$, can also be solved in linear time~\cite{Xiao12}.
On the other hand, the latter problem becomes \NP-complete when the deletion set cannot intersect $S$.
This problem is known as \sfes.

\defparproblem{\sfes (\ssfes)}{A graph $G$, a subset of edges $S \subseteq E(G)$, and an integer $k$.}{$\tw(G)$}{Is there a set of at most $k$ edges of $E(G) \setminus S$ whose removal yields a graph without any cycle containing at least one edge of $S$?}

The final two \NP-complete problems we consider are closely related to \sfvs and \sfes, respectively (see the remark in \cref{parampres}).
They are well-established problems with an abundance of approximation and parameterized algorithms in the literature.

\defparproblem{\vmwc}{A graph $G$, a subset of vertices $T \subseteq V(G)$, called \emph{terminals}, and an integer $k$.}{$\tw(G)$}{Is there a set of at most $k$ vertices of $V(G) \setminus T$ hitting every path between a pair of terminals?}

\defparproblem{\mwc}{A graph $G$, a subset of vertices $T \subseteq V(G)$, called \emph{terminals}, and an integer $k$.}{$\tw(G)$}{Is there a set of at most $k$ edges hitting every path between a pair of terminals?}

The look-alike problem \textsc{Multicut}, where the task is to separate each pair of terminals in a given set of pairs (rather than all the pairs in a given set), is \NP-complete on trees~\cite{Garg97}.
Therefore a parameterization by treewidth cannot help here.
In the language of parameterized complexity, \textsc{Multicut} parameterized by treewidth is \textsf{paraNP}-complete.

\subsection{Our contribution}
\label{intro-contib}

With the exception of \ect, for which we provide only a lower bound, we show that all the problems formally defined so far admit a slightly superexponential parameterized algorithm, and that this running time cannot be improved, unless the ETH fails.
We leave as an open problem the existence of a slightly superexponential algorithm for \textsc{(Subset) Even Cycle Transversal} parameterized by treewidth. 
We note that Deng et al.~\cite{Deng13} have already shown that \mwc can be solved in time $2^{\Oh(\tw \log \tw)} n^{\Oh(1)}$.
Our algorithms work for treewidth and weights, while our lower bounds hold for the larger parameter pathwidth and do not require weights.

On the algorithmic side we show the following:
\begin{theorem}\label{alg:main}
 The following problems can be solved in time $2^{\Oh(\tw \log \tw)} n^{\Oh(1)}$ on $n$-vertex graphs with treewidth~$\tw$\emph{:}
  \begin{itemize}
  \item \sfvs,
  \item \soct,
  \item \sfes, and
  \item \vmwc.
  \end{itemize}
\end{theorem}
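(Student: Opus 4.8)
The plan is to design, for each of the four problems, a dynamic-programming algorithm over a tree decomposition of the input graph. First I would compute a tree decomposition of width $\Oh(\tw)$ in time $2^{\Oh(\tw)}n^{\Oh(1)}$ and put it into the usual normal form of a nice tree decomposition with \emph{introduce-edge} nodes, rooted arbitrarily; from here on $t$ denotes a node and $B_t$ its bag, with $|B_t| = \Oh(\tw)$. For every node $t$ the table is indexed by an \emph{interface}, consisting of the subset $X \subseteq B_t$ of bag vertices placed in the deletion set (for \sfes, of the already processed deleted edges incident to $B_t$), together with an auxiliary combinatorial structure on $B_t \setminus X$; the entry stores the minimum size --- or, in the weighted setting, the minimum weight --- of a partial solution consistent with that interface, and the answer is read off at the root. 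Because the auxiliary structure is, at its core, a partition of the bag, the number of interfaces at a node is $(\tw + \Oh(1))^{\Oh(\tw)} = 2^{\Oh(\tw \log \tw)}$; since a nice tree decomposition has $\Oh(\tw \cdot n)$ nodes and every transition can be carried out in time polynomial in the table size, the overall running time is $2^{\Oh(\tw \log \tw)} n^{\Oh(1)}$.

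All the content is in the choice of auxiliary structure, which must record just enough to detect when a transition would create a forbidden substructure, while remaining computable from the children's interfaces. For \vmwc the structure is simply the partition of $B_t \setminus X$ into the connected components of the partially processed graph (minus $X$), each block carrying one bit that is set once the component contains a terminal; a partial solution is valid as long as no component gets a terminal twice, and this is preserved by rejecting any transition that would merge two terminal-carrying components or attach a terminal to one. For \sfvs one uses the observation that a graph contains no cycle through $S$ if and only if every block of it that contains a vertex of $S$ is a single vertex or a single edge; the auxiliary structure is then a pair of nested partitions of $B_t \setminus X$ --- the coarse one being the connected components, the fine one grouping two vertices together when the path between them in the block-cut structure of their common component avoids $S$ (that this is an equivalence relation uses the tree structure of the block-cut tree). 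Adding an edge, or performing a join, then creates a cycle through $S$ precisely when the relevant path (or alternating pair of paths) between its endpoints crosses a boundary between two fine classes, which is readable from the interface. \sfes is handled identically with the roles of vertices and edges exchanged (now the invariant is that every $S$-edge is a bridge). For \soct one additionally exploits that a $2$-connected non-bipartite graph has every vertex on an odd cycle, so that a graph contains no odd cycle through $S$ if and only if every block containing a vertex of $S$ is bipartite; the structure from the \sfvs case is enriched with a proper $2$-colouring of the bipartite part of each component (recording path parities, as in the single-exponential algorithm for \oct) together with bookkeeping for the non-bipartite blocks, so that one can decide whether a newly formed block is non-bipartite and whether it swallows a vertex of $S$.

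The transitions are then routine but must be verified against the invariants above. At an introduce-vertex node we branch on whether the new vertex joins the deletion set; at an introduce-edge node we either keep the edge --- which may merge two components or blocks, and is rejected when this produces a configuration forbidden by the problem and visible in the interface --- or, when the problem permits, we delete it; at a forget node we project out the forgotten vertex and keep, among the interfaces that become equal, the one of minimum value; at a join node we combine two interfaces over the same bag by taking the transitive closure of the union of their partitions and composing the auxiliary labels, rejecting the combination when an alternating path through the two sides closes a cycle (an odd cycle, a cycle through $S$, \ldots) that the problem forbids. A standard exchange argument, by induction from the leaves, then shows that every table entry equals the claimed optimum.

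The step I expect to be the main obstacle is exactly this composition at join nodes for the three cycle-hitting problems: one must show that the constantly many labels attached to the classes of the partition are \emph{composable}, i.e.\ that from the $S$-status and parity status recorded on the two sides one can correctly determine the status of every cycle the join creates, without ever needing to remember the identity of an $S$-vertex that has already left the bag. Pinning down the right label set --- and checking that it is preserved by all four transition types --- is what makes the algorithms correct while keeping the number of interfaces at $2^{\Oh(\tw \log \tw)}$.
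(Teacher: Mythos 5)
Your outline is essentially the paper's approach: a $2^{\Oh(\tw\log\tw)}$-state dynamic program over a tree decomposition whose states encode the restriction of the block--cut structure of a partial solution to the bag, decorated with $S$-membership and (for \soct) parity labels, justified by the same two characterizations (no $S$-traversing cycle iff every block meeting $S$ is a vertex or an edge; no odd $S$-traversing cycle iff every block meeting $S$ is bipartite). The only notable differences are presentational --- the paper handles \vmwc and \sfes by trivial weight-based reductions to weighted \sfvs rather than by separate tables, and it stores representative partial solutions modulo an explicit equivalence relation rather than interface-indexed optima --- and the join-node composability you correctly single out as the main obstacle is exactly where the paper's technical work sits, namely the correctness claim for its auxiliary forests $\aux(X,t)$ inside the proof of Theorem~\ref{thm:algosoct}.
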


We provide algorithms having the claimed running time for the weighted versions of each of the four problems in Theorem~\ref{alg:main}.
In these weighted versions, the input graph is given with a weight function $w$ on the vertices when the problem is to find a set of vertices, or on the edges when the problem is to find a set of edges.
Furthermore, in the weighted versions, the problem asks for a solution of weight at most $k$.

On the complexity side, the main conceptual contribution of the paper is to show that problems seemingly quite close to \fvs do not admit a single-exponential algorithm parameterized by treewidth, under the ETH.
\begin{theorem}\label{hardness:main}
  Unless the ETH fails, the following problems cannot be solved in time $2^{o(\pw \log \pw)}n^{\Oh(1)}$ on $n$-vertex graphs with pathwidth $\pw$\emph{:}
  \begin{itemize}
  \item \sfvs,
  \item \soct,
  \item \ect,
  \item \sfes,
  \item \vmwc, and
  \item \mwc.
  \end{itemize}
\end{theorem}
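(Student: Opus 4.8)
The plan is to prove each item by a polynomial-time reduction from a problem known to admit no $2^{o(k\log k)}(n+m)^{\Oh(1)}$-time algorithm under the ETH, in such a way that the produced instance has pathwidth $\Oh(k)$; composing such a reduction with a hypothetical $2^{o(\pw\log\pw)}n^{\Oh(1)}$ algorithm would then refute the ETH. A convenient starting point is the $k\times k$-\textsc{Hitting Set} problem of Lokshtanov, Marx and Saurabh~\cite{Lokshtanov11}: given subsets $F_1,\dots,F_m$ of $[k]\times[k]$, each meeting every row at most once, decide whether there is a set containing exactly one element of each row and intersecting every $F_\ell$. (For the parity variants \soct and \ect it may be more comfortable to start instead from $k\times k$-\textsc{(Permutation) Clique}, which has the same lower bound; the overall architecture is unchanged.)

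I would build the reduced graph $G$ out of two gadget types threaded along a path. For every row $i\in[k]$ a \emph{choice gadget}: a small subgraph carrying $k$ distinguished ports $1,\dots,k$, designed so that in any deletion set of the prescribed size exactly one port $\sigma(i)$ is ``selected''. Here the role of $S$ (resp.\ the parity of cycles, resp.\ the terminal set) is used to make some cycle unavoidable unless a port is picked, while a tight budget prevents picking two. For every constraint $F_\ell$ a \emph{checker gadget}, which introduces one forbidden configuration (a cycle through $S$; an odd such cycle; an even cycle; a terminal-to-terminal path) that is destroyed exactly when some $i$ with $(i,j)\in F_\ell$ has $\sigma(i)=j$. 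Consecutive checkers are linked by $k$ parallel ``wires'' (one per row) transporting the selection, each of constant size at any position, so that the left-to-right path decomposition has bags of size $\Oh(k)$ and $\pw(G)=\Oh(k)$.

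Correctness I would establish by a single budget count: fix the target to the exact cost of realizing one selection per row plus the unavoidable cost inside all checkers, so that any deletion set of at most that size must (i) spend its budget ``cleanly'', yielding a well-defined $\sigma:[k]\to[k]$ picking one port per row, and (ii) destroy every checker configuration, which forces $\sigma$ to hit every $F_\ell$; the converse, turning a hitting set into a deletion set, is direct. For \vmwc and \mwc the same scaffold applies with terminals in place of $S$ and with choice/checker gadgets realized as terminal-separation gadgets, also exploiting the correspondence with \sfvs and \sfes noted in \cref{parampres}. Given the \sfvs construction, \soct is obtained by adapting the gadgets with explicit parity control (or by a standalone variant), \ect by the complementary parity bookkeeping together with a simulation of the effect of $S$ by even-cycle gadgets, and \sfes from \sfvs by replacing each vertex choice with an edge choice locally, all at the cost of only a constant-factor increase in pathwidth.

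The main obstacle will be the gadget engineering under the ``subset'' and parity constraints: for \soct and \ect every cycle running along a wire or inside a checker must have controlled parity, so the wires cannot be plain paths; and the budget argument has to be robust enough that no clever solver can delete a shared wire vertex to satisfy two checkers while corrupting the encoding. Making the checker gadgets ``one-sided''---destroyable only by a genuinely satisfying selection, never as collateral of unrelated deletions---while each wire keeps a constant frontier is the delicate point; the pathwidth bound and the ETH conclusion are then routine.
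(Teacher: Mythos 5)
Your high-level architecture coincides with the paper's: both reduce from a $k\times k$-structured problem carrying a $2^{o(k\log k)}$ ETH lower bound, both realize the instance as a sequence of $\Oh(k)$-sized selection blocks threaded along a path (so that pathwidth $\Oh(k)$ follows from a left-to-right decomposition), and both use an exact budget count to force one ``clean'' selection per row which the constraint gadgets then verify. The paper formalizes exactly this as a meta-reduction (\cref{hardness:generic}) with column-selector, row-selector, edge and propagation gadgets, and then instantiates the gadgets problem by problem. So the skeleton is right.

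The gap is that the skeleton is essentially all you provide: every gadget is deferred, and you yourself flag the gadget engineering as ``the main obstacle'' and ``the delicate point''. That engineering is the entire content of the proof. Concretely: (i) the selector must admit \emph{only} the $k$ intended $(2k-2)$-deletions --- the paper achieves this with bicliques into $S$ (and, for \ect, where one cannot lean on a prescribed set $S$, with $K_{k,k+1}$ bicliques forbidding a surviving $C_4$), plus a pairing device tying the two homologous copies of each grid vertex together; (ii) the intended-solution property must be verified, i.e.\ the intended deletion must not leave a stray forbidden cycle inside the wires and checkers themselves, which for \soct and \ect forces parity bookkeeping (edge subdivisions) in every gadget; neither step is routine. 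Most importantly, for \mwc your claim that ``the same scaffold applies with terminals in place of $S$'' does not go through: an edge cut can cheaply isolate each of $k$ terminals by severing the few edges around it and leave one huge component, so no per-checker obstruction of the kind you describe forces a consistent selection --- the paper explicitly notes that the \vmwc trick cannot be replicated here. Instead it reduces from \textsc{$k\times k$-Permutation Clique}, uses weighted edge gadgets whose deletion cost is $12$, $11$ or $9$ according to whether the edge has $0$, $1$ or $2$ endpoints in the selected set, and adds degree-equalizer gadgets so that the total cost fits the budget exactly when the selection is a clique (via $\Delta k-\binom{k}{2}$ versus $\Delta k$); this is a genuinely different mechanism, not an instantiation of the vertex-deletion template. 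Your route to \sfes (localizing the \sfvs gadgets to edges) is also not the paper's, which derives \sfes from \mwc by adding an apex with undeletable edges, and would need its own verification. As written, the proposal is a plausible plan rather than a proof.
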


For the last two problems, our reductions build instances where the number of terminals $|T|$ is $\Theta(\pw)$.
Thus we also rule out a running time of $|T|^{o(\pw)}$.
All the reductions are from \textsc{$k \times k$-(Permutation) Independent Set/Clique} following a strategy suggested by Lokshtanov et al.~\cite{Lokshtanov18} (see for instance,~\cite{Broersma13,Drange16,BonnetBKM19,Bonamy18,Baste19}).
These problems cannot be solved in time $2^{o(k \log k)}$, unless the ETH fails.

\defparproblem{\textsc{$k \times k$-Independent Set}}{A graph $H$ with vertex set $V(H)=[k]^2$ for some integer $k$.}{$k$}{An independent set of size $k$ hitting each column exactly once.}

\defparproblem{\textsc{$k \times k$-Permutation Independent Set}}{A graph $H$ with vertex set $V(H)=[k]^2$ for some integer $k$.}{$k$}{An independent set of size $k$ hitting each column and each row exactly once.}

A \emph{row} is a set of vertices of the form $\{(i,1),(i,2),\ldots, (i,k)\} \subset V(H)$ for some $i \in [k]$, while a column is a set $\{(1,j),(2,j),\ldots, (k,j)\} \subset V(H)$ for some $j \in [k]$.
The problem \textsc{$k \times k$-(Permutation) Clique} is defined analogously, where the solution is required to be a clique rather than an independent set.\footnote{Observe that we switch the columns and the rows compared to the original definition of \textsc{$k \times k$-Clique}~\cite{Lokshtanov18}.
While this is of course equivalent, it will make the representation of some gadgets slightly more conducive to the page layout.}

\subparagraph{Roadmap for the lower bounds.}
To prove~\cref{hardness:main}, we start by designing a gadget specification for generic vertex-deletion problems.
We show that any such problem, allowing for gadgets respecting the specification, has the lower bound given in~\cref{hardness:main}.
This is achieved by a meta-reduction from \textsc{$k \times k$-Permutation Independent Set}.
We give gadgets for \ssfvs, \ssoct, and \shortect that comply with the specification.
We thus obtain the first three items of the theorem in a unified way, with simple and reusable gadgets.
This mini-framework may in principle be useful for other vertex-deletion problems.

In order to show a stronger lower bound for \vmwc, with the number of terminals in $\Theta(k)$, we depart from the previous specification slightly, although we still use some shared notation and arguments to bound the pathwidth, where convenient.
This reduction is from \textsc{$k \times k$-Independent Set}.

Finally, the reduction to \mwc is more intricate.
For this problem it is surprisingly challenging to discourage the undesirable solutions ``cutting close'' to every terminal but one, where the deletion set yields a very large connected component for one terminal, and small components for the rest of the terminals.
In particular, the trick used for the \vmwc lower bound cannot be replicated.
We overcome this issue by designing a somewhat counter-intuitive edge gadget which encourages the retention of as many pairs of endpoints linked to two (distinct) terminals as possible.
This uses the simple fact that, in a $\Delta$-regular graph, a clique of size $k$ minimizes the number of edges covered by $k$ vertices: $\Delta k - {k \choose 2}$ vs $\Delta k$ for an independent set of size $k$. 
We then reduce from \textsc{$k \times k$-Permutation Clique}.
We discuss why getting the same lower bound for a regular variant of \textsc{$k \times k$-Permutation Clique} is technical, and bypass that difficulty by encoding a \emph{degree-equalizer} gadget directly in the \mwc instance.
As a side note, we nevertheless prove that a semi-regular variant of \textsc{$k \times k$-Clique} also has the slightly superexponential lower bound.
This proof uses a constructive version of the Hajnal-Szemerédi theorem on equitable colorings.


\subparagraph{A remark on parameter-preserving reductions between the problems.}\label{parampres}
  There is an easy reduction from \vmwc to \textsc{Weighted Subset Feedback Vertex Set} (\textsc{WSFVS}, for short).
  It consists of adding a vertex $v$ of ``infinite'' weight adjacent to all the terminals of the \mwc instance, which also all get ``infinite'' weight.
  The set $S$ of the \textsc{WSFVS} instance is $\{v\}$.
  The same process yields a reduction from \mwc to \sfes, where now the set $S$ of the \sfes instance contains all the edges incident to $v$ (recall that these edges are thus undeletable).

  From the latter reduction, we can immediately derive the lower bound for \sfes from the lower bound for \mwc (see~\cref{hardness:sfes}).
  However, the hardness result for \vmwc (see \cref{hardness:vmwc}) does not imply anything for \sfvs.
  Indeed, to encode the ``infinite'' weight that makes $v$ and its neighbors undeletable, one would have to duplicate these vertices many times.
  This would result in a large biclique, or at the very least a large biclique minor, and would thereby make the pathwidth or treewidth large.
  Therefore \cref{hardness:adhoc} is necessary and cannot be obtained by a simple modification of \cref{hardness:vmwc}.
  Finally, we observe that the straightforward reduction from \mwc to \vmwc requires vertex weights, or blows up the treewidth.
  So again \cref{hardness:vmwc} cannot be derived from \cref{hardness:mwc}.
  
  \subparagraph{Roadmap for the algorithms.}
  To prove Theorem~\ref{alg:main}, we first present a $2^{\Oh(\tw \log \tw)}n^3$-time algorithm for the weighted variant of \ssoct. 
  With a few modifications, this algorithm can solve the weighted variant of \ssfvs.
  We obtain algorithms for the other problems in Theorem~\ref{alg:main} by reducing these problems to the weighted variant of \ssfvs.

Let us explain our approach for \ssoct on a graph $G$ with $S\subseteq V(G)$.
We solve \ssoct indirectly by finding a set $X\subseteq V(G)$ of maximum weight that induces a graph with no odd cycles traversing $S$ (we call such a graph $S$-bipartite).
We prove that a graph has no odd cycle traversing $S$ if and only if for each block $C$, either $C$ is bipartite or $C$ has no vertex in $S$.
From this characterization, we prove that it is enough to store $2^{\Oh(\tw\log\tw)}$ partial solutions at each bag $B$ of a tree decomposition.

Let $B$ be a bag of the tree decomposition of $G$ and $G_B$ be the graph induced by the vertices in $B$ and its descendant bags in the tree decomposition.
A partial solution of $G_B$ is a set $X\subseteq V(G_B)$ that induces an $S$-bipartite graph.
We design an equivalence relation $\equiv_B$ on the partial solutions of $G_B$ such that for every $X\equiv_B Y$ and $W\subseteq V(G)\setminus V(G_B)$, $G[X\cup W]$ is $S$-bipartite if and only if $G[Y\cup W]$ is $S$-bipartite.
Consequently, it is enough to keep a partial solution of maximum weight for each equivalence class of $\equiv_B$.
Intuitively, the equivalence relation $\equiv_B$ is based on the information: (1) how the blocks of $G[X]$ intersecting $B$ are connected, (2) whether important blocks (that have the possibility to create an $S$-traversing odd cycle later) contain a vertex of $S$, and (3) the parity of the paths between the vertices in $B$.
Since $\equiv_B$ has $2^{\Oh(\tw \log \tw)}$ equivalence classes, we deduce from this equivalence relation a $2^{\Oh(\tw \log \tw)}n^3$-time algorithm with standard dynamic programming operations.
The polynomial factor $n^3$ appears because we can test $X\equiv_B Y$ in time $\Oh(n^2)$.

For the weighted variant of \ssfvs, we can use the same equivalence relation without (3).
We reduce the weighted variant of \vmwc to \ssfvs as explained in the previous subsection: 
by adding a vertex $v$ of infinite weight adjacent to the set of terminals, setting $S=\{v\}$, and also giving infinite weights to the terminals.
Furthermore, we reduce the weighted variant of \sfes to the weighted variant of \ssfvs
by subdividing each edge, setting $S$ as the set of subdivided vertices corresponding to the given subset of edges, and giving infinite weights to the original vertices and the vertices in $S$.
These two reductions show that both problems admit $2^{\Oh(\tw \log \tw)}n^3$-time algorithms.

\subsection{Organization of the paper}
The rest of the paper is organized as follows.
In~\cref{sec:preliminaries} we give the required graph-theoretic definitions and notation.
In~\cref{sec:lower-bounds} we prove all the ETH lower bounds of~\cref{hardness:main}.
More precisely, in~\cref{subsec:generic} we introduce a gadget specification for a generic vertex-deletion problem, and we show the slightly superexponential lower bound for any problem complying with the gadget specification.
In~\cref{subsec:gadgets} we design gadgets for \ssfvs, \ssoct, \shortect, and thus obtain the first three items of~\cref{hardness:main}.
In~\cref{subsec:vmwc-lower,subsec:mwc-lower} we present specific reductions for \vmwc and \mwc, respectively.
In Section~\ref{sec:algorithms} we prove that 
the weighted variants of \ssoct, \ssfvs, \sfes, and \vmwc admit $2^{\Oh(\tw \log \tw)}n^3$-time algorithms.

\section{Preliminaries}\label{sec:preliminaries}

We assume all graphs have no loops or parallel edges.
Let $G$ be a graph.
We denote the vertex set and the edge set of $G$ by $V(G)$ and $E(G)$, respectively.
For a vertex $v$ in $G$, we use $G-v$ to denote the \emph{deletion} of $v$ from $G$, that is, the graph obtained by removing $v$ and its incident edges.
For $X\subseteq V(G)$, we denote by $G-X$ the graph obtained by removing all vertices in $X$ and their incident edges.
For $X\subseteq V(G)$, we denote by $G[X]$ the subgraph induced by the vertex set $X$.
A subgraph $H$ of $G$ is an \emph{induced subgraph} of $G$ if $H=G[X]$ for some vertex subset $X$ of $G$.
For two graphs $G_1$ and $G_2$,  $G_1\cup G_2$ is the graph with the vertex set $V(G_1) \cup V(G_2)$ and the edge set $E(G_1) \cup E(G_2)$, 
and $G_1\cap G_2$ is the graph with the vertex set $V(G_1) \cap V(G_2)$ and the edge set $E(G_1) \cap E(G_2)$.
A set $X \subseteq V(G)$ is a \emph{clique} if $G$ has an edge between every pair of vertices in $X$.  A graph with vertex set $X \cup Y$ that has an edge between every vertex $x \in X$ and $y \in Y$ is called a \emph{biclique}, and is denoted $K_{|X|,|Y|}$.

For a vertex $v$ in $G$, we denote by $N_G(v)$ the set of neighbors of $v$ in $G$, 
and $N_G[v]:=N_G(v)\cup \{v\}$.
For $X\subseteq V(G)$, we let $N_G(X):=(\bigcup_{v \in X} N_G(v)) \setminus X$, and say $N_G(X)$ is the \emph{(open) neighborhood} of $X$.
For $u,v \in V(G)$, we say that $u$ and $v$ are \emph{twins} if $N(u) = N(v)$.
If $N[u] = N[v]$, then we also say that $u$ and $v$ are \emph{true} twins; whereas when $u$ and $v$ are non-adjacent twins, we say that $u$ and $v$ are \emph{false} twins.

A vertex $v$ of $G$ is a {\em cut vertex} if the deletion of $v$ from $G$ increases the number of connected components. 
We say $G$ is \emph{2-connected} if it is connected and has no cut vertices.
Note that every connected graph on at most two vertices is 2-connected.
A \emph{block} of $G$ is a maximal 2-connected subgraph of $G$.

Let $G$ be a graph.
A \emph{walk} in $G$ is a sequence of vertices where every consecutive pair of vertices is an edge of $G$.
The first and last vertices in a walk are called \emph{end-vertices}.
A walk is \emph{closed} if its two end-vertices are the same. 
Given two walks $W_1=(v_1,\dots,v_t)$ and $W_2=(v_t,v_{t+1},\dots,v_k)$ whose internal vertices are pairwise distinct, we denote by $W_1\cdot W_2$ the walk $(v_1,\dots,v_t,v_{t+1},\dots,v_k)$.
We say that a walk is odd (resp. even) if the number of edges used by the the walk is odd (resp. even).
Given $S\subseteq V(G)$, we say that a walk is \emph{$S$-traversing} if it contains at least one vertex in $S$. 
For a graph $H$ and subgraph $B$ of $H$, we say that a walk $W$ in $H$ is a $B$-walk if the endpoints of $W$ are in $B$ and the internal vertices of $W$ are not in $B$.
A \emph{path} of a graph is a walk where each vertex is used at most once.
A \emph{cycle} of a graph is a closed walk where each vertex, except the end-vertices, is used at most once.

\subsection{Treewidth}

A \emph{tree decomposition} of a graph $G$ is 
a pair $(T,\cB)$ 
consisting of a tree $T$
and a family $\cB=\{B_t\}_{t\in V(T)}$ of sets $B_t\subseteq V(G)$,
called \emph{bags},
satisfying the following three conditions:
\begin{enumerate}
	\item $V(G)=\bigcup_{t\in V(T)}B_t$,
	\item for every edge $uv$ of $G$, there exists a node $t$ of $T$ such that $u,v\in B_t$, and
	\item for $t_1,t_2,t_3\in V(T)$, $B_{t_1}\cap B_{t_3}\subseteq B_{t_2}$ whenever $t_2$ is on the path from $t_1$ to $t_3$ in $T$.
\end{enumerate}
The \emph{width} of a tree decomposition $(T,\cB)$ is $\max\{ \abs{B_{t}}-1:t\in V(T)\}$.	
The \emph{treewidth} of $G$ is the minimum width over all tree decompositions of $G$. 
A \emph{path decomposition} is a tree decomposition $(P,\mathcal{B})$ where $P$ is a path.
The \emph{pathwidth} of $G$ is the minimum width over all path decompositions of $G$. 
We denote a path decomposition $(P,\mathcal{B})$ as $(B_{v_1},\dotsc,B_{v_t})$, where $P$ is a path $v_1v_2\dotsb v_t$.

To design a dynamic programming algorithm, we use a convenient form of a tree decomposition known as a nice tree decomposition.
A tree $T$ is said to be \emph{rooted} if it has a specified node called the \emph{root}.
Let $T$ be  a rooted tree with root node $r$.
A node $t$ of $T$ is called a \emph{leaf} node if it has degree one and it is not the root.
For two nodes $t_1$ and $t_2$ of $T$, $t_1$ is a \emph{descendant} of $t_2$ if the unique path from $t_1$ to $r$ contains $t_2$. 
If a node $t_1$ is a descendant of a node $t_2$ and $t_1t_2\in E(T)$, 
then $t_1$ is called a \emph{child} of $t_2$.

A tree decomposition $(T,\cB=\{B_t\}_{t\in V(T)})$ is a \emph{nice tree decomposition} with root node $r\in V(T)$ if $T$ is a rooted tree with root node $r$, and every node $t$ of $T$ is one of the following:
\begin{enumerate}
	\item a \emph{leaf node}: $t$ is a leaf of $T$ and $B_t=\emptyset$;
	\item an \emph{introduce node}: $t$ has exactly one child $t'$ and $B_t=B_{t'}\cup \{v\}$ for some $v\in V(G)\setminus B_{t'}$;
	\item a \emph{forget node}: $t$ has exactly one child $t'$ and $B_t=B_{t'}\setminus \{v\}$ for some $v\in B_{t'}$; or
	\item a \emph{join node}: $t$ has exactly two children $t_1$ and $t_2$, and $B_t=B_{t_1}=B_{t_2}$.
\end{enumerate}

\begin{theorem}[Bodlaender et al.\ \cite{BodlaenderDDFLP16}]\label{thm:approxtw}
	Given an $n$-vertex graph $G$ and a positive integer $k$, one can  either output
	a tree decomposition of $G$ with width at most $5k+4$, or correctly answer that 
	the treewidth of $G$ is larger than $k$, in time $2^{\mathcal{O}(k)} n$.
\end{theorem}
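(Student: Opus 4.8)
The plan is to reconstruct the recursive $5$-approximation algorithm of Bodlaender, Drange, Dregi, Fomin, Lokshtanov and Pilipczuk~\cite{BodlaenderDDFLP16}. I would phrase it as a recursive routine acting on a pair $(G', S')$, where $S' \subseteq V(G')$ and $\abs{S'} \le 4(k+1)$, that returns either a tree decomposition of $G'$ of width at most $5k+4$ whose root bag contains $S'$, or the (correct) answer ``$\tw(G') > k$''. Since treewidth does not increase under taking subgraphs, every graph produced in the recursion has treewidth at most $k$ whenever $G$ does; hence any ``$\tw(G') > k$'' returned by a recursive call is a sound answer for the original $G$ as well. Running the routine on $(G, \emptyset)$ then proves the theorem.

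The two ingredients I would establish first are a structural lemma and a separator-finding subroutine. The lemma is the classical balanced-separator property: if $\tw(G') \le k$, then for every $S' \subseteq V(G')$ there is $X \subseteq V(G')$ with $\abs{X} \le k+1$ such that every connected component of $G' - X$ contains at most $\lceil \abs{S'}/2 \rceil$ vertices of $S'$ (proof: take an optimal tree decomposition, weight each vertex of $S'$ by $1$, and choose a ``centroid'' bag $X$ so that every component of $G'-X$ sits below a single edge of the decomposition tree at the node of $X$). The subroutine, given $(G', S')$ with $\abs{S'} \le 4(k+1)$, either returns such an $X$ together with a partition of the components of $G' - X$ into two groups $V_A, V_B$ each meeting $S'$ in at most $3(k+1)$ vertices, or reports $\tw(G') > k$: it enumerates all partitions $(A,B)$ of $S'$ with $\abs{A}, \abs{B} \le 3(k+1)$ — there are $2^{\Oh(k)}$ of them — and for each computes a minimum vertex set separating $A$ from $B$ via a single max-flow computation, succeeding as soon as one such separator has size at most $k+1$. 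I would then verify that if $\tw(G') \le k$, then greedily distributing the components of $G' - X$ (each contributing at most $\lceil \abs{S'}/2 \rceil \le 2(k+1)$ to the tally) together with the vertices of $S' \cap X$ into two balanced groups produces a partition $(A,B)$ with $\abs{A}, \abs{B} \le 3(k+1)$ for which the chosen $X$ is a valid $A$--$B$ separator; so the subroutine's ``no'' answer is justified.

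With these in hand, the recursion on $(G', S')$ runs as follows. If $\abs{V(G')} \le 5(k+1)$ (in particular if $V(G') = S'$), return the single bag $V(G')$, whose width is at most $5k+4$. Otherwise pad $S'$ by adding arbitrary vertices of $V(G') \setminus S'$ until $\abs{S'} = 4(k+1)$, then call the subroutine; on success obtain $X$ and the groups $V_A, V_B$, and recurse on $(G'[V_A \cup X], (S' \cap V_A) \cup X)$ and $(G'[V_B \cup X], (S' \cap V_B) \cup X)$. Each new separating set has size at most $3(k+1) + \abs{X} \le 4(k+1)$, so the precondition is preserved; then I would create a fresh root bag $S' \cup X$, of at most $4(k+1) + (k+1) = 5(k+1)$ vertices and hence of width at most $5k+4$, and attach the two decompositions returned by the recursive calls as its subtrees (the connectivity condition holds since $(S' \cap V_A) \cup X \subseteq S' \cup X$, and similarly for $V_B$). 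One point requiring care is that each recursive instance be strictly smaller than $(G', S')$ — equivalently that $V_A$ and $V_B$ both be nonempty — which the padding step makes possible (a size-$(k+1)$ separator balancing a set of size $4(k+1)$ must leave at least two $S'$-meeting components) and which I would secure by choosing the partition and separator appropriately, discarding degenerate ones. Correctness of the returned decomposition and the width bound then follow directly from these invariants.

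The step I expect to be the genuine obstacle is the running time. A naive analysis — at most $n$ recursive calls, each doing $2^{\Oh(k)}$ max-flow computations of cost $\Oh(k \cdot \abs{E(G')})$ — already yields $2^{\Oh(k)} n^{\Oh(1)}$, which is enough for essentially all of our algorithmic applications. Pushing this down to $2^{\Oh(k)} n$ is the technical heart of~\cite{BodlaenderDDFLP16}: one must avoid recomputing flows from scratch between recursive calls, keep the total size of the instances processed under control so that the work across the whole recursion tree telescopes to $\Oh(n)$ (up to the $2^{\Oh(k)}$ factor), and use that a graph equipped with the relevant separator structure is sparse, with only $\Oh(kn)$ edges. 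This amortized bookkeeping is where I would expect to spend most of the effort; the correctness argument and the $5k+4$ bound are comparatively routine once the invariants above are in place.
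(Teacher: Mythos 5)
This statement is an imported result: the paper cites Bodlaender et al.~\cite{BodlaenderDDFLP16} and gives no proof of its own, so there is nothing internal to compare against. Your reconstruction is the classical Robertson--Seymour-style recursion (pad $S'$ to $4(k+1)$, enumerate $2^{\Oh(k)}$ partitions, find a $(k+1)$-sized balanced separator by max-flow, recurse on the two sides with root bag $S'\cup X$), and as far as correctness and the width bound $5k+4$ go, your invariants are right: bags have at most $5(k+1)$ vertices, the precondition $\abs{S'}\le 4(k+1)$ is preserved, the greedy redistribution argument justifying the ``$\tw>k$'' answer is sound, and progress of the recursion can be secured as you indicate.

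The genuine gap is exactly the one you flag yourself: this scheme establishes $2^{\Oh(k)}n^{\Oh(1)}$ (concretely about $2^{\Oh(k)}n^2$, since the recursion tree has $\Oh(n)$ nodes each doing flow computations on subgraphs whose total size can be quadratic), not the $2^{\Oh(k)}n$ claimed in the statement. You cannot close this by ``amortized bookkeeping'' on the same recursion; the linear-time bound is the entire contribution of~\cite{BodlaenderDDFLP16}, and their algorithm is structurally different -- it combines a reduction to graphs with $\Oh(kn)$ edges, an iterative-compression-style routine that improves a given width-$\Oh(k)$ decomposition to width $\Oh(k)$ with better constants in $2^{\Oh(k)}n$ time by dynamic programming over the given decomposition, and a separate recursive scheme to obtain the initial coarse decomposition. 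So, taken literally, your proposal proves a weaker theorem than the one stated. That said, the weaker bound $2^{\Oh(k)}n^{\Oh(1)}$ is all that is ever used in this paper (every algorithm here already carries an $n^3$ factor, and the lower bounds do not depend on this theorem at all), so the gap is immaterial to the paper's results even though it is real relative to the statement as written.
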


\begin{lemma}[folklore; see Lemma 7.4 in \cite{CyganFKLMPPS15}]\label{lem:nicetd}
	Given a tree decomposition of an $n$-vertex graph $G$ of width $w$, 
	one can construct a nice tree decomposition $(T, \cB)$ of width $w$ with $\abs{V(T)}=\mathcal{O}(wn)$  in time $\mathcal{O}(k^2\cdot \max (\abs{V(T)}, \abs{V(G)}))$.
\end{lemma}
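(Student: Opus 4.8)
The plan is to convert the given tree decomposition $(T,\cB)$ into a nice one through a short sequence of local rewrites, each keeping the width equal to $w$, preserving the three axioms, and adding only few nodes. First I would, if necessary, shrink $(T,\cB)$ so that $\abs{V(T)}=\Oh(n)$: as long as some edge $tt'$ satisfies $B_t\subseteq B_{t'}$, contract it, keeping the larger bag and inheriting all incident edges. This remains a tree decomposition of $G$ of width $w$, and now, after rooting $T$ at an arbitrary node $r$, every non-root node $t$ satisfies $B_t\setminus B_{p(t)}\neq\emptyset$, where $p(t)$ is the parent of $t$. Since for every vertex $v$ the nodes whose bag contains $v$ form a subtree, at most one of them --- the topmost --- can have $v$ absent from its parent's bag; hence the map sending each non-root $t$ to an arbitrary vertex of $B_t\setminus B_{p(t)}$ is injective, and $\abs{V(T)}\leq n+1$.

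Next I would make three rewriting passes over the rooted tree. \textbf{(i)} \emph{Binarisation:} replace each node with $d\geq 2$ children by a path of $d-1$ new join nodes, all carrying the bag of the original node, and hang the $d$ subtrees off this path (so that the last new node gets two of them); this adds $\Oh(n)$ nodes. \textbf{(ii)} \emph{Ends:} below every leaf $t$ with $B_t\neq\emptyset$ append a path whose bottom node has an empty bag and which introduces the vertices of $B_t$ one at a time, and, if $B_r\neq\emptyset$, append above $r$ a path of forget nodes ending at a fresh empty-bag root. \textbf{(iii)} \emph{Edge refinement:} subdivide every remaining edge $(p,c)$, with $p$ the parent, by a path that, read from $c$ upward to $p$, first forgets the vertices of $B_c\setminus B_p$ one at a time, shrinking the bag to $B_c\cap B_p$, and then introduces the vertices of $B_p\setminus B_c$ one at a time, growing it back to $B_p$; every intermediate bag is contained in $B_c$ or in $B_p$, so its size is at most $w+1$, and the path has length at most $2(w+1)$.

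I would then verify correctness and the bounds. Each bag created above is a subset of an original bag, so the width stays $w$; and for every vertex $v$ the set of nodes containing $v$ stays connected --- forgetting $v$ partway up an edge-path and re-introducing it later along the same path only trims the relevant subtree, it does not split it --- so all three axioms still hold. By construction every node now has the shape required of a nice tree decomposition: leaves and the root have empty bags, a two-child node is a join node whose two children carry its bag, and a one-child node differs from its child by exactly one vertex, hence is an introduce or a forget node. For the count: after the clean-up there are $\Oh(n)$ original nodes and $\Oh(n)$ edges, pass \textbf{(i)} adds $\Oh(n)$ join nodes, pass \textbf{(ii)} adds $\Oh(w)$ nodes per leaf and at the root, and pass \textbf{(iii)} adds $\Oh(w)$ nodes per edge, for a total of $\abs{V(T)}=\Oh(wn)$; and each elementary operation does $\Oh(w)$ work on bags of size at most $w+1$, yielding the running time $\Oh(w^2\cdot\max(\abs{V(T)},\abs{V(G)}))$.

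The whole argument is bookkeeping rather than a genuine difficulty, so the only thing needing real care is the routing: passes \textbf{(i)} and \textbf{(iii)} must be coordinated so that each edge-path begins and ends \emph{exactly} at the endpoint bags $B_c$ and $B_p$ --- otherwise a newly created join node need not satisfy $B_t=B_{t_1}=B_{t_2}$, and a one-child node need not be a clean introduce or forget node --- and one must confirm that the forget-then-introduce detours of pass \textbf{(iii)} never break the connectivity axiom, which is exactly the subtree observation used in the first paragraph.
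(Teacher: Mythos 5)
Your construction is correct and is exactly the standard argument behind this folklore lemma (the paper gives no proof of its own, deferring to Lemma 7.4 of Cygan et al., which proceeds by the same steps: prune redundant bags to get $\Oh(n)$ nodes, binarise high-degree nodes into chains of join nodes, pad the root and leaves with empty bags, and subdivide each edge into a forget-then-introduce path of length $\Oh(w)$). The coordination issue you flag at the end — making each edge-path terminate at precisely $B_p$ so that join nodes and one-child nodes come out well-typed — is the only genuinely delicate point, and you handle it correctly.
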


\subsection{Boundaried graphs}

For a graph $G$ and $X\subseteq V(G)$, the pair $(G, X)$ is called a \emph{boundaried graph}.
Two boundaried graphs $(G,X)$ and $(H,X)$ are said to be \emph{compatible} 
if $V(G-X)\cap V(H-X)=\emptyset$ and $G[X]=H[X]$.
For two compatible boundaried graphs $(G,X)$ and $(H,X)$, 
the \emph{sum} of two graphs is the graph obtained from the disjoint union of $G$ and $H$ 
by identifying each vertex of $X$ in $G$ with the same vertex in $H$ and 
removing an edge from multiple edges that appear in $X$.
We denote the resulting graph by $(G,X)\oplus (H,X)$.
See Figure~\ref{fig:sumgraphs} for an example.

\begin{figure}
	\centering
	\begin{tikzpicture}[scale=0.7]
	\tikzstyle{w}=[circle,draw,fill=black,inner sep=0pt,minimum width=4pt]

	\draw (1, 1) node [w] (v1) {};
	\draw (2, 1) node [w] (v2) {};
	\draw (3, 1) node [w] (v3) {};
	\draw (1, 2.5) node [w] (v4) {};
	\draw (2, 2) node [w] (v5) {};
	\draw (2, 3) node [w] (v6) {};
	
	\draw(v1)--(v2)--(v3);
	\draw(v1)--(v4)--(v5)--(v1);
	\draw(v4)--(v6)--(v5)--(v2);
	\draw(v5)--(v3);
	
	\draw (1, -1) node [w] (w1) {};
	\draw (2, -1) node [w] (w2) {};
	\draw (3, -1) node [w] (w3) {};
	\draw (1, -2) node [w] (w4) {};
	\draw (2, -2) node [w] (w5) {};
	\draw (3, -2) node [w] (w6) {};
	\draw (2.5, -2.6) node [w] (w7) {};
	
	\draw(w1)--(w2)--(w3);
	\draw(w4)--(w1)--(w5)--(w2)--(w6)--(w3);
	\draw(w4)--(w5)--(w6)--(w7)--(w5);
	
	\draw[rounded corners] (0.7, 1)--(0.7,1.3)--(3.3,1.3)--(3.3,0.7)--(0.7, 0.7)--(0.7,1);
	\draw[rounded corners] (0.7, -1)--(0.7,-1.3)--(3.3,-1.3)--(3.3,-0.7)--(0.7, -0.7)--(0.7,-1);

	\node at (-1, 2) {$(G,X)$};
	\node at (-1, -2) {$(H,X)$};

	\draw (8+1, 1-1) node [w] (x1) {};
	\draw (8+2, 1-1) node [w] (x2) {};
	\draw (8+3, 1-1) node [w] (x3) {};
	\draw (8+1, 2.5-1) node [w] (x4) {};
	\draw (8+2, 2-1) node [w] (x5) {};
	\draw (8+2, 3-1) node [w] (x6) {};
	
	\draw(x1)--(x2)--(x3);
	\draw(x1)--(x4)--(x5)--(x1);
	\draw(x4)--(x6)--(x5)--(x2);
	\draw(x5)--(x3);
	
	\draw (8+1, -1+1) node [w] (y1) {};
	\draw (8+2, -1+1) node [w] (y2) {};
	\draw (8+3, -1+1) node [w] (y3) {};
	\draw (8+1, -2+1) node [w] (y4) {};
	\draw (8+2, -2+1) node [w] (y5) {};
	\draw (8+3, -2+1) node [w] (y6) {};
	\draw (8+2.5, -2.6+1) node [w] (y7) {};
	
	\draw(y1)--(y2)--(y3);
	\draw(y4)--(y1)--(y5)--(y2)--(y6)--(y3);
	\draw(y4)--(y5)--(y6)--(y7)--(y5);
	
	\node at (11-1, -2.6) {$(G,X)\oplus (H,X)$};
	
	\draw[rounded corners] (8+0.7, 1-1)--(8+0.7,1.3-1)--(8+3.3,1.3-1)--(8+3.3,0.7-1)--(8+0.7, 0.7-1)--(8+0.7,1-1);

	\end{tikzpicture}     \caption{An example of the sum $(G,X)\oplus (H,X)$.}\label{fig:sumgraphs}
\end{figure}
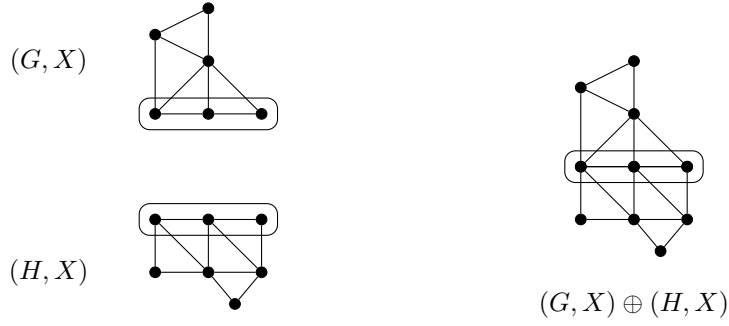

\section{Superexponential lower bounds parameterized by treewidth}\label{sec:lower-bounds}

Our reductions for \ssfvs, \ssoct, and \shortect, in~\cref{subsec:gadgets}, will have the same skeleton.
In order to avoid repeating the same arguments, we show in \cref{subsec:generic} the lower bound of \cref{hardness:main} for a meta-problem.
We prove the lower bound for \vmwc in \cref{subsec:vmwc-lower}, and the lower bounds for \mwc and \sfes in \cref{subsec:mwc-lower}.

\subsection{Lower bound for a generic vertex-deletion problem}\label{subsec:generic}

The scope of application of \cref{hardness:main} is any \emph{hereditary} vertex-deletion problem $\Pi$; that is, if $G-X$ satisfies a problem instance $P(\Pi)$, then $G-X'$ also satifies $P(\Pi)$ for every $X' \supseteq X$.
The main part of the input is a graph $G$ and a non-negative integer $k'$.
In addition, we allow any sort of labelings of $G$, be it subsets of vertices $S_1, S_2, \ldots \subseteq V(G)$, of edges $E_1, E_2, \ldots, \subseteq E(G)$, pairs of vertices $P_1, P_2, \ldots \subseteq {V(G) \choose 2}$, etc. The goal is to find a subset $X \subseteq V(G)$ of $k'$ vertices such that a property $P(\Pi)$, dependent on $\Pi$, is satisfied on $G - X$ with its induced labeling.
A subset of vertices $A \subseteq V(G)$ is a \emph{$\Pi$-obstruction} if $G[A]$ does not satisfy $P(\Pi)$.
A set $X \subseteq V(G)$ is \emph{$\Pi$-legal} if $G-X$ satisfies $P(\Pi)$ (in particular, solutions are $\Pi$-legal sets of size $k'$).
As $P(\Pi)$ is assumed hereditary, a $\Pi$-legal set intersects every \emph{$\Pi$-obstruction}.
Finally a \emph{$\Pi$-legal $s$-deletion within $Y$} is a set $X \subseteq Y$ of size at most $s$ such that $G[Y \setminus X]$ satisfies $P(\Pi)$.

\paragraph*{Common base}
The meta-result of \cref{hardness:generic} concerns hereditary vertex-deletion problems admitting four types of gadgets.
These gadgets, which will eventually depend on $\Pi$, are attached to a common problem-independent base.
We first describe the common base.
$H_{\bullet}$ is a set of $2k^2$ vertices, for some implicit positive integer $k$.
We denote these vertices by $v_{\bullet}(i,j,z)$ for each $i \in [k]$, $j \in [k]$, and $z \in [2]$.
We imagine the vertices of $H_{\bullet}$ being displayed in a $k$-by-$k$ grid with $v_{\bullet}(i,j,1)$ and $v_{\bullet}(i,j,2)$ side by side in the $i$-th row and $j$-th column.

The \emph{base} consists of copies of $H_{\bullet}$ that we denote by $H_1, H_2, \ldots$ and typically index by $p$.
The vertices of $H_p$ are denoted by $v_p(i,j,z)$.
The vertices $v_p(i,j,1)$ and $v_p(i,j,2)$ are said to be \emph{homologous}.
We set $C_{p,j} := \bigcup_{i \in [k], z \in [2]} \{v_p(i,j,z)\}$ and refer to it as the \emph{$j$-th column} of $H_p$.
Similarly $R_{p,i} := \bigcup_{j \in [k], z \in [2]} \{v_p(i,j,z)\}$ is called the \emph{$i$-th row} of $H_p$.
We can attach to the base a list of gadgets as detailed now.
The vertices added to the base are called \emph{additional} or \emph{new}.

\paragraph*{Column selector gadget}
A~\emph{$k$-column selector} gadget has the following specification.
Its vertex set is a single column $C_{p,j}$ plus $\Oh(k)$ additional vertices $\mathcal C_{\text{sel}}(p,j)$.
The only restriction on the edge set of the gadget is that homologous vertices should remain non-adjacent.
Other than that, any edge can be added within $C_{p,j}$.
However the open neighborhood of $\mathcal C_{\text{sel}}(p,j)$ has to be contained in $C_{p,j}$.

A problem $\Pi$ \emph{admits a column selector gadget} if, for every positive integer $k$, one can build in time $k^{\Oh(1)}$ a $k$-column selector such that the only $\Pi$-legal $(2k-2)$-deletions within $C_{p,j} \cup \mathcal C_{\text{sel}}(p,j)$ are one of the $k$ sets: $C_{p,j} \setminus \{v_p(1,j,1),v_p(1,j,2)\}, C_{p,j} \setminus \{v_p(2,j,1),v_p(2,j,2)\}, \ldots, C_{p,j} \setminus \{v_p(k,j,1),v_p(k,j,2)\}$.

\paragraph*{Row selector gadget}
In order to keep small balanced separators, our~\emph{$k$-row selector} gadget is quite different from the $k$-column selector.
Its vertex set is a single row $R_{p,i}$ plus $\Oh(1)$ additional vertices $\mathcal R_{\text{sel}}(p,i)$.
Furthermore \emph{no} edge can be added within $R_{p,i}$.
Again the open neighborhood of $\mathcal R_{\text{sel}}(p,i)$ has to be contained in $R_{p,i}$.

A problem $\Pi$ \emph{admits a row selector gadget} if, for every positive integer $k$, one can build in time $k^{\Oh(1)}$ a $k$-row selector such that, for every $j \neq j' \in [k]$, $\mathcal R_{\text{sel}}(p,i) \cup \{v_p(i,j,1),v_p(i,j,2),v_p(i,j',1),v_p(i,j',2)\}$ is a $\Pi$-obstruction.

\paragraph*{Edge gadget}
The vertex set of an \emph{edge gadget} is of the form $\{v_p(i,j,1),v_p(i,j,2),v_p(i',j',1),v_p(i',j',2)\} \cup \mathcal E_p(i,j,i',j')$ where $i \neq i' \in [k]$, $j \neq j' \in [k]$, and $\mathcal E_p(i,j,i',j')$ is a set of $\Oh(k)$ vertices\footnote{$\Oh(1)$ vertices will actually suffice for all the gadgets of~\cref{subsec:gadgets}.}.
There is no restriction on the edge set.
As usual the open neighborhood of $\mathcal E_p(i,j,i',j')$ has to be contained in $\{v_p(i,j,1),v_p(i,j,2),v_p(i',j',1),v_p(i',j',2)\}$.

A problem $\Pi$ \emph{admits an edge gadget} if one can build in time $k^{\Oh(1)}$ an edge gadget such that $\mathcal E_p(i,j,i',j') \cup \{v_p(i,j,1),v_p(i,j,2),v_p(i',j',1),v_p(i',j',2)\}$ is a $\Pi$-obstruction.

\paragraph*{Propagation gadget}
The vertex set of a \emph{propagation gadget} is of the form $H_p \cup H_{p+1} \cup \mathcal P_p$ where $\mathcal P_p$ is a set of $k^{\Oh(1)}$ vertices.
There is a subset $\mathcal P'_p \subseteq \mathcal P_p$ of size $\Oh(k)$ such that each vertex of $\mathcal P_p \setminus \mathcal P'_p$ has at most one neighbor in $H_p \cup H_{p+1}$ and the rest of its neighborhood in $\mathcal P'_p$.
This fairly technical condition aims to give some extra flexibility while keeping sufficiently small separators between $H_p$ and $H_{p+1}$.
In particular, if $\mathcal P_p$ is itself of size $\Oh(k)$, then the condition is trivially met with $\mathcal P'_p = \mathcal P_p$ .
The propagation gadget has no edge with both endpoints in $H_p \cup H_{p+1}$.
Everything else is permitted, but the open neighborhood of $\mathcal P_p$ has to be contained in $H_p \cup H_{p+1}$.

A problem $\Pi$ \emph{admits a propagation gadget} if one can build in time $k^{\Oh(1)}$ a propagation gadget such that for every $i,j \neq j' \in [k]$, $\mathcal P_p \cup \{v_p(i,j,1),v_p(i,j,2),v_{p+1}(i,j',1),v_{p+1}(i,j',2)\}$ is a $\Pi$-obstruction.

\paragraph*{Intended-solution property}

A hereditary vertex-deletion problem $\Pi$ and a description of the four above gadgets for $\Pi$ have the \emph{intended-solution property} if the following holds.
On any graph $G$ built by adding to the base $H_1 \cup \ldots \cup H_p \cup \ldots H_m$ at most one edge gadget in each $H_p$, one propagation gadget between \emph{consecutive} pairs $H_p$ and $H_{p+1}$, and some column and row selector gadgets, every deletion set $\bigcup_{p \in [m], i \in [k],j \in [k] \setminus \{j_i\}, z \in [2]} \{v_p(i,j,z)\}$ (with $\{j_1, j_2, \ldots, j_k\} = [k]$) intersecting every edge gadget is $\Pi$-legal. 

We can now state the lower bound for the generic hereditary vertex-deletion problems.
\begin{theorem}\label{hardness:generic}
  Unless the ETH fails, every vertex-deletion problem $\Pi$ admitting a column selector, a row selector, an edge, and a propagation gadget, satisfying the intended-solution property, cannot be solved in time $2^{o(\pw \log \pw)}n^{\Oh(1)}$ on $n$-vertex graphs with pathwidth $\pw$.
\end{theorem}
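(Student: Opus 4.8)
The plan is to reduce from \textsc{$k \times k$-Permutation Independent Set}, which by~\cite{Lokshtanov18} admits no $2^{o(k\log k)}$-time algorithm unless the ETH fails. Given an instance $H$ with $V(H)=[k]^2$, let $F$ be the set of edges $\{(i,j),(i',j')\}$ of $H$ with $i\neq i'$ and $j\neq j'$, and put $m:=\max(|F|,1)$, which is $k^{\Oh(1)}$. Build $G$ from the base $H_1\cup\cdots\cup H_m$ as follows: attach in every copy $H_p$ all $k$ column-selector gadgets and all $k$ row-selector gadgets; attach one propagation gadget between each consecutive pair $H_p,H_{p+1}$; and, fixing an enumeration $F=\{e_1,\ldots,e_{|F|}\}$, attach in $H_p$ the edge gadget for $e_p$ whenever $p\le|F|$. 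All these gadgets exist and are constructible in time $k^{\Oh(1)}$ by hypothesis. Set the target $k':=m(2k^2-2k)$, which is exactly the size of an intended deletion set $\bigcup_{p\in[m],\,i\in[k],\,j\in[k]\setminus\{j_i\},\,z\in[2]}\{v_p(i,j,z)\}$, i.e.\ the set keeping in each copy one cell $(i,j_i)$ per row, with $\{j_1,\ldots,j_k\}=[k]$.

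\textbf{Correctness.} For the forward direction, from a permutation independent set of $H$ consisting of the cells $(i,\sigma(i))$, take the intended deletion set $X$ with $j_i:=\sigma(i)$. For each $e_p=\{(a,b),(a',b')\}\in F$, independence forbids both $b=\sigma(a)$ and $b'=\sigma(a')$, so $X$ deletes $v_p(a,b,1)$ or $v_p(a',b',1)$; hence $X$ meets every edge gadget, and the intended-solution property guarantees $X$ is $\Pi$-legal, with $|X|=k'$. For the backward direction --- the heart of the proof --- suppose $X$ is $\Pi$-legal with $|X|\le k'$. Since $\Pi$ is hereditary, the restriction of $X$ to the vertex set of each column-selector gadget is a $\Pi$-legal deletion within that gadget, so by the column-selector characterisation it has size at least $2k-2$, with equality forcing it to be one of the $k$ ``clean'' choices (one homologous pair kept in that column, no vertex of $\mathcal C_{\text{sel}}(p,j)$ touched). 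As the $mk$ column-selector vertex sets are pairwise disjoint, this already accounts for $mk(2k-2)=k'$ deletions; hence $|X|=k'$, every restriction is clean, and $X$ touches no row-selector, edge, or propagation vertex. Thus in each $H_p$, $X$ keeps exactly one cell per column; the row-selector obstructions then force at most one kept cell per row of $H_p$, so the kept cells of $H_p$ are $\{(i,\sigma_p(i)):i\in[k]\}$ for a permutation $\sigma_p$. The propagation-gadget obstruction between $H_p$ and $H_{p+1}$, which $X$ can only meet through base vertices since $X\cap\mathcal P_p=\emptyset$, forces $\sigma_p=\sigma_{p+1}$, so all $\sigma_p$ equal a single $\sigma$; and the edge-gadget obstruction in $H_p$ forbids keeping both endpoints of $e_p$, so $\{(i,\sigma(i)):i\in[k]\}$ avoids every edge of $F$. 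Since $\sigma$ is a permutation, any edge of $H$ among these cells would lie in $F$; hence $\{(i,\sigma(i)):i\in[k]\}$ is an independent set meeting each row and column exactly once, i.e.\ a solution for $H$.

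\textbf{Pathwidth and conclusion.} It remains to show $\pw(G)=\Oh(k)$; since also $|V(G)|=k^{\Oh(1)}$, a hypothetical $2^{o(\pw\log\pw)}n^{\Oh(1)}$-time algorithm for $\Pi$ would solve \textsc{$k\times k$-Permutation Independent Set} in time $2^{o(k\log k)}k^{\Oh(1)}=2^{o(k\log k)}$, contradicting the ETH. I would exhibit a path decomposition sweeping over $H_1,\ldots,H_m$ and, inside each $H_p$, over its columns $C_{p,1},\ldots,C_{p,k}$. The bag handling column $C_{p,j}$ consists of $C_{p,j}\cup\mathcal C_{\text{sel}}(p,j)$ (size $\Oh(k)$) together with the vertices of $H_p$ that must persist across all its columns: $\bigcup_{i\in[k]}\mathcal R_{\text{sel}}(p,i)$, the $\Oh(k)$ vertices $\mathcal E_p$ (plus the $\Oh(1)$ base vertices of the far cell of $e_p$), and the $\Oh(k)$-size cores $\mathcal P'_{p-1},\mathcal P'_p$ of the two incident propagation gadgets. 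The specifications rule out every remaining cross-column edge inside $H_p$: no edge lies within a row $R_{p,i}$, a column selector only adds edges within its own column, and each vertex of $\mathcal P_p\setminus\mathcal P'_p$ has just one neighbour outside $\mathcal P'_p$, so it can be introduced and immediately forgotten in a single extra slot while the sweeping bag sits on that neighbour. Hence every bag has size $\Oh(k)$.

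\textbf{Main obstacle.} The delicate point is the tightness argument in the backward direction: one must set $k'$ so that the column-selector gadgets are forced to consume the entire budget \emph{cleanly}, leaving no spare deletion for subverting the row, edge, or propagation gadgets; and, relatedly, one must check that gluing the gadgets onto the shared base never creates an edge violating a gadget's internal specification (which is exactly what the ``open neighbourhood $\subseteq\cdots$'' conditions, together with the edge-free rows of the row selector, are there to prevent) nor inflates the pathwidth past $\Oh(k)$.
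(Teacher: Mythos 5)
Your proposal is correct and follows essentially the same route as the paper's proof: the same reduction from \textsc{$k\times k$-Permutation Independent Set} with one copy $H_p$ per edge, the same budget $k'=2(k-1)km$ forcing every column selector to be resolved ``cleanly'' and hence leaving no spare deletions for the other gadgets, and the same column-sweeping path decomposition yielding pathwidth $\Oh(k)$. Your restriction to the edge set $F$ of edges with distinct rows and columns is a small extra care point (the edge-gadget specification requires $i\neq i'$ and $j\neq j'$, and a permutation solution automatically avoids row/column-internal edges) that the paper's write-up glosses over, but it does not change the argument.
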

\begin{proof}
  From any instance $H$ of \textsc{$k \times k$-Permutation Independent Set}, we build an equivalent $\Pi$-instance $(G,k'=k^{\Oh(1)})$ of size $k^{\Oh(1)}$ with pathwidth in $\Oh(k)$.
  Since under the ETH there is no algorithm solving \textsc{$k \times k$-Permutation Independent Set} in time $2^{o(k \log k)}k^{\Oh(1)}$, we derive the claimed lower bound.
  
  \subparagraph*{Construction.}
  We number the edges in $E(H)$ as $e_1, \ldots, e_m$. 
  We start with a base consisting of $m$ copies of $H_\bullet$, labelled $H_p$ for $p \in [m]$ (see description of the common base).
  The vertices $v_p(i,j,1)$ and $v_p(i,j,2)$ encode the vertex $(i,j) \in V(H)$; recall that we call such a pair \emph{homologous}.
  We attach to each column $C_{p,j}$, for $p \in [m]$ and $j \in [k]$, a column selector gadget (for $\Pi$), with additional vertices $\mathcal C_{\text{sel}}(p,j)$.
  For each pair $p \in [m], i \in [k]$, we add a row selector gadget to $R_{p,i}$, with additional vertices $\mathcal R_{\text{sel}}(p,i)$.
  
  For each edge $e_p=(i_p,j_p)(i'_p,j'_p) \in E(H)$ ($p \in [m]$), we attach an edge gadget, with additional vertices $\mathcal E_p(i_p,j_p,i'_p,j'_p)$, to $\{v_p(i_p,j_p,1),v_p(i_p,j_p,2),v_p(i'_p,j'_p,1),v_p(i'_p,j'_p,2)\}$.
  For each $p \in [m-1]$, we add a propagation gadget between $H_p$ and $H_{p+1}$, with additional vertices $\mathcal P_p$.
  This finishes the construction of~$G$.
  We set $k' := 2(k-1)km$.
  
  \subparagraph*{Correctness.}
We first assume that there is a solution $I$ to \textsc{$k \times k$-Permutation Independent Set}.
That is, $I$ is an independent set of $H$ with exactly one vertex per column and per row.
Say the vertices of $I$ are $(1,j_1), (2,j_2), \ldots (k,j_k)$ with $\{j_1,j_2,\ldots,j_k\}=[k]$. Then
\[X := \bigcup_{p\in[m]} H_p \setminus \cup_{i \in [k]} \{v_p(i,j_i,1),v_p(i,j_i,2)\}\] 
is a solution to $\Pi$.
Indeed it is $\Pi$-legal since it intersects every edge gadget (if not, the edge gadget would be between two vertices of $I$, a contradiction) and $\Pi$ satisfies the intended-solution property, by assumption.
Furthermore $|X|=2mk(k-1)=k'$.

We now assume that the $\Pi$-instance $(G,k')$ admits a solution (of size $k'$), say $X$.
The graph $G$ has $km$ disjoint $\Pi$-obstructions $C_{p,j} \cup \mathcal C_{\text{sel}}(p,j)$.
For each of these sets, at least $s := 2(k-1)$ vertices must be deleted, by the specification of the column sector gadget.
Since globally only $k'=kms$ vertices can be deleted, $X$ intersects each $C_{p,j} \cup \mathcal C_{\text{sel}}(p,j)$ at a set $C_{p,j} \setminus \{v_p(i_{j,p},j,1),v_p(i_{j,p},j,2)\}$ for some $i_{j,p} \in [k]$.
Moreover, the $k$ row selector gadgets attached to each $H_p$ enforce that $\{i_{1,p},i_{2,p},\ldots,i_{k,p}\}=[k]$, and the propagation gadget $\mathcal P_p$ enforces that $i_{j,p}=i_{j,p+1}$ for every $j \in [k]$.
This implies that $i_{j,1}=i_{j,2}= \ldots = i_{j,m}$ for every $j \in [k]$, and we simply denote this common value by $i_j$.
We claim that $\{(i_1,1), (i_2,2), \ldots, (i_k,k)\}$ is a solution to the \textsc{$k \times k$-Permutation Independent Set} instance.
We have already argued that $\{i_1,i_2,\ldots,i_k\}=[k]$.
Finally there cannot be an edge $e_p = (i_j,j)(i_{j'},j') \in E(H)$ since then the $\Pi$-obstruction $\mathcal E_p(i_j,j,i_{j'},j') \cup \{v_p(i_j,j,1),v_p(i_j,j,2),v_p(i_{j'},j',1),v_p(i_{j'},j',2)\}$ would be disjoint from $X$.

\subparagraph*{Pathwidth in $\Oh(k)$.}
Let $\mathcal P'_p$ be the $\Oh(k)$ vertices of $\mathcal P_p$ with strictly more than one neighbor in $H_p \cup H_{p+1}$.
For every $p \in [m-1]$, we set $Y_p := \mathcal P'_p \cup \mathcal E_p(i_p,j_p,i'_p,j'_p) \cup C_{p,j_p} \cup \mathcal C_{\text{sel}}(p,j_p) \cup C_{p,j'_p} \cup \mathcal C_{\text{sel}}(p,j'_p) \cup \bigcup_{i \in [k]} \mathcal R_{\text{sel}}(p,i)$, and we observe that $|Y_p|=\Oh(k)$ (this is where it is important that each $\mathcal R_{\text{sel}}(p,i)$ has constant size).
For each $p \in [m]$ and $j \in [k-2]$, let $Z_{p,j}$ be $C_{p,j^*} \cup C_{\text{sel}}(p,j^*)$ where $j^*$ is the $j$-th index, by increasing value, in $[k] \setminus \{j_p,j'_p\}$.
Again we notice that $|Z_{p,j}|=\Oh(k)$.

Here is a path-decomposition of $G$ of width $\Oh(k)$ in case every $\mathcal P_p \setminus \mathcal P'_p$ is empty: $Y_1, Y_1 \cup Z_{1,1}, Y_1 \cup Z_{1,2}, \ldots, Y_1 \cup Z_{1,k-2}, Y_1 \cup Y_2, Y_1 \cup Y_2 \cup Z_{2,1}, Y_1 \cup Y_2 \cup Z_{2,2}, \ldots, Y_1 \cup Y_2 \cup Z_{2,k-2}, Y_2 \cup Y_3, \ldots, Y_{p-2} \cup Y_{p-1}, Y_{p-2} \cup Y_{p-1} \cup Z_{p-1,1}, Y_{p-2} \cup Y_{p-1} \cup Z_{p-1,2}, \ldots, Y_{p-2} \cup Y_{p-1} \cup Z_{p-1,k-2}, Y_{p-1}, Y_{p-1} \cup Z_{p,1}, Y_{p-1} \cup Z_{p,2}, \ldots, Y_{p-1} \cup Z_{p,k-2}$.
Indeed the maximum bag size is $\Oh(k)$ and each edge of $G$ appears in at least one bag.
Two crucial properties used in this path-decomposition are that (1) the removal of $\mathcal P'_p \cup \mathcal P'_{p+1}$, so in particular of $Y_p \cup Y_{p+1}$, disconnects $H_{p+1}$ from the rest of $G$, and (2) there is no edge between $Z_{p,j}$ and $Z_{p,j'}$ for $j \neq j' \in [k-2]$ and $p \in [m]$.

In the general case, a path-decomposition of width $\Oh(k)$ for $G$ is obtained from the previous decomposition by observing the following rule.
Each time a vertex of $H_p$ appears in a bag for the first time, we introduce and immediately remove each of its neighbors in $\mathcal P_p \setminus \mathcal P'_p$ one after the other.
\end{proof}

\subsection{Designing ad hoc gadgets}\label{subsec:gadgets}

We now build specific gadgets for \sfvs, \soct, and \ect.
For these problems, we always use $S$ to denote the prescribed subset of vertices through which no cycle, no odd cycle, or no even cycle should go, respectively.

\subsubsection{Column selector gadgets}\label{sec:column-gadgets}

We begin with the column selector gadget $\mathcal G_1(\mathcal C)$ used for \ssfvs and \ssoct, followed by the gadget $\mathcal G_2(\mathcal C)$ used for \shortect.
The column selector gadget $\mathcal G_1(\mathcal C)$ attached to a column $C_{p,j}$ is defined as follows.
It comprises $3k$ additional vertices.
These $3k$ vertices are all added to $S$, and they form an independent set.
Each of the first $k$~vertices, $d_{p,j}(1,1), \ldots, d_{p,j}(k,1) \in S$, are adjacent to all vertices in $\bigcup_{i \in [k]} \{v_p(i,j,1)\}$, so these vertices induce a biclique.
The next $k$~vertices, $d_{p,j}(1,2), \ldots, d_{p,j}(k,2) \in S$, also twins, are adjacent to all vertices in $\bigcup_{i \in [k]} \{v_p(i,j,2)\}$.
We add $d_{p,j}(1), \ldots, d_{p,j}(i), \ldots,$ $d_{p,j}(k)$ and, for each $i \in [k]$, we link $d_{p,j}(i)$ to all the vertices in $\{v_p(i,j,1)\} \cup \bigcup_{i' \in [k]\setminus \{i\}} \{v_p(i',j,2)\}$.
Finally we make every distinct pair $v_p(i,j,z), v_p(i',j,z')$ adjacent, except if $i=i'$.
See~\cref{fig:column-selector} for an illustration.

\begin{figure}[h!]
    \centering
    \resizebox{250pt}{!}{
    \begin{tikzpicture}
      \def\k{3}
      \def\t{2.3}
      \def\s{2}
      \foreach \i in {1,...,\k}{
        \foreach \z in {1,2}{
          \node[draw,circle,inner sep=0.01cm] (v\i\z) at (\z * \s,\i * \t) {\footnotesize{$v_p(\i,j,\z)$}} ;
          \node[draw,circle,double,double distance=1.35pt,inner sep=0.01cm] (d\i\z) at (5 * \z * \s - 6 * \s,\i * \t) {\footnotesize{$d_{p,j}(\i,\z)$}} ; 
        }
        \node[draw,rectangle,rounded corners,thick,fit=(v\i1)(v\i2)] (v\i) {} ;
      }
      \draw[very thick, blue] (v1) -- (v2) -- (v3) ;
      \draw[very thick, blue] (v1.west) to [bend left=30] (v3.west) ;
      \foreach \i/\j in {1/-1,2/3,3/7}{ 
        \node[draw,circle,double,double distance=1.35pt] (d\i) at (\j,\k * \t + \t) {\footnotesize{$d_{p,j}(\i)$}} ;
      }
      \foreach \z in {1,2}{
        \foreach \i in {1,...,\k}{
          \foreach \j in {1,...,\k}{
            \draw (v\i\z) to [bend left=10] (d\j\z) ;
          }
        }
      }
      \draw[thick, red] (d1) to [bend right=10] (v11) ;
      \draw (d1) to [bend right=14] (v22) ;
      \draw (d1) to [bend left=10] (v32) ;

      \draw[thick, red] (d2) to [bend left=8] (v21) ;
      \draw (d2) to [bend right=13] (v12) ;
      \draw (d2) -- (v32) ;

      \draw[thick, red] (d3) to [bend right=10] (v31) ;
      \draw (d3) to [bend left=8] (v12) ;
      \draw (d3) -- (v22) ;
    \end{tikzpicture}
    }
    \caption{The column selector gadget $\mathcal G_1(\mathcal C)$. Doubly-circled vertices are in $S$. Blue edges linking boxes denote bicliques between the two surrounded vertex sets. The gadget $\mathcal G_2(\mathcal C)$ is obtained by subdividing each red edge once, and adding a false twin to $d_{p,j}(k,1)$ (or equivalently, any $d_{p,j}(i,1)$) and a false twin to $d_{p,j}(k,2)$.}
    \label{fig:column-selector}
\end{figure}
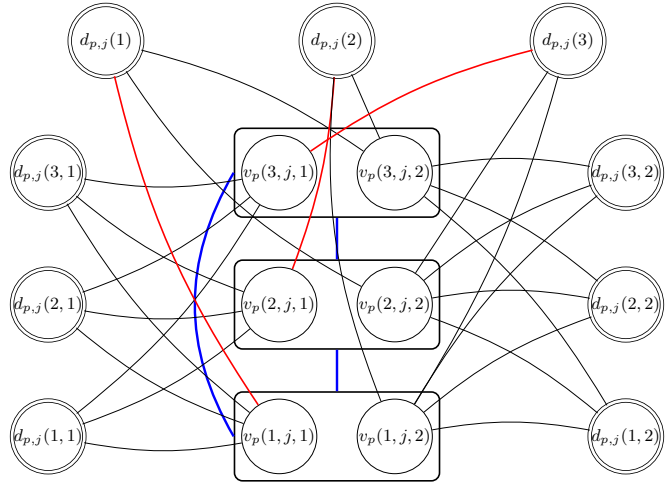

We obtain the column selector gadget $\mathcal G_2(\mathcal C)$ from $\mathcal G_1(\mathcal C)$ by adding, for each $z \in [2]$, a vertex $d_{p,j}(k+1,z)$ adjacent to all vertices in $\bigcup_{i \in [k]} \{v_p(i,j,z)\}$, and by subdividing each edge $d_{p,j}(i)v_p(i,j,1)$ once.

\begin{lemma}\label{lem:column-gadgets}
$\mathcal G_1(\mathcal C)$ is a column selector gadget for \sfvs and \soct, and $\mathcal G_2(\mathcal C)$ is a column selector gadget for \ect.
\end{lemma}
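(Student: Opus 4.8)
The plan is to verify, for each of the two gadgets, both directions of the column-selector specification: (1) each of the $k$ candidate sets $C_{p,j}\setminus\{v_p(i,j,1),v_p(i,j,2)\}$ is a $\Pi$-legal $(2k-2)$-deletion within $C_{p,j}\cup\mathcal C_{\text{sel}}(p,j)$, and (2) no other $(2k-2)$-deletion within this set is $\Pi$-legal. Throughout, the relevant target property $P(\Pi)$ is ``no ($S$-traversing) cycle / no ($S$-traversing) odd cycle / no even cycle'' on the induced labelled graph, and I would use the fact that all the $d$-vertices are in $S$ and form an independent set, so any obstruction inside the gadget must be a cycle passing through at least one $d$-vertex.

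For the ``legality'' direction of $\mathcal G_1(\mathcal C)$: after deleting $C_{p,j}\setminus\{v_p(i,j,1),v_p(i,j,2)\}$, only two homologous $v$-vertices remain in the column, and they are non-adjacent. I would check that the remaining graph induced on $C_{p,j}\cup\mathcal C_{\text{sel}}(p,j)$ has no cycle at all: each remaining $d$-vertex is adjacent to at most one of $v_p(i,j,1),v_p(i,j,2)$ (the biclique vertices $d_{p,j}(\cdot,z)$ hit only one side; $d_{p,j}(i)$ hits only $v_p(i,j,1)$; every $d_{p,j}(i')$ with $i'\ne i$ hits only $v_p(i',j,2)$ which is gone), so all $d$-vertices become leaves or isolated, and the two surviving $v$-vertices are non-adjacent — hence the induced graph is a forest, so there is certainly no ($S$-traversing) cycle or odd cycle. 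This handles \ssfvs and \ssoct simultaneously. (A subtlety to flag: the $v$-vertices later get extra neighbours from row/edge/propagation gadgets, but the specification only asks about $\Pi$-legal deletions \emph{within} $C_{p,j}\cup\mathcal C_{\text{sel}}(p,j)$, i.e. in the induced subgraph, so external edges are irrelevant here.)

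For the ``only these'' direction of $\mathcal G_1(\mathcal C)$: suppose $X\subseteq C_{p,j}\cup\mathcal C_{\text{sel}}(p,j)$ has $|X|\le 2k-2$ and $G[(C_{p,j}\cup\mathcal C_{\text{sel}}(p,j))\setminus X]$ has no $S$-traversing cycle (resp. odd cycle). First, I show $X$ must contain at least $2k-2$ of the $2k$ vertices of $C_{p,j}$ and none of the additional vertices: the $v$-vertices with distinct row indices are pairwise adjacent, forming a clique-like structure (every pair $v_p(i,j,z),v_p(i',j,z')$ with $i\ne i'$ is an edge), so if two non-homologous pairs of $v$-vertices survived we could build a short cycle through a $d$-vertex in $S$. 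More precisely, if $v_p(i,j,z)$ and $v_p(i',j,z')$ survive with $i\ne i'$, then together with a suitable $d_{p,j}(\cdot,\cdot)$ vertex (adjacent to the right side) and the $v$–$v$ edge, or via two $d$-vertices, one exhibits an $S$-traversing cycle; and for \soct I would additionally check the parity, using that the $d$-vertices give length-$2$ and length-$3$ routes so both parities of $S$-traversing cycles are realizable unless only homologous $v$-vertices remain. This forces all surviving $v$-vertices to share one row index $i$; then since $|X|\le 2k-2 < 2k$, at least the two homologous vertices $v_p(i,j,1),v_p(i,j,2)$ survive and at least $2k-2$ column vertices are deleted, so by the count $X$ contains no additional vertex and is exactly $C_{p,j}\setminus\{v_p(i,j,1),v_p(i,j,2)\}$. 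I expect this direction — carefully enumerating the short $S$-traversing (odd/even) cycles that rule out every ``bad'' survivor configuration — to be the main obstacle, since it requires a small case analysis over which $d$-vertices are adjacent to which sides.

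For $\mathcal G_2(\mathcal C)$ and \ect: here the target forbids \emph{even} cycles, and the $d$-vertices are not in $S$ (there is no $S$ for \ect), so I cannot rely on $S$-traversal; instead I exploit the false twins. The gadget adds $d_{p,j}(k+1,z)$, a twin of the other $d_{p,j}(\cdot,z)$ on side $z$, and subdivides each edge $d_{p,j}(i)v_p(i,j,1)$ once. The point of the false twin is: if two vertices $v_p(i,j,z),v_p(i',j,z)$ on the \emph{same} side $z$ survive, then together with two of the (twin) $d$-vertices on that side they form a $4$-cycle — an even cycle — so at most one vertex per side can survive; combined with the clique structure across rows (which handles survivors on different sides/rows, giving even cycles of length $4$ through a common $d$-neighbour after accounting for the subdivisions), the only way to avoid an even cycle with $\le 2k-2$ deletions is again to keep exactly one homologous pair $\{v_p(i,j,1),v_p(i,j,2)\}$. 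The subdivision of the $d_{p,j}(i)v_p(i,j,1)$ edges is there to fix parities so that the ``bad'' cycles are genuinely even while the intended surviving forest has no even cycle; I would check that after deleting $C_{p,j}\setminus\{v_p(i,j,1),v_p(i,j,2)\}$ the induced subgraph is again a forest (each $d$-vertex and each subdivision vertex becomes a leaf, and the two surviving $v$'s are non-adjacent), hence trivially has no even cycle. This establishes that $\mathcal G_2(\mathcal C)$ is a column selector for \ect, completing the lemma.
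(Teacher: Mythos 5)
Your proposal is correct and follows essentially the same approach as the paper: the disjoint bicliques between each side $\bigcup_i\{v_p(i,j,z)\}$ and its twin $d$-vertices force exactly $k-1$ deletions per side under the $2k-2$ budget, the vertices $d_{p,j}(i)$ then force the two survivors to be homologous via a surviving triangle (a $C_4$ after subdivision, for \ect), and the extra twin $d_{p,j}(k+1,z)$ turns the same-side obstruction into a genuine even $C_4$. The only difference is that you also explicitly check that the intended deletions are legal within the gadget (the induced remainder is a forest), which the paper handles separately via the intended-solution property in the wrap-up.
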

\begin{proof}
  The gadgets $\mathcal G_1(\mathcal C)$ and $\mathcal G_2(\mathcal C)$ add $3k$ and $4k+2$, respectively, new vertices, thus $\Oh(k)$.
  Their edge set respects the specification of the column selector.

  We first show that the only $\Pi$-legal $(2k-2)$-deletions within $\mathcal G_1(\mathcal C)$ are the sets $C_{p,j} \setminus \{v_p(i,j,1),v_p(i,j,2)\}$ (for $i \in [k]$), for $\Pi \in \{$\ssfvs, \textsc{Subset OCT}$\}$.
  For every $p \in [m]$, $j \in [k]$, and $z \in [2]$, the biclique $K_{k,k}$ between $\bigcup_{i \in [k]} \{v_p(i,j,z)\}$ and $\bigcup_{i \in [k]} \{d_{p,j}(i,z)\} \subseteq S$ forces the removal of all but at most one vertex of $\bigcup_{i \in [k]} \{v_p(i,j,z)\}$, or all the vertices in $\bigcup_{i \in [k]} \{d_{p,j}(i,z)\}$.
  Indeed, recall that the former set is a clique, while the latter set is an independent set and is contained in the prescribed set $S$.
  Hence keeping at least one vertex in $\bigcup_{i \in [k]} \{d_{p,j}(i,z)\}$ and at least two in $\bigcup_{i \in [k]} \{v_p(i,j,z)\}$ results in an odd cycle (a triangle) going through at least one vertex of $S$.
  Thus the only $\Pi$-legal $(2k-2)$-deletions within $\mathcal G_1(\mathcal C)$ have to remove exactly $k-1$ vertices in $\bigcup_{i \in [k]} \{v_p(i,j,1)\}$ and exactly $k-1$ vertices in $\bigcup_{i \in [k]} \{v_p(i,j,2)\}$.
  Let $Y$ denote such a deletion set, and observe that $Y \cap S = \emptyset$.
  We further claim that if $v_p(i,j,1)$ is not in $Y$, then $v_p(i,j,2)$ is also not in $Y$.
  Assume, for the sake of contradiction, that $v_p(i,j,1)$ and $v_p(i',j,2)$ are two (adjacent) vertices, not in $Y$, with $i \neq i'$.
  Then $d_{p,j}(i) \in S$ forms a surviving triangle with $v_p(i,j,1)$ and $v_p(i',j,2)$.
  Thus $Y = C_{p,j} \setminus \{v_p(i,j,1),v_p(i,j,2)\}$ for some $i \in [k]$.

  This finishes the proof that $\mathcal G_1(\mathcal C)$ is a column selector gadget for \ssfvs and \ssoct.
  We now adapt the arguments for $\mathcal G_2(\mathcal C)$ and $\Pi =$ \shortect.
  Now the biclique $K_{k,k+1}$ between $\bigcup_{i \in [k]} \{v_p(i,j,z)\}$ and $\bigcup_{i \in [k+1]} \{d_{p,j}(i,z)\} \subseteq S$ forces the removal of all but at most one vertex of $\bigcup_{i \in [k]} \{v_p(i,j,z)\}$, or all but at most one vertex of $\bigcup_{i \in [k+1]} \{d_{p,j}(i,z)\}$, otherwise there would be a surviving even cycle $C_4$.
  Since only $k-1$ vertices can be removed from each $\Pi$-obstruction $\bigcup_{i \in [k]} \{v_p(i,j,z)\} \cup \bigcup_{i \in [k+1]} \{d_{p,j}(i,z)\} \subseteq S$ (with $z \in [2]$), the only $\Pi$-legal $(2k-2)$-deletions within $\mathcal G_2(\mathcal C)$ remove all but one vertex in $\bigcup_{i \in [k]} \{v_p(i,j,1)\}$ and in $\bigcup_{i \in [k]} \{v_p(i,j,2)\}$.
  The end of the proof is similar to the previous paragraph since the triangle $d_{p,j}(i)v_p(i,j,1)v_p(i',j,2)$ is now a $C_4$ (recall that we subdivided the edge $d_{p,j}(i)v_p(i,j,1)$ once).
\end{proof}

\subsubsection{Row selector gadgets}\label{sec:row-gadgets}

The row selector $\mathcal G_1(\mathcal R)$, attached to $R_{p,i}$, consists of two additional vertices $r_1(p,i), r'_1(p,i) \in S$ made adjacent to every vertex in $\bigcup_{j \in [k]} \{v_p(i,j,1)\}$.
The row selector $\mathcal G_2(\mathcal R)$ consists of three additional vertices $r_2(p,i), r'_2(p,i), r''_2(p,i)$, each adjacent to all vertices in $\bigcup_{j \in [k]} \{v_p(i,j,1)\}$.
We put only $r'_2(p,i)$ in $S$, and we add an edge between $r_2(p,i)$ and $r''_2(p,i)$.

\begin{lemma}\label{lem:row-gadgets}
$\mathcal G_1(\mathcal R)$ is a row selector gadget for \sfvs and \ect, and $\mathcal G_2(\mathcal R)$ is a row selector gadget for \soct.
\end{lemma}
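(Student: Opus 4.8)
We must verify the two claims of Lemma~\ref{lem:row-gadgets}: that $\mathcal G_1(\mathcal R)$ is a row selector gadget for \sfvs and \ect, and that $\mathcal G_2(\mathcal R)$ is a row selector gadget for \soct. Recalling the specification from \cref{subsec:generic}, being a row selector gadget amounts to two things: (i) a size/structure check---only $\Oh(1)$ additional vertices are added, no edge is added inside the row $R_{p,i}$, and the open neighborhood of the additional vertices lies inside $R_{p,i}$; and (ii) the obstruction property---for every $j \neq j' \in [k]$, the vertex set $\mathcal R_{\text{sel}}(p,i) \cup \{v_p(i,j,1),v_p(i,j,2),v_p(i,j',1),v_p(i,j',2)\}$ induces a $\Pi$-obstruction, i.e.\ does not satisfy $P(\Pi)$.

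**Plan.** First I would dispatch part (i): in both gadgets the new vertices ($r_1,r_1'$ in $\mathcal G_1(\mathcal R)$, and $r_2,r_2',r_2''$ in $\mathcal G_2(\mathcal R)$) number $\Oh(1)$, their neighbors all lie in $\bigcup_{j\in[k]}\{v_p(i,j,1)\}\subseteq R_{p,i}$, and no edge is placed within $R_{p,i}$; the only edge internal to the gadget ($r_2r_2''$) is among the new vertices. This is immediate from the construction. Then the heart of the proof is part (ii), handled case by case. For $\mathcal G_1(\mathcal R)$ and $\Pi=\text{\ssfvs}$: the set in question contains $r_1(p,i),r_1'(p,i)\in S$, each adjacent to $v_p(i,j,1)$ and $v_p(i,j',1)$. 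Hence $r_1(p,i)\,v_p(i,j,1)\,r_1'(p,i)\,v_p(i,j',1)\,r_1(p,i)$ is a $4$-cycle passing through the vertex $r_1(p,i)\in S$, so the induced subgraph contains an $S$-traversing cycle and is not \sfvs-valid. For $\mathcal G_1(\mathcal R)$ and $\Pi=\text{\shortect}$: the same $4$-cycle is an \emph{even} cycle (length $4$), so the induced subgraph contains an even cycle and is not \ect-valid. (Here we must note $j\neq j'$ ensures $v_p(i,j,1)\neq v_p(i,j',1)$, and these two vertices are non-adjacent since no edges are added inside a row, so the $4$-cycle is a genuine cycle; we should also remember that homologous vertices $v_p(i,j,1),v_p(i,j,2)$ are non-adjacent, so no shortcut collapses it.)

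For $\mathcal G_2(\mathcal R)$ and $\Pi=\text{\ssoct}$ the argument is slightly more delicate since a $4$-cycle is even, not odd, so the bare analogue of the previous case does not produce an obstruction---this is the one point that actually requires the extra structure of $\mathcal G_2(\mathcal R)$, and I expect it to be the main (minor) obstacle. Here the three vertices $r_2(p,i),r_2'(p,i),r_2''(p,i)$ are each adjacent to all of $\bigcup_{j\in[k]}\{v_p(i,j,1)\}$, and additionally $r_2(p,i)r_2''(p,i)$ is an edge, with only $r_2'(p,i)\in S$. Consider the closed walk $r_2'(p,i)\,v_p(i,j,1)\,r_2(p,i)\,r_2''(p,i)\,v_p(i,j',1)\,r_2'(p,i)$: it uses $5$ edges and is a cycle (all five listed vertices are distinct, using $j\neq j'$, and consecutive pairs are edges---the pair $r_2r_2''$ by the added edge, all pairs $r_?\,v_p(i,\cdot,1)$ by construction). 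It is therefore an \emph{odd} cycle, and it passes through $r_2'(p,i)\in S$, so it is an $S$-traversing odd cycle; hence the induced subgraph is not \soct-valid. This shows $\mathcal R_{\text{sel}}(p,i)\cup\{v_p(i,j,1),v_p(i,j,2),v_p(i,j',1),v_p(i,j',2)\}$ is a \soct-obstruction, completing the lemma.

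**Remark on where care is needed.** The only nonroutine points are (a) confirming that in each case the claimed cycle is a genuine cycle---which hinges on $j\neq j'$, on the absence of intra-row edges, and on the non-adjacency of homologous vertices; (b) matching parity/$S$-membership to the correct problem: length $4$ works for \sfvs and \ect because those forbid (some) cycles of \emph{any} parity resp.\ \emph{even} cycles, whereas \soct only forbids \emph{odd} $S$-traversing cycles, which is precisely why $\mathcal G_2(\mathcal R)$ introduces the pendant edge $r_2r_2''$ to lengthen the witnessing cycle from $4$ to $5$. I do not foresee any genuine difficulty beyond bookkeeping these parities.
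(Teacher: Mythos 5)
Your proposal is correct and follows essentially the same approach as the paper: verify the $\Oh(1)$ size and neighborhood constraints, then exhibit the induced $C_4$ through an $S$-vertex for \ssfvs and \shortect, and a $5$-cycle through $r'_2(p,i)\in S$ (using the edge $r_2(p,i)r''_2(p,i)$) for \ssoct. The only cosmetic difference is the particular $5$-cycle chosen for the \ssoct case (the paper uses $r_2\,v_p(i,j,1)\,r'_2\,v_p(i,j',1)\,r''_2\,r_2$), but both are valid odd $S$-traversing cycles.
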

\begin{proof}
  The gadgets $\mathcal G_1(\mathcal R)$ and $\mathcal G_2(\mathcal R)$ add $2$ and $3$ new vertices, respectively, thus $\Oh(1)$.
  Their edge set respects the specification of the row selector.

  The set $\{r_1(p,i), r'_1(p,i), v_p(i,j,1), v_p(i,j',1)\}$ is a $\Pi$-obstruction, for every pair $j \neq j' \in [k]$, for every problem \mbox{$\Pi \in \{$\ssfvs, \textsc{ECT}$\}$}.
  Indeed it induces an even cycle (a $C_4$) and, in the case of \ssfvs, we note that this cycle goes through two vertices of~$S$.
  The set $\{r_2(p,i), r'_2(p,i), r''_2(p,i), v_p(i,j,1), v_p(i,j',1)\}$ is a $\Pi$-obstruction, for every pair $j \neq j' \in [k]$, for $\Pi =$ \ssoct.
  Indeed it contains an odd cycle $r_2(p,i)v_p(i,j,1)r'_2(p,i)v_p(i,j',1)r''_2(p,i)$ going through $r'_2(p,i) \in S$.
\end{proof}
Crucially for the intended-solution property, the odd cycle $r_2(p,i)v_p(i,j,1)r''_2(p,i)$ does not contain any vertex of $S$.
  
\subsubsection{Edge gadgets}\label{sec:edge-gadgets}

Let $\mathcal G_1(\mathcal E)$ be the following edge gadget, that we present for $e_p=(i,j)(i',j')$.
We add an edge between $v_p(i,j,1)$ and $v_p(i',j',1)$.
We add a vertex $s_p$ adjacent to both $v_p(i,j,1)$ and $v_p(i',j',1)$.
We add $s_p$ to the set $S \subseteq V(G)$.
The edge gadget $\mathcal G_2(\mathcal E)$ is obtained from $\mathcal G_1(\mathcal E)$ by subdividing the edge $s_pv_p(i',j',1)$ once.

\begin{lemma}\label{lem:edge-gadgets}
$\mathcal G_1(\mathcal E)$ is an edge gadget for \sfvs and \soct, and $\mathcal G_2(\mathcal E)$ is an edge gadget for \ect.
\end{lemma}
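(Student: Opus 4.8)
The plan is to verify directly, by writing down the at most four edges added in each case, that the prescribed vertex set induces the required obstruction; everything else in the definition of ``admits an edge gadget'' is immediate bookkeeping.

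First I would record that bookkeeping. The gadget $\mathcal G_1(\mathcal E)$ adds a single vertex $s_p$, so $\mathcal E_p(i,j,i',j') = \{s_p\}$, and $\mathcal G_2(\mathcal E)$ adds $s_p$ together with the vertex obtained by subdividing $s_pv_p(i',j',1)$ once, which I will call $s'_p$, so $\mathcal E_p(i,j,i',j') = \{s_p,s'_p\}$. In both cases $|\mathcal E_p(i,j,i',j')| = \Oh(1)$, the construction is performed in constant time (hence in $k^{\Oh(1)}$ time), and the only neighbours of $\mathcal E_p(i,j,i',j')$ outside itself are among $\{v_p(i,j,1),v_p(i,j,2),v_p(i',j',1),v_p(i',j',2)\}$; thus the restrictions that the edge-gadget specification imposes on the vertex set, its size, and its external neighbourhood are respected.

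It remains to check that $\mathcal E_p(i,j,i',j') \cup \{v_p(i,j,1),v_p(i,j,2),v_p(i',j',1),v_p(i',j',2)\}$ is a $\Pi$-obstruction. For \sfvs and \soct, the key point is that in $\mathcal G_1(\mathcal E)$ the three vertices $s_p$, $v_p(i,j,1)$, $v_p(i',j',1)$ form a triangle --- we added the edge $v_p(i,j,1)v_p(i',j',1)$ and made $s_p$ adjacent to both --- and $s_p \in S$. Hence the induced subgraph on the set above contains an $S$-traversing cycle, which is moreover odd, so by heredity the set is both an \sfvs-obstruction and a \soct-obstruction; the homologous vertices $v_p(i,j,2)$ and $v_p(i',j',2)$ are isolated in this induced subgraph and play no role. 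For \ect, I would observe that subdividing $s_pv_p(i',j',1)$ turns this triangle into the $4$-cycle $s_p\,v_p(i,j,1)\,v_p(i',j',1)\,s'_p$, which is even; hence the induced subgraph contains an even cycle and the set is an \ect-obstruction (here $S$ is irrelevant to the property). I do not expect any genuine obstacle: the only thing to be careful about is that the single subdivision flips the parity of the relevant short cycle in precisely the way needed, which is exactly why $\mathcal G_2(\mathcal E)$ is defined differently from $\mathcal G_1(\mathcal E)$.
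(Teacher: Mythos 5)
Your proposal is correct and matches the paper's argument: the bookkeeping on gadget size and neighbourhood, the triangle through $s_p \in S$ giving the obstruction for \sfvs and \soct, and the subdivision turning that triangle into an even $C_4$ for \ect. The extra care you take (naming the subdivision vertex, noting the homologous vertices are isolated) is harmless elaboration of the same proof.
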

\begin{proof}
  Both gadgets introduce a constant number of additional vertices (1 and 2, respectively, so $\Oh(k)$), and their edge set respects the specification.
  The gadget $\mathcal G_1(\mathcal E)$ is an odd cycle (a triangle) with a vertex in $S$, hence an obstruction for \sfvs and \soct.
  The gadget $\mathcal G_2(\mathcal E)$ is an even cycle (a $C_4$), hence an obstruction for \ect.
\end{proof}

\subsubsection{Propagation gadgets}\label{sec:propagation-gadgets}

  We present $\mathcal G_1(\mathcal P)$, a propagation gadget inserted between $H_p$ and $H_{p+1}$.
  We first add an independent set of $2k$ vertices.
  Among them, the $k$ vertices $r_{p,1}, \ldots, r_{p,k}$ represent the row indices in $H_p$ and $H_{p+1}$, while the $k$ other vertices $c_{p,1}, \ldots, c_{p,k}$ represent the column indices.
  We link $r_{p,i}$ to all the vertices in $\bigcup_{j \in [k]} \{v_p(i,j,2)\} \cup \bigcup_{j \in [k]} \{v_{p+1}(i,j,1)\}$.
  Similarly, we link $c_{p,j}$ to all the vertices in $\bigcup_{i \in [k]} \{v_p(i,j,2)\} \cup \bigcup_{i \in [k]} \{v_{p+1}(i,j,1)\}$.
  Finally, we add a vertex $c_p \in S$ adjacent to all the vertices $c_{p,1}, \ldots, c_{p,k}$.

  The gadget $\mathcal G_2(\mathcal P)$ is defined similarly, except that we subdivide the edge $r_{p,i}v_p(i,j,2)$ once, for each $i, j \in [k]$.
  Finally the gadget $\mathcal G_3(\mathcal P)$ adds to $\mathcal G_2(\mathcal P)$, a vertex $c'_{p,j}$, for each $j \in [k]$.
  The vertex $c'_{p,j}$ is linked to $c_{p,j}$ and to $c_p$.

  \begin{lemma}\label{lem:propagation-gadgets}
    $\mathcal G_1(\mathcal P)$ is a column selector gadget for \sfvs, $\mathcal G_2(\mathcal P)$ is a column selector gadget for \soct, and $\mathcal G_3(\mathcal P)$ is a column selector gadget for \ect.
  \end{lemma}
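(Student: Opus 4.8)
The plan is, for each of $\mathcal{G}_1(\mathcal{P})$, $\mathcal{G}_2(\mathcal{P})$, $\mathcal{G}_3(\mathcal{P})$, to check the two things the propagation-gadget specification asks for: the structural constraints (vertex set of size $k^{\Oh(1)}$, existence of a suitable subset $\mathcal{P}'_p$, no edge with both endpoints in $H_p\cup H_{p+1}$, and the open neighborhood of $\mathcal{P}_p$ contained in $H_p\cup H_{p+1}$), and the obstruction property, namely that $\mathcal{P}_p\cup\{v_p(i,j,1),v_p(i,j,2),v_{p+1}(i,j',1),v_{p+1}(i,j',2)\}$ is a $\Pi$-obstruction for every $i$ and every $j\neq j'$. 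In each case the obstruction is witnessed by a single explicit cycle, and the only real work is parity bookkeeping together with keeping straight which copy --- the $z=1$ or the $z=2$ vertex --- each gadget vertex attaches to in $H_p$ as opposed to $H_{p+1}$.

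The structural side I would treat uniformly. Each gadget adds $\Oh(k)$, $\Oh(k^2)$, and $\Oh(k^2)$ vertices respectively and $\Oh(k^2)$ edges, so each is built in time $k^{\Oh(1)}$ and $|\mathcal{P}_p|=k^{\Oh(1)}$. Take $\mathcal{P}'_p:=\{r_{p,1},\dots,r_{p,k},c_{p,1},\dots,c_{p,k},c_p\}$, of size $2k+1=\Oh(k)$. For $\mathcal{G}_1(\mathcal{P})$ we have $\mathcal{P}_p=\mathcal{P}'_p$, so the requirement on $\mathcal{P}_p\setminus\mathcal{P}'_p$ is vacuous; for $\mathcal{G}_2(\mathcal{P})$ the extra $k^2$ vertices are the subdividers of the edges $r_{p,i}v_p(i,j,2)$, each with a single neighbor ($v_p(i,j,2)$) in $H_p\cup H_{p+1}$ and its other neighbor ($r_{p,i}$) in $\mathcal{P}'_p$; for $\mathcal{G}_3(\mathcal{P})$ there are moreover the $c'_{p,j}$, each with no neighbor in $H_p\cup H_{p+1}$ and both neighbors ($c_{p,j}$ and $c_p$) in $\mathcal{P}'_p$. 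Finally every edge added by these gadgets has an endpoint in $\mathcal{P}_p$, so none lies inside $H_p\cup H_{p+1}$, and the only neighbors of $\mathcal{P}_p$-vertices outside $\mathcal{P}_p$ are among the $v_p(i,j,2)$ and $v_{p+1}(i,j,1)$, all of which lie in $H_p\cup H_{p+1}$.

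For the obstruction property, fix $i$ and $j\neq j'$ and set $A:=\mathcal{P}_p\cup\{v_p(i,j,1),v_p(i,j,2),v_{p+1}(i,j',1),v_{p+1}(i,j',2)\}$. In $\mathcal{G}_1(\mathcal{P})$, $r_{p,i}$ is adjacent to $v_p(i,j,2)$ and to $v_{p+1}(i,j',1)$, $c_{p,j}$ is adjacent to $v_p(i,j,2)$, $c_{p,j'}$ is adjacent to $v_{p+1}(i,j',1)$, and $c_p$ is adjacent to both $c_{p,j}$ and $c_{p,j'}$; hence $v_p(i,j,2),c_{p,j},c_p,c_{p,j'},v_{p+1}(i,j',1),r_{p,i}$ is a $6$-cycle of $G[A]$ through $c_p\in S$, so $A$ is an obstruction for \sfvs. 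In $\mathcal{G}_2(\mathcal{P})$, inserting into that cycle the vertex subdividing $r_{p,i}v_p(i,j,2)$ (which belongs to $\mathcal{P}_p$) gives a $7$-cycle, that is, an odd $S$-traversing cycle, so $A$ is an obstruction for \soct. In $\mathcal{G}_3(\mathcal{P})$, rerouting the edge $c_pc_{p,j}$ of that $7$-cycle through $c'_{p,j}$ (adjacent to both $c_p$ and $c_{p,j}$) gives an $8$-cycle of $G[A]$, an even cycle, so $A$ is an obstruction for \ect. In each case the listed vertices are pairwise distinct (the two $v$-vertices sit in different copies and $j\neq j'$), so these closed walks are genuine cycles; since all their edges lie in $G[A]$, the graph $G[A]$ contains them whatever further edges other gadgets contribute, and therefore violates the property of the respective problem.

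The step that needs the most care is seeing that the six base vertices actually span a cycle, which works precisely because $r_{p,i}$ --- and likewise $c_{p,j}$ --- is joined to the $z=2$ copies in $H_p$ but to the $z=1$ copies in $H_{p+1}$: this is what allows a single row index $i$ to be carried across from column $j$ in $H_p$ to a different column $j'$ in $H_{p+1}$ and close into a cycle running through $c_p$. Once this cycle is in hand, one only has to confirm that the subdivision in $\mathcal{G}_2(\mathcal{P})$ lengthens it by exactly one (flipping its parity to odd) and that the detour through $c'_{p,j}$ in $\mathcal{G}_3(\mathcal{P})$ lengthens it by one more (flipping it back to even); the remaining observation --- that the other $r$-, $c$-, subdivider- and $c'$-vertices of $G[A]$ are pendant and therefore irrelevant --- is routine.
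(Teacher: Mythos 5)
Your proof is correct and follows essentially the same route as the paper's: the same choice of $\mathcal P'_p=\{r_{p,1},\dots,r_{p,k},c_{p,1},\dots,c_{p,k},c_p\}$ for the technical condition, and the same witnessing cycles (the $C_6$ through $c_p\in S$ for \sfvs, its subdivision into a $C_7$ for \soct, and the detour through $c'_{p,j}$ into a $C_8$ for \ect). The extra care you take with the $z=1$ versus $z=2$ attachments and with why additional edges of $G[A]$ cannot destroy the obstruction is consistent with, and slightly more explicit than, the paper's argument.
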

  \begin{proof}
    Let $\mathcal P^1_p := \{r_{p,1}, \ldots, r_{p,k}, c_{p,1}, \ldots, c_{p,k}, c_p\}$.
    The gadget $\mathcal G_1(\mathcal P)$ adds to the base the set $\mathcal P^1_p$ of size $2k+1$, thus $\Oh(k)$.
    Hence it trivially satisfies the technical condition of the propagation gadget.
    The gadget $\mathcal G_2(\mathcal P)$ adds a further $k^2$ vertices, stemming from the subdivision of the edges $r_{p,i}v_p(i,j,2)$.
    These vertices have exactly one neighbor in $H_p \cup H_{p+1}$ and the rest of their neighbors in $\mathcal P^1_p$, so satisfy the specification.
    For the same reason, $\mathcal G_3(\mathcal P)$ also satisfies the specification. 
    We denote by $\mathcal P^2_p$ the set of $2k+1+k^2$ vertices consisting of $\mathcal P^1_p$ plus the subdivision vertices, and $\mathcal P^3_p$ the set of $3k+1+k^2$ vertices added in $\mathcal G_3(\mathcal P)$.
    The edge sets of $\mathcal G_1(\mathcal P), \mathcal G_2(\mathcal P), \mathcal G_3(\mathcal P)$ respect the specification of the propagation selector.

    For every $i,j \neq j' \in [k]$, $\mathcal P^1_p \cup \{v_p(i,j,2),v_{p+1}(i,j',1)\}$ is a $\Pi$-obstruction for $\Pi =$ \ssfvs.
    Indeed $r_{p,i}v_p(i,j,2)c_{p,j}c_jc_{p,j'}v_{p+1}(i,j',1)$ is a cycle (a $C_6$) going through $c_j \in S$.
    Similarly $\mathcal P^2_p \cup \{v_p(i,j,2),v_{p+1}(i,j',1)\}$ is a $\Pi$-obstruction for $\Pi =$ \ssoct, the same cycle being now of odd length (a $C_7$), due to the subdivision of $r_{p,i}v_p(i,j,2)$.
    Finally $\mathcal P^3_p \cup \{v_p(i,j,2),v_{p+1}(i,j',1)\}$ is a $\Pi$-obstruction for $\Pi =$ \shortect since $r_{p,i}w_p(i,j,2)v_p(i,j,2)c_{p,j}c'_{p,j}c_jc_{p,j'}v_{p+1}(i,j',1)$ is an even cycle (a $C_8$), where $w_p(i,j,2)$ is the subdivided vertex stemming from the edge $r_{p,i}v_p(i,j,2)$.
\end{proof}
  
\subsubsection{Wrap-up}\label{sec:wrap-up-hardness}

\begin{theorem}\label{hardness:adhoc}
  Unless the ETH fails, the following problems cannot be solved in time $2^{o(\pw \log \pw)}n^{\Oh(1)}$ on $n$-vertex graphs with pathwidth $\pw$:
  \begin{itemize}
  \item \sfvs,
  \item \soct, and
  \item \ect.
  \end{itemize}
\end{theorem}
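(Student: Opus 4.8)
The plan is to obtain \cref{hardness:adhoc} as an immediate consequence of the generic reduction of \cref{hardness:generic}. For each of \sfvs, \soct, and \ect I need to verify three things: that it is a hereditary vertex-deletion problem in the sense of \cref{subsec:generic}; that it admits a column selector, a row selector, an edge, and a propagation gadget; and that, paired with the chosen gadgets, it satisfies the intended-solution property.

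Hereditariness is immediate: the forbidden configurations are always cycles of a prescribed type --- an arbitrary cycle through $S$ for \sfvs, an odd cycle through $S$ for \soct, and an even cycle for \ect --- and passing from $G-X$ to $G-X'$ with $X'\supseteq X$ only removes cycles and only shrinks the induced set $S$, so the defining property is preserved. The four gadgets for each problem are exactly the content of \cref{lem:column-gadgets,lem:row-gadgets,lem:edge-gadgets,lem:propagation-gadgets}, so it suffices to pick one gadget of each kind: $(\mathcal G_1(\mathcal C),\mathcal G_1(\mathcal R),\mathcal G_1(\mathcal E),\mathcal G_1(\mathcal P))$ for \sfvs, $(\mathcal G_1(\mathcal C),\mathcal G_2(\mathcal R),\mathcal G_1(\mathcal E),\mathcal G_2(\mathcal P))$ for \soct, and $(\mathcal G_2(\mathcal C),\mathcal G_1(\mathcal R),\mathcal G_2(\mathcal E),\mathcal G_3(\mathcal P))$ for \ect.

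The only nontrivial point, and the step I expect to demand the most care even though it is routine case analysis, is verifying the intended-solution property. Here I would fix one of the three problems with its chosen gadgets, fix a graph $G$ assembled from the base $H_1\cup\dots\cup H_m$ in the permitted way, fix a permutation $(j_1,\dots,j_k)$ of $[k]$ for which $X:=\bigcup_{p\in[m],\,i\in[k],\,j\neq j_i,\,z\in[2]}\{v_p(i,j,z)\}$ meets every edge gadget, and analyse $G-X$. First I would record that in each copy $H_p$ the surviving vertices are exactly the homologous pairs $v_p(i,j_i,1),v_p(i,j_i,2)$ over $i\in[k]$; equivalently, each column $C_{p,j}$ keeps precisely the two non-adjacent vertices $v_p(i_j,j,1)$ and $v_p(i_j,j,2)$, where $i_j$ is the unique index with $j_{i_j}=j$. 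Then I would go through the gadgets one at a time and check, in $G-X$, that every surviving vertex of $S$ lying in a column selector (the vertices $d_{p,j}(\cdot)$, plus the extra vertex of $\mathcal G_2(\mathcal C)$) or in a row or edge gadget has at most one surviving neighbour, hence lies on no cycle --- for the edge gadget this uses exactly that $X$ meets it, so at least one of its two $v$-endpoints is deleted and the edge $v_p(i,j,1)v_p(i',j',1)$, together with its subdivided path in $\mathcal G_2(\mathcal E)$, degenerates to a pendant path --- and that in each propagation gadget the surviving subgraph is the star from $c_p$ to the vertices $c_{p,j}$ (and, in $\mathcal G_3(\mathcal P)$, also to the vertices $c'_{p,j}$), where each $c_{p,j}$ moreover lies on one \emph{rung} cycle through $v_p(i_j,j,2)$, $r_{p,i_j}$, and $v_{p+1}(i_j,j,1)$ --- a $C_4$ in $\mathcal G_1(\mathcal P)$ and a $C_5$ in $\mathcal G_2(\mathcal P)$ and $\mathcal G_3(\mathcal P)$, because the edge $r_{p,i}v_p(i,j,2)$ is subdivided there --- together with pendant trees, and, in $\mathcal G_3(\mathcal P)$, the triangles $c_pc_{p,j}c'_{p,j}$. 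Since the propagation gadgets carry the only edges between distinct copies $H_p$, no surviving edge joins two distinct columns of a single $H_p$, and within each column the two survivors are non-adjacent with no common surviving neighbour, every cycle of $G-X$ is one of these rungs or triangles --- and, in the \soct case, also one of the odd triangles $r_2(p,i)v_p(i,j_i,1)r''_2(p,i)$ left by $\mathcal G_2(\mathcal R)$, which again avoid $S$. Consequently no cycle of $G-X$ passes through $S$ for \sfvs, no odd cycle of $G-X$ passes through $S$ for \soct, and every cycle of $G-X$ has length $3$ or $5$ and so is odd for \ect; i.e., $X$ is $\Pi$-legal in each case, which is the intended-solution property.

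With all hypotheses of \cref{hardness:generic} verified for each of \sfvs, \soct, and \ect, the lower bound claimed in \cref{hardness:adhoc} follows at once.
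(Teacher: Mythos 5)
Your proposal is correct and follows essentially the same route as the paper: it invokes \cref{hardness:generic} with exactly the gadget combinations of \cref{tbl:gadgets}, relies on \cref{lem:column-gadgets,lem:row-gadgets,lem:edge-gadgets,lem:propagation-gadgets} for the gadget specifications, and verifies the intended-solution property by the same analysis of $G-X$ (the paper phrases this via the blocks of $G-X$ and the cut-vertices $c_p$, while you enumerate the surviving cycles directly, but the content is identical).
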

\begin{proof}
  We need to check that these problems satisfy the preconditions of \cref{hardness:generic}.
  \Cref{sec:column-gadgets,sec:row-gadgets,sec:edge-gadgets,sec:propagation-gadgets} and \cref{lem:column-gadgets,lem:row-gadgets,lem:edge-gadgets,lem:propagation-gadgets} show how to build the four types of gadgets.
  Which problem uses which version of the gadget is summarized in~\cref{tbl:gadgets}.
  See \cref{fig:subsetFVS} for a schematic representation of the construction for \ssfvs.
  
  \begin{table}[h!]
    \centering
    \begin{tabular}{ccccc}
      \toprule
               & column selector & row selector & edge gadget & propagation gadget \\
      \midrule
      \sfvs   & $\mathcal G_1(\mathcal C)$ & $\mathcal G_1(\mathcal R)$ & $\mathcal G_1(\mathcal E)$ & $\mathcal G_1(\mathcal P)$ \\
      \soct   & $\mathcal G_1(\mathcal C)$ & $\mathcal G_2(\mathcal R)$ & $\mathcal G_1(\mathcal E)$ & $\mathcal G_2(\mathcal P)$ \\
      \ect   & $\mathcal G_2(\mathcal C)$ & $\mathcal G_1(\mathcal R)$ & $\mathcal G_2(\mathcal E)$ & $\mathcal G_3(\mathcal P)$ \\
      \bottomrule\\      
    \end{tabular}
    \caption{The different gadgets used for the different problems.}
    \label{tbl:gadgets}
  \end{table}

  Finally we have to check that the problems have the intended-solution property.
  We shall prove that every set $X := \bigcup_{p \in [m], i \in [k], z \in [2]} \{v_p(i,j_i,z)\}$, with $\{j_1,\ldots,j_k\}=[k]$ and intersecting all the edge gadgets is $\Pi$-legal in any graph $G$ obtained by attaching to the base the four types of gadgets with respect to their specification of \cref{subsec:generic}.
  The set $X$ is a solution to \mbox{$\Pi \in \{$\ssfvs, \ssoct, \textsc{ECT}$\}$}, if and only if no 2-connected component (i.e., a block of size at least~3) of $G-X$ is a $\Pi$-obstruction.
  Indeed no cycle can go through a cut-vertex.

  We first note that there is no 2-connected component within $\mathcal G_1(\mathcal C), \mathcal G_2(\mathcal C), \mathcal G_1(\mathcal R), \mathcal G_1(\mathcal E), \mathcal G_2(\mathcal E)$ restricted to $G-X$.
  For the latter two gadgets, this is because, by assumption, $X$ intersects every edge gadget.
  In a gadget $\mathcal G_2(\mathcal R)$ restricted to $G-X$, there is one 2-connected component, namely a triangle; but none of its vertices belongs to~$S$.

  We now observe that every vertex $c_p$ is a cut-vertex in $\mathcal G_1(\mathcal P)$, $\mathcal G_2(\mathcal P)$, and $\mathcal G_3(\mathcal P)$ restricted to $G-X$.
  So the remaining 2-connected components of $G-X$ are induced cycles $C_4$ of the form $r_{p,i}v_p(i,j,2)c_{p,j}v_{p+1}(i,j,1)$ when $\mathcal G_1(\mathcal P)$ is used, or induced $C_5$ when $\mathcal G_2(\mathcal P)$ is used, or triangle and induced cycle $C_5$ when $\mathcal G_3(\mathcal P)$ is used.
  In the first two cases, none of the vertices of the cycles belongs to~$S$.
  In the third case, no cycle is even.
  This establishes that \ssfvs, \ssoct, and \shortect with their respective combination of gadgets have the intended-solution property. 
  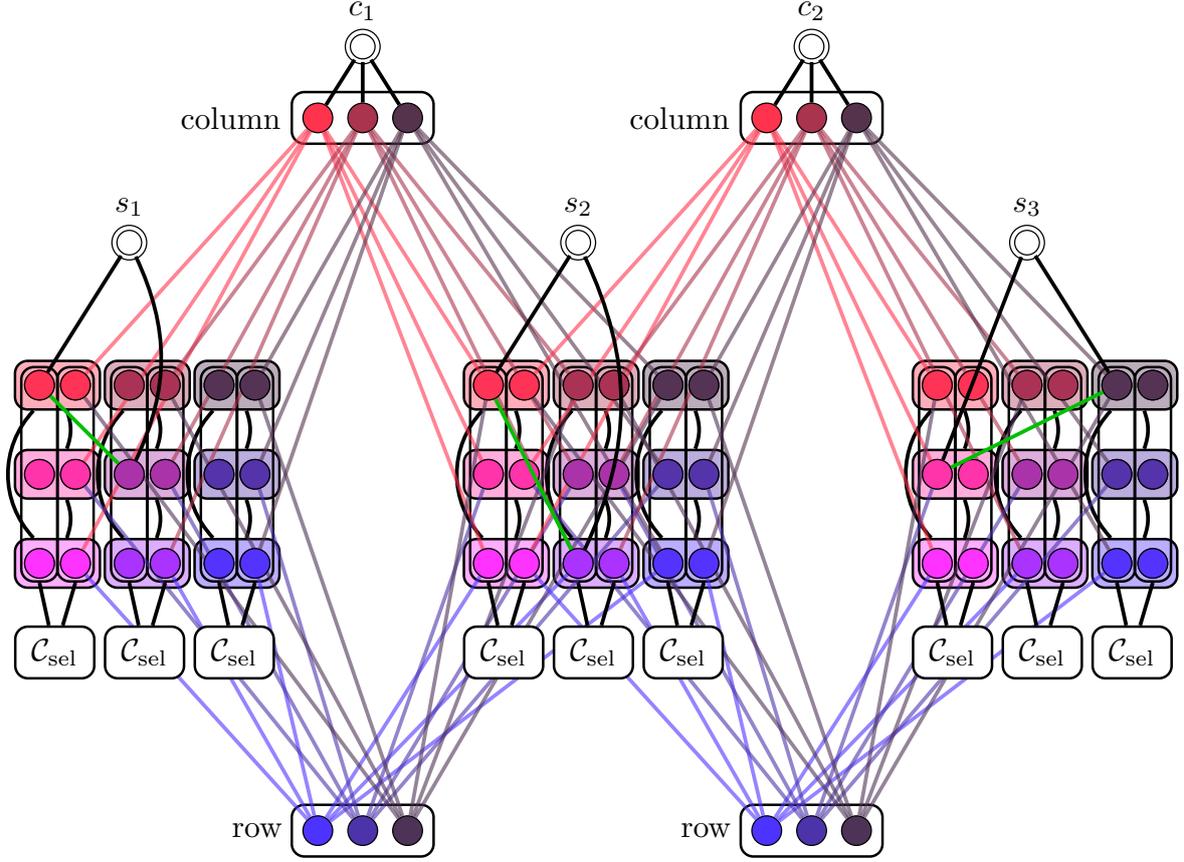
\begin{figure}[h!]
    \centering
    \resizebox{450pt}{!}{
  \begin{tikzpicture}
    \def\k{3}
    \def\s{3}
    \pgfmathtruncatemacro{\sm}{\s - 1}
    \pgfmathtruncatemacro{\kp}{\k + 1}
    \def\vs{5}
    \def\c{1}
    \def\cc{0.5}
    \def\g{3.6}
    \def\y{0.4}
    \def\yy{0.3}
    \foreach \h in {1,...,\s}{
      \foreach \i in {1,...,\k}{
        \foreach \j in {1,...,\k}{
          \node[draw,circle] (v\h\i\j1) at (\c * \i + \vs * \h,\c * \j) {} ;
          \node[draw,circle] (v\h\i\j2) at (\c * \i + \vs * \h+\y,\c * \j) {} ;
          \pgfmathsetmacro{\ii}{1 - \i/\k + 1/\k}
          \pgfmathsetmacro{\jj}{1 - \j/\k + 1/\k}
          \definecolor{colv}{rgb}{\ii,0.2,\jj}
          \node[draw,fill=colv,fill opacity=0.4,rectangle,rounded corners,thick,fit=(v\h\i\j1)(v\h\i\j2),inner sep=0.1cm] (v\h\i\j) {} ;
        }
        \foreach \j in {2,...,\k}{
          \pgfmathtruncatemacro{\jm}{\j - 1}
          \draw[very thick] (v\h\i\j) to [bend left=20] (v\h\i\jm) ; 
        }
        \draw[very thick] (v\h\i1) to [bend left=43] (v\h\i\k) ; 
      }
      \node[draw,circle,double,double distance=1.2pt] (u\h) at (\c * \k / 2 + \c / 2 + \vs * \h,\c * \k + 1.6) {} ;
      \node at (\c * \k / 2 + \c / 2 + \vs * \h,\c * \k + 2) {$s_\h$} ;
    }
    \foreach \h in {1,...,\s}{
      \foreach \j in {1,...,\k}{
        \foreach \z in {1,2}{
          \node[draw,rectangle,thick,rounded corners,fit=(v\h\j1\z)(v\h\j\k\z),inner sep=0.03cm] (column\h\j\z) {} ;
        }
      }
    }
    \foreach \h in {1,...,\sm}{
      \foreach \i in {1,...,\k}{
        \pgfmathsetmacro{\ii}{1 - \i/\k + 1/\k}
        \definecolor{colc}{rgb}{0.3,0.2,\ii}
        \definecolor{colr}{rgb}{\ii,0.2,0.3}
        \node[draw,fill=colc,circle] (r\h\i) at (\cc * \i + \vs * \h + \g,-2) {} ;
        \node[draw,fill=colr,circle] (c\h\i) at (\cc * \i + \vs * \h + \g,\kp * \c +2) {} ;
      }
      \node [draw,rectangle,thick,rounded corners,fit=(r\h1)(r\h\k)] (R\h) {} ;
      \node [draw,rectangle,thick,rounded corners,fit=(c\h1)(c\h\k)] (C\h) {} ;
      \node[left of=R\h] {row~~~~} ;
      \node[left of=C\h] {column~~~~~~~~~} ;
      \node[draw,circle,double,double distance=1.2pt] (uc\h) at (\cc * \k / 2 + \cc / 2 + \vs * \h + \g,\c * \k + 3.8) {} ;
      \node at (\cc * \k / 2 + \cc / 2 + \vs * \h + \g,\c * \k + 4.2) {$c_\h$} ;
    }
    \foreach \h in {1,...,\sm}{
      \foreach \j in {1,...,\k}{
        \draw[very thick] (uc\h) -- (c\h\j) ; 
      }
    }
    \foreach \h in {1,...,\sm}{
      \pgfmathtruncatemacro{\hp}{\h + 1}
      \foreach \i in {1,...,\k}{
        \pgfmathsetmacro{\ii}{1 - \i/\k + 1/\k}
        \definecolor{colr}{rgb}{0.3,0.2,\ii}
        \definecolor{colc}{rgb}{\ii,0.2,0.3}
        \foreach \j in {1,...,\k}{
          \begin{scope}[very thick, opacity=0.6]
          \draw[color=colc] (c\h\i) -- (v\h\i\j2) ;
          \draw[color=colc] (c\h\i) -- (v\hp\i\j1) ;
          \draw[color=colr] (r\h\i) -- (v\h\j\i2) ;
          \draw[color=colr] (r\h\i) -- (v\hp\j\i1) ;
          \end{scope}
        }
      }
    }
    \foreach \h in {1,...,\s}{
      \foreach \i in {1,...,\k}{
        \foreach \j in {1,...,\k}{
          \pgfmathsetmacro{\ii}{1 - \i/\k + 1/\k}
          \pgfmathsetmacro{\jj}{1 - \j/\k + 1/\k}
          \definecolor{colv}{rgb}{\ii,0.2,\jj}
          \node[draw,fill=colv,circle] at (\c * \i + \vs * \h,\c * \j) {} ;
          \node[draw,fill=colv,circle] at (\c * \i + \vs * \h+\y,\c * \j) {} ;
        }
      }
    }
    \foreach \h in {1,...,\s}{
      \foreach \i in {1,...,\k}{
        \node[circle] (sel\h\i1) at (\c * \i + \vs * \h + 0.02,0) {} ;
        \node[circle] (sel\h\i2) at (\c * \i + \vs * \h + 0.02+\yy,0) {} ;
        \node [draw,fill=white,rectangle,thick,rounded corners,fit=(sel\h\i1)(sel\h\i2)] (G\h\i) {} ;
        \node at (\c * \i + \vs * \h + 0.02+\yy / 2,0) {$\mathcal C_{\text{sel}}$} ;
        \draw[very thick] (column\h\i1.south) -- (G\h\i) -- (column\h\i2.south) ;
      }
    }
     \begin{scope}[very thick,black!25!green]
      \draw (v1131) -- (v1221) ;
      \draw (v2131) -- (v2211) ;
      \draw (v3121) -- (v3331) ;
    \end{scope}
    \begin{scope}[very thick]
      \draw (v1131) -- (u1) to [bend left=25] (v1221) ;
      \draw (v2131) -- (u2) to [bend left=25] (v2211) ;
      \draw (v3121) -- (u3) -- (v3331) ;
    \end{scope}
  \end{tikzpicture}
  }
    \caption{Example of the overall picture for \sfvs.
      The first three edges (in green) in the reduction from \textsc{$k \times k$-Permutation Independent Set}, with $k=3$, to \ssfvs.
    The doubly-circled vertices are vertices in $S$.
    The column selector gadget $\mathcal C_{\text{sel}}$, of size $\Oh(k)$, forces that only one pair of homologous vertices is retained in each column (see Figure \ref{fig:column-selector}).
    We did \emph{not} represent the row selector gadget.}
  \label{fig:subsetFVS}
  \end{figure}
  \end{proof}
  
  \subsection{Lower bound for \vmwc}\label{subsec:vmwc-lower}

For \vmwc we will also start from the base $\bigcup_{p \in [m]}H_p$ but we will deviate from the gadget specification of \cref{subsec:generic}.
  We will ``communalize'' the selector, edge, and propagation gadgets.
  That way, we are able to show the claimed lower bound even when the number of terminals is linearly tied to the pathwidth.
  This is unlike our constructions for \ssfvs and \ssoct in \cref{hardness:adhoc} where the size of the prescribed subsets $S$ is significantly larger than the pathwidth.
  
  \begin{theorem}\label{hardness:vmwc}
    Unless the ETH fails, \vmwc cannot be solved in time $2^{o(p \log p)}n^{\Oh(1)}$ on $n$-vertex graphs where $p = \pw + |T|$ is the sum of the pathwidth of the input graph and the number of terminals. 
  \end{theorem}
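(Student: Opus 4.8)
The plan is a reduction from \textsc{$k \times k$-Independent Set}, which under the ETH has no $2^{o(k\log k)}k^{\Oh(1)}$-time algorithm, to a \vmwc instance with $\Oh(k)$ terminals and pathwidth $\Oh(k)$; since there $p=\Oh(k)$, a $2^{o(p\log p)}n^{\Oh(1)}$ algorithm would refute the ETH. Fix an instance $H$ with $V(H)=[k]^2$; as any solution takes exactly one vertex per column, we may delete all intra-column edges of $H$ and assume there are none. I keep the base $H_1\cup\dots\cup H_m$ of $m:=|E(H)|$ copies of $H_\bullet$, with its homologous pairs $v_p(i,j,1),v_p(i,j,2)$, but replace all the gadget families by a single global hub shared by every copy: terminals $a_1,\dots,a_k,b_1,\dots,b_k$ and ordinary vertices $\gamma(1),\dots,\gamma(k)$. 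Inside $H_p$ every column $C_{p,j}$ becomes $K_{2k}$ minus the matching of homologous pairs; I join every $v_p(i,j,1)$ to $a_i$, every $v_p(i,j,2)$ to $b_i$, and every $v_p(i,j,2)$ to $\gamma(j)$. For the edge $e_p=(i^*,j^*)(i^{**},j^{**})\in E(H)$ I add only the single edge $v_p(i^*,j^*,1)\,v_p(i^{**},j^{**},2)$. The budget is $k':=2k(k-1)m$, and the graph has $k^{\Oh(1)}$ vertices.

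Correctness reduces to four local arguments. \emph{(i) Selector:} the $\{v_p(i,j,1):i\in[k]\}$ form a clique, each vertex joined to a distinct terminal $a_i$, so at most one survives (two survivors give a path $a_i\,v_p(i,j,1)\,v_p(i',j,1)\,a_{i'}$), and symmetrically for the $\{v_p(i,j,2):i\in[k]\}$ via the $b_i$; hence each column loses $\geq 2(k-1)$ vertices, and since $k'=2k(k-1)m$ it loses exactly that many, every $\gamma(j)$ survives, and no terminal is deleted. \emph{(ii) Alignment:} the non-homologous edges between the two layers of $C_{p,j}$ force its surviving layer-$1$ and layer-$2$ vertices to share a row index $f_p(j)$ (else some $a$-terminal meets some $b$-terminal). \emph{(iii) Propagation:} $\gamma(j)$ survives and is adjacent to the surviving layer-$2$ vertex of $C_{p,j}$ for every $p$, hence lies in the component of $b_{f_p(j)}$ for all $p$ simultaneously, forcing $f_1(j)=\dots=f_m(j)=:i_j$; the crucial point is that the component each of the $k$ hub vertices $\gamma(j)$ ends up in encodes the whole assignment $j\mapsto i_j$, so the $\Theta(k\log k)$ bits of the solution travel through an $\Oh(k)$-size interface. \emph{(iv) Edge check:} if the selection contains both endpoints of $e_p$, the edge $v_p(i^*,j^*,1)\,v_p(i^{**},j^{**},2)$ links the component of $a_{i^*}$ to that of $b_{i^{**}}$, two always-distinct terminals; this also handles edges of $H$ lying inside one row ($i^*=i^{**}$), which a direct layer-$1$-to-layer-$1$ edge would miss. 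Conversely, from an independent set $\{(i_j,j)\}_{j}$, deleting in every $H_p$ all but the pairs $\{v_p(i_j,j,1),v_p(i_j,j,2)\}_{j}$ is a valid deletion set of size exactly $k'$: one checks that in the remaining graph every connected component meets at most one terminal.

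For the pathwidth I would argue as in the pathwidth analysis of \cref{hardness:generic}: the hub $\{a_i\}\cup\{b_i\}\cup\{\gamma(j)\}$, of size $3k$, lies in every bag; since there is no edge between two distinct copies (they interact only through the hub) and, within a copy, distinct columns are non-adjacent apart from the one edge-gadget edge, one sweeps copy by copy and, inside a copy, column by column, a bag holding the hub, the current column ($2k$ vertices) and the single carried endpoint of that edge — width $\Oh(k)$, and with $|T|=2k$ this gives $p=\Oh(k)$. I expect step (iii) together with the budgeting to be the main obstacle: one must make propagation unbreakable while the global interface stays of size $\Oh(k)$, the danger being a solution that ``unhooks'' a column by deleting its $\gamma(j)$ (or more generally cuts cheaply around all terminals but one); this is ruled out by tuning $k'$ so that the selector already exhausts the whole budget, and one then has to verify that, given this, the only terminal--terminal paths that can arise in $G-X$ are exactly those caught by (i)--(iv) — in particular that the asymmetric layer-$1$-to-layer-$2$ wiring of the edge gadget is precisely what lets one extra edge per copy certify non-adjacency in $H$.
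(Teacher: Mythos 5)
Your reduction is correct and is essentially the paper's: the same base of $m$ copies of $H_\bullet$, the same column cliques ($K_{2k}$ minus the homologous matching) with row terminals forcing exactly one survivor per layer, the same budget $k'=2k(k-1)m$ that pins the deletions inside the columns, and the same non-terminal per-column hubs shared across all copies to propagate the row choice. The only substantive difference is the edge gadget: the paper keeps one terminal $r_i$ per row plus two global terminals $t,t'$ attached to the two layer-$2$ endpoints of $e_p$ (which must then be exempted from their row terminals), whereas you split each row terminal into a layer-$1$ terminal $a_i$ and a layer-$2$ terminal $b_i$ and use a single cross-layer edge so that the offending path runs between the always-distinct $a_{i^*}$ and $b_{i^{**}}$ --- both variants work and yield $|T|=\Oh(k)$ and pathwidth $\Oh(k)$.
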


  \begin{proof}
  We now reduce from \textsc{$k \times k$-Independent Set}.
  Again let $H$ be an $m$-edge \textsc{$k \times k$-Independent Set} instance.
  We build an equivalent \vmwc instance $(G,T,k' := 2(k-1)km)$, with $|T|=k+2$, by adding only $2k+2$ new vertices to the base $\bigcup_{p \in [m]}H_p$.
  We link every non-homologous pair of vertices within each column $C_{p,j}$ (for $p \in [m], j \in [k]$).
  We add two terminals $t, t' \in T$.
  For every edge $e_p=(i_p,j_p)(i'_p,j'_p) \in E(H)$, we make $v_p(i_p,j_p,2)$ and $v_p(i'_p,j'_p,2)$ adjacent.
  We also link $t$ to $v_p(i_p,j_p,2)$, and $t'$ to $v_p(i'_p,j'_p,2)$.

  We add $k$ terminals $r_1, \ldots, r_k \in T$.
  We link every vertex on an $i$-th row ($R_{p,i}$) to $r_i$, except if the vertex is already adjacent to $t$ or $t'$.
  This exception concerns the vertices $v_p(i_p,j_p,2)$ and $v_p(i'_p,j'_p,2)$.
  Finally we add $k$ (non-terminal) vertices $c_1, \ldots, c_k$.
  For each $p \in [m], j \in [k], i \in [k]$, we add an edge between $v_p(i,j,1)$ and $c_i$.
  This finishes the construction of $G$.
  The set of terminals is $T := \{t,t',r_1,\ldots,r_k\}$.
  We ask for a deletion set of size $k' := 2(k-1)km$.
  The pathwidth of $G$ is $\Oh(k)$, since it is obtained by adding $2k+2$ vertices ($\{t,t',r_1,\ldots,r_k\}$) to a graph satisfying the gadget specification of \cref{subsec:generic} (with ``empty'' row selector and propagation gadgets).

  We now show the correctness of this reduction.
  Assume that the graph $H$ admits an independent set $I := \{(i_1,1),(i_2,2),\ldots,(i_k,k)\}$.
  We claim that $X := \bigcup_{p \in [m],j \in [k], z \in [2]} H_m \setminus \{v_p(i_j,j,z)\}$ is a solution to the \vmwc instance.
  We first observe that the connected component of $G-X$ containing~$t$ (and similarly~$t'$) does not contain any other terminal.
  Indeed, since $I$ is an independent set,
  at most one of $v_p(i_p,j_p,2)$ and $v_p(i'_p,j'_p,2)$ is preserved in $G-X$ when $(i_p,j_p)(i'_p,j'_p)$ is an edge of $H$.
  Hence each vertex $v_p(i_p,j_p,2)$ or $v_p(i'_p,j'_p,2)$ that exists in $G-X$ has degree 1: it is adjacent only to $t$ or $t'$.
  So the connected component in $G-X$ of $t$ (resp.~$t'$) is a star centered at $t$ (resp.~$t'$) whose leaves are all in $\bigcup_{p \in [m]} H_p$, hence are non-terminals.
  We now observe that there is no path between $r_i$ and $r_{i'}$ (with $i \neq i'$) in $G-X$.
  Such a path would have to go through a vertex $c_j$.
  Indeed, no edge within a column $C_{p,j}$ is preserved in $G-X$ (nor the edges $v_p(i_p,j_p,2)v_p(i'_p,j'_p,2)$), so there is no other way to go from one row to another.
  But each vertex $c_j$ is adjacent to a single row in $G-X$, since we kept only one pair $v_p(i,j,1), v_p(i,j,2)$ per column $C_{p,j}$, and we made the same choice in every $H_p$.

  Let us now assume that $X$ is a solution to the \vmwc instance $(G,T,k')$.
  A first observation is that no edge within a column $C_{p,j}$ can be present in $G-X$, otherwise there is a $3$-edge path between a pair of terminals in $\{r_1, \ldots, r_k, t,t'\}$, since every edge within $C_{p,j}$ is between non-homologous vertices, and every vertex in $C_{p,j}$ is adjacent to a terminal.
  This implies that for each $p \in [m]$ and $j \in [k]$, we have $\{v_p(i_{p,j},j,1),v_p(i_{p,j},j,2)\} \subseteq C_{p,j} \setminus X$ for some $i_{p,j} \in [k]$.
  In fact, since at least $2(k-1)k$ vertices of $H_p$ must be removed for each $p \in [m]$, and the solution $X$ has size at most $2(k-1)km$, we have $C_{p,j} \setminus X = \{v_p(i_{p,j},j,1),v_p(i_{p,j},j,2)\}$.
  In particular, $X \subseteq \bigcup_{p\in[m]} H_p$, so $c_i \notin X$ for each $i \in [k]$.
  We now show that the $i_{p,j}$'s coincide for each $p\in [m]$.
  Assume for the sake of contradiction that $v_p(i,j,1)$ and $v_{p'}(i',j,1)$ are both present in $G-X$ with $p \neq p'$ and $i \neq i'$.
  Then $r_iv_p(i,j,1)c_jv_{p'}(i',j,1)r_{i'}$ is a path in $G-X$, a contradiction.
  Therefore $i_{1,j}= \ldots = i_{m,j}$.
  Let $i_j$ denote this common value.
  We claim that $\{(i_1,1),\ldots,(i_k,k)\}$ is an independent set in $H$.
  Suppose there is an edge $(i_j,j)(i_{j'},j') \in E(H)$ for distinct $j,j' \in [k]$.
  Then there is a path $tv_p(i_j,j,2)v_p(i_{j'},j',2)t'$ in $G$ for some $p \in [m]$, between the terminals $t$ and $t'$, a contradiction.
  \end{proof}
  
  \subsection{Lower bound for \mwc}\label{subsec:mwc-lower}

  To obtain the lower bound for \mwc, we reduce from \textsc{$k \times k$-Permutation Clique}.

  However, we note that reducing from \textsc{Semi-Regular $k \times k$-Permutation Clique}, where all the vertices of a column have the same degree towards another column, and there is no edge with both endpoints in the same row, would make the construction cleaner.
  So the first reflex is to try and show the same $2^{o(k \log k)}$ lower bound for this variant.
  Ensuring the semi-regularity condition can be done rather smoothly;
  it requires revisiting the grouping technique from, say, \textsc{$3$-Coloring}, and using known results on equitable colorings.
  An interested reader can find a complete proof in the appendix. 
  Nonetheless, getting rid of the ``horizontal'' edges (with both endpoints in the same row) in order to obtain an instance \textsc{$k \times k$-Permutation Clique}, while preserving the semi-regularity, is unnecessarily complex.
  In particular, the reduction from \textsc{$k \times k$-Clique} to \textsc{$k \times k$-Permutation Clique} presented in the seminal paper~\cite{Lokshtanov18} does not preserve semi-regularity.
  To prove the next theorem, we will instead directly reduce from \textsc{$k \times k$-Permutation Clique} and ``regularize'' the degree by some ad hoc gadgetry.
  
  \begin{theorem}\label{hardness:mwc}
    Unless the ETH fails, \mwc cannot be solved in time $2^{o(p \log p)}n^{\Oh(1)}$ on $n$-vertex graphs where $p = \pw + |T|$ is the sum of the pathwidth of the input graph and the number of terminals. 
  \end{theorem}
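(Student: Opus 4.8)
The plan is to reduce from \textsc{$k \times k$-Permutation Clique}: from an $m$-edge instance $H$ we build an equivalent \mwc instance $(G,T,k')$ with $|T|=\Theta(k)$ and $\pw(G)=\Oh(k)$, so that $p=\pw(G)+|T|=\Theta(k)$. Since \textsc{$k \times k$-Permutation Clique} has no $2^{o(k\log k)}k^{\Oh(1)}$-time algorithm under the ETH, a $2^{o(p\log p)}n^{\Oh(1)}$-time algorithm for \mwc would contradict the ETH. As in the proof of \Cref{hardness:vmwc}, the backbone of $G$ is the base $\bigcup_{p\in[m]}H_p$ of $m$ copies of $H_\bullet$, with each column $C_{p,j}$ turned into a structure that forces exactly one surviving homologous pair, and with a small family of terminals $r_1,\dots,r_k$ and auxiliary vertices $c_1,\dots,c_k$ (threaded through the copies exactly as in \Cref{hardness:vmwc}) that force the surviving pairs to define a permutation $(i_1,1),\dots,(i_k,k)$ that is moreover the same in every copy $H_p$. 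Copy $H_p$, associated with the $p$-th edge $e_p$ of $H$, is where one checks whether both endpoints of $e_p$ are selected.

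The delicate point is encoding the clique constraint by an \emph{edge} cut with a tight budget. Here I would use the arithmetic fact quoted in the introduction: in a $\Delta$-regular graph, $k$ vertices are incident to at least $\Delta k-\binom{k}{2}$ edges, with equality exactly when they form a clique. Concretely, attach to the vertex selected in each row $i$ a bundle of $\Delta$ ``charge'' edges running toward a dedicated terminal, arranged so that whenever the vertices selected in rows $i$ and $i'$ are adjacent in $H$, the two bundles share exactly one edge, and otherwise are disjoint. Separating all the charge-terminals and the $r_i$'s from each other then costs exactly the number of edges incident to the $k$ selected charge-sources, i.e.\ $\Delta k-e(R[S])$ where $S$ is the selection inside the charge graph $R$; this is minimized, and equal to a prescribed budget $k':=(\text{fixed cost of the backbone})+\Delta k-\binom{k}{2}$, precisely when $S$ is a clique of $H$. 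In the forward direction a clique selection realizes a cut of exactly this size; in the backward direction, a cut of size $\le k'$ must, by the column argument, correspond to a consistent permutation selection, and by tightness of the bound that selection must be a clique.

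For the count to be exact the charge graph $R$ must be $\Delta$-regular, but a \textsc{$k \times k$-Permutation Clique} instance is neither regular nor free of edges inside a row. Rather than modifying $H$ (the introduction notes the standard reduction to the permutation variant destroys (semi-)regularity), I would build a \emph{degree-equalizer} directly in $G$: pad each row's charge bundle with dummy edges to a private dummy terminal so every selected vertex ends up incident to exactly $\Delta=k-1$ charge edges irrespective of its $H$-degree, and route the within-row edges of $H$ so that they can never be selected simultaneously (or contribute the same fixed cost regardless), so that dummy/horizontal edges never create an illegitimate ``clique'' among a valid selection. The pathwidth bound then transfers directly: $G$ is the base (with ``empty'' row and propagation gadgets in the sense of \Cref{subsec:generic}) together with $\Oh(k)$ extra vertices and $\Oh(k)$ terminals, so the width-$\Oh(k)$ path-decomposition of \Cref{hardness:generic,hardness:vmwc} carries over, and moreover $|T|=\Theta(\pw)$, which also rules out $|T|^{o(\pw)}$-time algorithms.

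The main obstacle — flagged in the roadmap — is ruling out the ``cut close to all terminals but one'' solutions: a cut leaving one terminal in a huge component while cheaply isolating all the others, thereby side-stepping the clique arithmetic. The vertex-deletion trick of \Cref{hardness:vmwc} (hanging terminals off every vertex of a row so that any surviving column edge yields a short forbidden path) does not suffice, because now one cuts edges, not vertices. This is exactly what the ``counter-intuitive'' edge gadget must handle: it has to be designed so that, whichever terminal one tries to keep in the large component, the number of edges forced around the remaining terminals is still governed by the same $\Delta k-\binom{k}{2}$ bound, so that no terminal is ever cheap to isolate and the intended clique-selection behaviour is the unique way to meet the budget. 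Making this robust against all $\Theta(k)$ choices of favoured terminal, while keeping $|T|$ and $\pw$ in $\Oh(k)$ and the budget exactly tight, is the heart of the argument; the remaining parts — consistency of the selection across the copies, the pathwidth estimate, and the two directions of correctness — follow the template already set up in \Cref{hardness:generic,hardness:vmwc}.
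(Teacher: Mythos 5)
Your proposal correctly identifies the paper's strategy in outline: reduce from \textsc{$k \times k$-Permutation Clique}, reuse the base and the $r_i$/$c_j$ scaffolding from the \vmwc reduction, exploit the fact that $k$ vertices in a $\Delta$-regular graph cover $\Delta k - \binom{k}{2}$ edges exactly when they form a clique, and regularize degrees with an in-graph degree-equalizer rather than by preprocessing $H$. But all of this is already announced in the paper's roadmap; what the theorem actually requires, and what your write-up defers, is the construction itself, and you explicitly leave the two load-bearing pieces open. First, the selection mechanism: in an edge-cut problem you cannot ``keep one homologous pair per column.'' The paper instead makes each column an undeletable blob (weight-$(k'+1)$ edges to $c_j$), attaches every row vertex to its terminal $r_i$ by an edge of weight $h+1$, and sets the budget $k'=(h+1)(k-1)k(\mu+k^2)+h$ so that exactly $k-1$ heavy edges per column per copy can be cut; the surviving heavy edge encodes the selected row. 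Your phrase ``a structure that forces exactly one surviving homologous pair'' does not engage with this edge-versus-vertex distinction at all.

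Second, and more seriously, the edge gadget. Your ``charge bundles that share exactly one edge when the two selected vertices are adjacent'' is not a well-defined graph construction (two edge bundles emanating from distinct vertices cannot literally share an edge), and you concede that making it robust against the ``cut close to every terminal but one'' solutions ``is the heart of the argument.'' The paper's resolution is a concrete weighted triangle $x_p y_p z_p$ hung between $v_p(i_p,j_p)$ and $v_p(i'_p,j'_p)$, with weight-3 links to $r_{i_p}$ and $r_{i'_p}$, and --- crucially --- an \emph{undeletable} edge from $z_p$ to one additional shared terminal $t$. Because $t$ sits inside every gadget, every gadget must locally separate $t$ from $r_{i_p}$, $r_{i'_p}$, and the unselected endpoints, which pins the local cost at exactly $12$, $11$, or $9$ according to whether $0$, $1$, or $2$ endpoints are selected; summing gives $12m-k\Delta-s$, minimized at $s=\binom{k}{2}$. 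No choice of ``favoured terminal'' escapes this, precisely because the unavoidable cost is anchored at $t$ rather than at the $r_i$'s. The degree-equalizer weights ($11\delta$ versus $2\cdot 6\delta$) are calibrated to the same $11$-versus-$12$ dichotomy. Without exhibiting such a gadget and verifying the case analysis in both directions, the proposal is a plan rather than a proof.
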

  
  \begin{proof}
    We reduce from an instance $H$ of \textsc{$k \times k$-Permutation Clique}, 
    so we may assume that there is no edge of $H$ with both endpoints in the same row.
    Let $\mu$ be the number of edges of $H$, and let $\Delta$ be the maximum degree of vertices of $H$.
    We associate each $v \in V(H)$ to the non-negative integer $\delta(v) := \Delta - d_H(v)$, where $d_H(v)$ is the degree of $v$.
    It is useful to consider the graph $H'$ obtained from $H$ by attaching $\delta(v)$ pendant leaves to each $v \in V(H)$, where each vertex in $V(H)$ has degree $\Delta$ in $H'$.
    We set $m := k^2 \Delta$, and observe that $m \geqslant \mu$ corresponds to the number of edges in $H'$.
    
  We build an equivalent \mwc instance $(G,T,k')$, with $|T|=k+1$, by adding a polynomial number of vertices to the base $\bigcup_{p \in [\mu+k^2]}H_p$.
  We do not need the vertices $v_p(i,j,2)$, so we rename every $v_p(i,j,1)$ into simply $v_p(i,j)$.
  Now $R_{p,i}$ is the set $\{v_p(i,1), \ldots, v_p(i,k)\}$ and $C_{p,j}$ is $\{v_p(1,j), \ldots, v_p(k,j)\}$.
  
  We encode weighted edges in the following way.
  When we say that we add an edge of weight $w \in \mathbb N$ between two vertices $u, v$, we mean that we add $w$ ``parallel'' 2-edge paths between $u$ and $v$.
  None of the introduced vertices are terminals, so the instance behaves equivalently as with the weighted edge.
  Thus, for the sake of simplicity, we will treat these parallel paths as a weighted edge.
  If we require an edge to be ``undeletable'', we give it weight $k'+1$, just above the total budget.
  All the weights of the construction are encoded with polynomially many vertices and unit-weight edges.
  Therefore the lower bound does apply to the unweighted version of \mwc.

  We set $h := 12m - k\Delta - {k \choose 2}$ and $k' := (h+1)(k-1)k(\mu+k^2) + h$.
  We add $k$ terminals $r_1, \ldots, r_k \in T$, and we link every vertex in the $i$-th row (that is, in a set $R_{p,i}$) to $r_i$ by an edge of weight $h+1$.
  We add $k$ non-terminals $c_1, \ldots, c_k$ and for each $p \in [\mu+k^2], j \in [k], i \in [k]$, we add an edge of weight $k'+1$ (an undeletable edge) between $v_p(i,j)$ and $c_i$.  
  For every $e_p=(i_p,j_p)(i'_p,j'_p) \in E(H)$ with $p \in [\mu]$, we add the following edge gadget between $v_p(i_p,j_p)$ and $v_p(i'_p,j'_p)$.
  We first build a 5-vertex path where the edge weights are, from one endpoint to the other, $5, 3, 3, 5$.
  We denote by $z_p$ the central vertex, and we link the second vertex, $x_p$, and the fourth vertex, $y_p$, by an edge of weight 3.
  We link the first vertex (one endpoint) of the gadget to $v_p(i_p,j_p)$ and to $r_{i_p}$ by edges of weight 3, and the last vertex (the other endpoint) to $v_p(i'_p,j'_p)$ and to $r_{i'_p}$ by edges of weight 3.
  Finally we link $z_p$ to an additional terminal $t$ (common to every $p \in [\mu]$) by an edge of weight $k'+1$.
  See \cref{fig:edge-gadget-mwc} for an illustration of the edge gadget and how it is attached to the terminals.
  \begin{figure}[h!]
    \centering
    \begin{tikzpicture}
      \node[draw,circle,inner sep=0.02cm] (c) at (0,0) {$z_p$} ;
      \node[draw,circle,inner sep=0.02cm] (cr) at (1,0) {$y_p$} ;
      \node[draw,circle,inner sep=0.02cm] (cl) at (-1,0) {$x_p$} ;

      \node[draw,circle] (wij) at (-2.5,-1) {} ;
      \node[draw,circle] (wijp) at (2.5,1) {} ;
      \node[draw,rounded corners] (vij) at (-4.25,-1) {$v_p(i_p,j_p)$} ;
      \node[draw,rounded corners] (vijp) at (4.25,1) {$v_p(i'_p,j'_p)$} ;

      \draw (cl) to node[midway, below] {3} (c) ;
      \draw (c) to node[midway, below] {3} (cr) ;
      \draw (cl) to [bend left = 30] node[midway, above] {3} (cr) ;

      \draw (cl) to node[midway, below] {5} (wij) ;
      \draw (cr) to node[midway, below] {5} (wijp) ;

      \draw (vij) to node[midway, below] {3} (wij) ;
      \draw (vijp) to node[midway, below] {3} (wijp) ;

      \node[draw,double,double distance=1.35pt,circle,inner sep=0.05cm] (ri) at (-4,-3) {$r_{i_p}$} ;
      \node[draw,double,double distance=1.35pt,circle,inner sep=0.05cm] (rip) at (4,-3) {$r_{i'_p}$} ;
      \node[draw,double,double distance=1.35pt,circle] (t) at (0,-3) {$t$} ;

      \draw (vij) to node[midway, left] {$h+1$} (ri) ;
      \draw (wij) to node[midway, right] {3} (ri) ;
      \draw (vijp) to node[midway, right] {$h+1$} (rip) ;
      \draw (wijp) to node[midway, left] {3} (rip) ;
      \draw (c) to node[midway, right] {$k'+1$} (t) ;
    \end{tikzpicture}
    \caption{The edge gadget for \mwc.}
    \label{fig:edge-gadget-mwc}
  \end{figure}
  So far we have added edge gadgets to the first $\mu$ copies $H_1, \ldots, H_\mu$.
  We now describe what we (potentially) add to the last $k^2$ copies $H_{\mu+1}, \ldots, H_{\mu+k^2}$.
  We put an arbitrary total order over $V(H)$, say, the natural $\leqslant$ where $(i,j)$ is interpreted as $n(i,j)=i+(j-1)k$.
  We attach to $v_{\mu+n(i,j)}(i,j)$, $r_i$, and $t$ the following simple gadget, called a \emph{degree-equalizer}, which can be seen as a degenerate case of an edge gadget with multiplicity $\delta((i,j))$ (henceforth we simply write $\delta(i,j)$ for the sake of legibility).
  We add a vertex $w_{\mu+n(i,j)}(i,j)$, and link it to $t$ by an edge of weight $11 \delta(i,j)$, and to $v_{\mu+n(i,j)}(i,j)$ and $r_i$ by edges of weight $6 \delta(i,j)$ each.
  This finishes the construction of $(G,T := \{r_1, \ldots, r_k,t\},k' := (h+1)(k-1)k(\mu+k^2) + h)$.  

  The pathwidth of $G$ is $\Oh(k)$ following the arguments for the \vmwc construction.

  We now show the correctness of the reduction.
  Assume that there is a clique $C := \{(i_1,1), \ldots, (i_k,k)\}$ in $H$, with $\{i_1,\ldots,i_k\}=[k]$.
  We build the following edge deletion-set $X$ for the \mwc instance.
  We start by including in $X$ all the edges of weight $h+1$ between $r_i$ and $v_p(i,j)$ ($p \in [\mu+k^2]$) such that $(i,j) \notin C$.
  This represents $k(k-1)(\mu+k^2)$ weighted edges, and $(h+1)k(k-1)(\mu+k^2)$ unit-weight edges.

  We distinguish three cases for the edge gadget of every $e_p=(i_p,j_p)(i'_p,j'_p)$ ($p \in [\mu]$).
  If $\{(i_p,j_p),(i'_p,j'_p)\} \cap C = \emptyset$ (i.e., $e_p$ has no endpoint in $C$), we add to $X$ the four weight-3 edges incident to $v_p(i_p,j_p)$, $r_{i_p}$, $v_p(i'_p,j'_p)$, and $r_{i'_p}$; a total of 12 edges.
  If $|\{(i_p,j_p),(i'_p,j'_p)\} \cap C| = 1$, say, without loss of generality, that $(i_p,j_p) \in C$, then we add the weight-5 edge incident to $x_p$ and the two weight-3 edges incident to $v_p(i'_p,j'_p)$ and $r_{i'_p}$.
  This consists of 11 edges in total.
  In the symmetric case $(i'_p,j'_p) \in C$, we would remove the weight-5 edge incident to $y_p$ and the two weight-3 edges incident to $v_p(i_p,j_p)$ and $r_{i_p}$.
  Finally if $|\{(i_p,j_p),(i'_p,j'_p)\} \cap C| = 2$, we add the 9 edges of the weighted triangle $x_py_pz_p$ to $X$.

  For every degree-equalizer gadget attached to $H_{\mu+n(i,j)}$, we add to $X$ the weight-$11 \delta(i,j)$ edge incident to $t$ if $(i,j) \in C$, and the two weight-$6 \delta(i,j)$ edges incident to $w_{\mu+n(i,j)}(i,j)$ if $(i,j) \notin C$.
  Note that these numbers of edges correspond to what we would remove in $\delta(i,j)$ copies of an edge gadget where the other endpoint is not in $C$.
  This finishes the construction of $X$.
  
  There are ${k \choose 2}$ edges of $H'$ with both endpoints in $C$, there are $k \Delta - 2 {k \choose 2}$ edges with exactly one endpoint in $C$, and $m-k \Delta+{k \choose 2}$ edges with no endpoint in $C$.
  So there are $9{k \choose 2}+11(k \Delta - 2 {k \choose 2})+12(m-k \Delta+{k \choose 2})=12m-k\Delta-{k \choose 2}=h$ edges added to $X$ from edge and degree-equalizer gadgets.
  Thus $X$ has size $(h+1)k(k-1)(\mu+k^2)+h = k'$ as imposed.
  Let $G'$ be the graph $(V(G),E(G) \setminus X)$.
  We show that every connected component of $G'$ contains at most one terminal.
  Observe that in $G' - \{t\}$, each vertex $z_p$ is in a connected component contained in the edge gadget of $e_p$ (and, in particular, not containing a terminal).
  Since $t$ is only adjacent (by weighted edges) to the vertices $z_p$ and $w_{\mu+n(i,j)}(i,j)$, it follows that the connected component in $G'$ containing $t$ has no other terminals.
  Note furthermore that the removal of the edges in $X$ disconnects every pair $v_p(i_p,j_p), v_p(i'_p,j'_p)$ in the edge gadget of $e_p = (i_p,j_p)(i'_p,j'_p)$ for $p \in [\mu]$.
  Thus the vertices reachable from $r_i$ in $G'$ are $\{c_j\} \cup \bigcup_{p \in [m]} C_{p,j}$, such that $j$ is unique integer of $[k]$ with $i_j=i$, as well as some non-terminal vertices in some edge and degree-equalizer gadgets.
  In particular there is no path between $r_i$ and $r_{i'}$, with $i \neq i'$, in $G'$.
  Thus $X$ is a solution.

  Let us now assume that the $\mwc$ instance $(G,T,k')$ has a solution $X$, and let $G'$ be $(V(G),E(G) \setminus X)$.
  A first observation is that there is a path in $G'$ between any pair of vertices in the $j$-th column, say $v_p(i,j)$ and $v_{p'}(i',j)$, since there are undeletable edges between $c_j$ and each vertex $v_p(i,j)$. 
  Thus there is a component of $G'$ containing $\bigcup_{p \in [\mu+k^2]} C_{p,j}$, for each $j \in [k]$, and this component contains at most one terminal.
  With a budget of $(h+1)k(k-1)(\mu+k^2)+h$, one can remove at most $k(k-1)(\mu+k^2)$ edges of weight $h+1$.
  Since no two edges $r_iv_p(i,j)$ and $r_{i'}v_{p}(i',j)$ can remain in $G'$, for distinct $i,i' \in [k]$, $j \in [k]$, and $p \in [\mu+k^2]$, at least $k(k-1)$ edges of weight $h+1$ incident to a vertex in $H_p$ must be in $X$, for each $p \in [\mu+k^2]$, for a total of at least $k(k-1)(\mu+k^2)$ edges of weight $h+1$.
  Now the only possibility is that, for each $j \in [k]$, there exists an $i_j \in [k]$ such that $X$ contains all the edges of weight $h+1$ from $\bigcup_{p \in [\mu+k^2]}C_{p,j}$ to $\{r_1,\ldots,r_k\}$ except those incident to $r_{i_j}$.
  We set $C := \{(i_1,1), \ldots, (i_k,k)\}$, and we will now show that $C$ is a clique in $H$.
  In particular $\{i_1, \ldots, i_k\}=[k]$ since there is no edge of $H$ with endpoints in the same row.

  First we consider an edge $e_p=(i_p,j_p)(i'_p,j'_p) \in E(H)$ such that $\{(i_p,j_p),(i'_p,j'_p)\} \cap C = \emptyset$.
  Note that, in this case, $v_p(i_p,j_p)$ (resp.~$v_p(i'_p,j'_p)$) is, in $G'$, in the connected component of $r_{i_{j_p}} \neq r_{i_p}$ (resp.~$r_{i_{j'_p}} \neq r_{i'_p}$).
  We then need to separate the seven pairs: $(r_{i_p},v_p(i_p,j_p))$, $(t,v_p(i_p,j_p))$, $(r_{i_p},t)$, $(r_{i'_p},v_p(i'_p,j'_p))$, $(t,v_p(i'_p,j'_p))$, $(r_{i'_p},t)$, and $(r_{i_p},r_{i'_p})$.
  This requires 12 edge deletions.

  We now consider an edge $e_p=(i_p,j_p)(i'_p,j'_p) \in E(H)$ such that $|\{(i_p,j_p),(i'_p,j'_p)\} \cap C| = 1$.
  We assume that $(i_p,j_p) \in C$ (the other case is symmetric).
  In this case, $v_p(i'_p,j'_p)$ is, in $G'$, in the connected component of~$r_{i_{j'_p}} \neq r_{i'_p}$.
  We then need to separate the six pairs: $(t,v_p(i_p,j_p))$, $(r_{i_p},t)$, $(r_{i'_p},v_p(i'_p,j'_p))$, $(t,v_p(i'_p,j'_p))$, $(r_{i'_p},t)$, and $(r_{i_p},r_{i'_p})$.
  This requires 11 edge deletions: the weight-5 edge incident to $x_p$ and the two weight-3 edges incident to $v_p(i'_p,j'_p)$ and to $r_{i'_p}$.

  Finally let us assume that $e_p=(i_p,j_p)(i'_p,j'_p) \in E(H)$ is such that $|\{(i_p,j_p),(i'_p,j'_p)\} \cap C| = 1$.
  Here we need to separate the five pairs: $(t,v_p(i_p,j_p))$, $(r_{i_p},t)$, $(t,v_p(i'_p,j'_p))$, $(r_{i'_p},t)$, and $(r_{i_p},r_{i'_p})$.
  This requires 9 edge deletions: the three weight-3 edges in the triangle $x_py_pz_p$.

  We now turn to the degree-equalizer gadgets.
  If $(i,j) \notin C$, then we need to separate the three pairs $(r_i,v_{\mu+n(i,j)}(i,j))$, $(t,v_{\mu+n(i,j)}(i,j))$, and $(r_i,t)$.
  This requires $12 \delta(i,j)$ edge deletions (the weighted edges $r_iw_{\mu+n(i,j)}(i,j)$ and $v_{\mu+n(i,j)}(i,j)w_{\mu+n(i,j)}(i,j)$).
  If on the contrary $(i,j) \in C$, we only need to separate the two pairs $(t,v_{\mu+n(i,j)}(i,j))$ and $(r_i,t)$.
  This requires $11 \delta(i,j)$ deletions (the weighted edge $tw_{\mu+n(i,j)}(i,j)$).

  We denote by $s$ the number of edges in $H[C]$.
  Since the edge and degree-equalizer gadgets are pairwise edge-disjoint, what we have shown implies that $X$ contains at least $9s+11(k\Delta-2s)+12(m-k\Delta+s)=12m-k\Delta-s$ edges in the edge gadgets.
  As $X$ is of size at most $k'$, we have that $s$ has to be equal to ${k \choose 2}$.
  This implies that $C$ is a clique.
  \end{proof}

By the simple reduction from \mwc to \sfes, given in the introduction, we obtain the following as a corollary.  
\begin{theorem}\label{hardness:sfes}
  Unless the ETH fails, \sfes cannot be solved in time $2^{o(p \log p)}n^{\Oh(1)}$ on $n$-vertex graphs where $p = \pw + |S|$ is the sum of the pathwidth of the input graph and the number of undeletable (terminal) edges.
\end{theorem}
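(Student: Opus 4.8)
The plan is to reuse essentially verbatim the polynomial-time reduction from \mwc to \sfes announced in the introduction (see the remark in \cref{parampres}), and simply to verify that it transports the parameter bounds of \cref{hardness:mwc}. Given a \mwc instance $(G,T,b)$, I would build the \sfes instance $(G',S,b)$ where $G'$ is obtained from $G$ by adding one new vertex $v$ adjacent to every terminal of $T$, and $S$ is the set of the $|T|$ edges incident to $v$ (these edges being thereby undeletable).

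For correctness, the first step is to note that any cycle of $G'$ through an edge of $S$ must pass through $v$, and since \emph{all} edges at $v$ lie in $S$, such a cycle is built from two edges $vt_1$ and $vt_2$ with $t_1\neq t_2$ in $T$ together with a path from $t_1$ to $t_2$ in $G'-v=G$; conversely, every path in $G$ between two distinct terminals closes up through $v$ into such a cycle. As the deletable edges of the \sfes instance are exactly the edges of $E(G)=E(G')\setminus S$, a set $F\subseteq E(G')\setminus S$ of size at most $b$ kills all cycles through an edge of $S$ if and only if, viewed as a subset of $E(G)$, it separates every pair of terminals of $T$ in $G$. Hence $(G',S,b)$ is a yes-instance of \sfes precisely when $(G,T,b)$ is a yes-instance of \mwc.

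For the parameters, $G'$ has $|V(G)|+1$ vertices, $|S|=|T|$, and $\pw(G')\leqslant \pw(G)+1$ (add $v$ to every bag of a path decomposition of $G$). Writing $p':=\pw(G')+|S|$ for the \sfes instance and $p:=\pw(G)+|T|$ for the \mwc instance, we get $p'\leqslant p+1$; on the instances produced by \cref{hardness:mwc} both quantities are $\Theta(\pw)$, i.e.\ linear in the \textsc{$k\times k$-Permutation Clique} parameter $k$. Consequently, a $2^{o(p'\log p')}n^{\Oh(1)}$-time algorithm for \sfes would, composed with this linear-time reduction, give a $2^{o(p\log p)}n^{\Oh(1)}$-time algorithm for \mwc, contradicting \cref{hardness:mwc} and hence the ETH. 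Since the reduction introduces no weights, the lower bound holds for unweighted \sfes.

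There is essentially no obstacle here: the reduction is the one already described, and the only points needing (routine) care are the equivalence between ``hitting every $S$-traversing cycle'' and ``separating every terminal pair'' --- which hinges on the edges of $S$ being exactly the edges at $v$ and thus undeletable --- and the trivial fact that adding a single vertex raises pathwidth by at most one, so the composed parameter $p'$ stays linear in $k$.
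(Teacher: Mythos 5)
Your proposal is correct and is exactly the paper's argument: the paper derives \cref{hardness:sfes} as a corollary of \cref{hardness:mwc} via the same reduction (add a vertex $v$ adjacent to all terminals and let $S$ be the edges at $v$), and you have merely written out the routine correctness and parameter-preservation details that the paper leaves implicit.
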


It is not difficult to adapt the construction of Theorem~\ref{hardness:mwc} for the directed variant of \mwc.  
\begin{theorem}\label{hardness:dmwc}
  Unless the ETH fails, \dmwc cannot be solved in time $2^{o(\pw \log \pw)}n^{\Oh(1)}$ on $n$-vertex directed graphs whose underlying undirected graph has pathwidth $\pw$.
\end{theorem}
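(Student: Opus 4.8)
The plan is to derive \cref{hardness:dmwc} by a short reduction from \vmwc using \cref{hardness:vmwc}; morally this is ``orienting'' the \vmwc-style construction behind \cref{hardness:mwc}, but phrasing it as a black-box reduction both shortens the argument and sidesteps a pitfall discussed below. The tool is the classical vertex-splitting transformation from node multiway cut to the (arc-deletion) problem \dmwc. Given a \vmwc instance $(G,T,k')$, build a directed graph $G^\star$: replace every non-terminal $v$ by two vertices $v_{\mathrm{in}},v_{\mathrm{out}}$ joined by a \emph{deletable} unit-weight arc $v_{\mathrm{in}}\to v_{\mathrm{out}}$; keep each terminal as a single vertex; replace every edge $uv$ of $G$ by the two \emph{undeletable} reversing arcs $u_{\mathrm{out}}\to v_{\mathrm{in}}$ and $v_{\mathrm{out}}\to u_{\mathrm{in}}$, with the convention $v_{\mathrm{in}}=v_{\mathrm{out}}=v$ whenever $v\in T$. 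Each undeletable arc is realised by a bundle of $k'+1$ parallel directed paths of length two (introducing only non-terminal, degree-two vertices), so that the lower bound applies to the unit-weight, unweighted \dmwc. The terminal set of $G^\star$ is $T$ and the budget is $k'$.

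For correctness, a directed path in $G^\star$ between two terminals projects onto a walk in $G$ whose internal vertices are exactly those $v$ for which the path traverses $v_{\mathrm{in}}\to v_{\mathrm{out}}$; so if $S$ is a node multiway cut of $(G,T)$ with $|S|\le k'$, deleting $\{v_{\mathrm{in}}\to v_{\mathrm{out}}:v\in S\}$ leaves no directed terminal-to-terminal path in $G^\star$. Conversely any solution $X$ has size at most $k'$, hence cannot destroy an undeletable bundle (which would cost at least $k'+1$), so we may assume $X$ consists only of arcs $v_{\mathrm{in}}\to v_{\mathrm{out}}$; writing $S:=\{v:v_{\mathrm{in}}\to v_{\mathrm{out}}\in X\}$, a path in $G-S$ between two terminals would lift to a directed terminal-to-terminal path in $G^\star-X$, so $S$ is a node multiway cut of size at most $k'$. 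For the pathwidth: replacing in a width-$w$ path decomposition of $G$ each occurrence of a non-terminal $v$ by $\{v_{\mathrm{in}},v_{\mathrm{out}}\}$, and inserting each bundle vertex into a single existing bag and immediately removing it, yields a path decomposition of the underlying undirected graph of $G^\star$ of width $\Oh(w)$, while $|T|$ is unchanged. Applying this to the instances produced by \cref{hardness:vmwc}, which have $\pw+|T|=\Oh(k)$, gives \dmwc instances whose underlying undirected graph has pathwidth $\Oh(k)$; a $2^{o(\pw\log\pw)}n^{\Oh(1)}$ algorithm for \dmwc would then solve \textsc{$k \times k$-Independent Set} in time $2^{o(k\log k)}k^{\Oh(1)}$, contradicting the ETH.

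The step I expect to be the real obstacle is seeing why one should \emph{not} merely reorient the edges of the \mwc construction of \cref{hardness:mwc}. Replacing an undirected edge by a pair of antiparallel arcs wrecks the budget calibration, because a directed multiway cut may sever a vertex pair in only one direction: in the column structure of \cref{hardness:mwc}, deleting just the arcs from the column vertices back to the row-terminals already kills every terminal-to-terminal path in the relevant direction, at roughly half the cost of the intended ``symmetric'' deletion set, and the encoding of the selected row collapses. The vertex-splitting reduction escapes this precisely because the only deletable arcs of $G^\star$ are the vertex-arcs $v_{\mathrm{in}}\to v_{\mathrm{out}}$, which forces a directed multiway cut to be an honest vertex set; since the underlying graph is symmetric, directed reachability between two terminals coincides with connectivity, so the budget accounting of \cref{hardness:vmwc} transfers with no slack. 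A construction genuinely ``adapted from \cref{hardness:mwc}'' remains possible, but it would need extra gadgetry making such one-directional shortcuts as costly as the symmetric solution, which \cref{hardness:vmwc} renders unnecessary.
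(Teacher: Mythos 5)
The paper does not actually give a proof of \cref{hardness:dmwc}: it only remarks, in one sentence, that the construction of \cref{hardness:mwc} can be adapted to the directed setting. Your argument is therefore both a complete proof and a genuinely different route. Instead of orienting the \mwc gadgets you reduce from \vmwc (using \cref{hardness:vmwc}) via the classical vertex-splitting transformation, so that the only effectively deletable arcs are the vertex-arcs $v_{\mathrm{in}}\to v_{\mathrm{out}}$, with undeletable arcs realised as bundles of $k'+1$ parallel length-two directed paths. I checked the three steps: a node multiway cut lifts to an arc cut of the same size; an arc cut of size at most $k'$ cannot destroy a bundle, so every bundle keeps a surviving directed path and the set $S$ of split vertices whose vertex-arc is deleted must be a node multiway cut of size at most $k'$; and the underlying undirected graph of $G^\star$ has pathwidth $\Oh(\pw(G))$, giving pathwidth $\Oh(k)$ on the instances of \cref{hardness:vmwc}. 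All of this is sound (and since $|T|$ is unchanged, you in fact also get the stronger bound in terms of $\pw+|T|$). What your approach buys is that it is black-box --- no re-verification of the delicate budget accounting of \cref{hardness:mwc} under orientation --- and your side remark about why naive symmetrization fails is correct and worth having on record: in the bidirected \mwc instance one can delete only the arcs entering the terminals $r_i$, at cost roughly $(h+1)k^2(\mu+k^2)$, which for $k>2$ undercuts the $2(h+1)k(k-1)(\mu+k^2)$ cost of the symmetrized intended solution, so the adaptation the paper gestures at really does need extra gadgetry. One phrasing nit: ``we may assume $X$ consists only of vertex-arcs'' is slightly loose --- the correct statement, which is what your lifting argument actually uses, is that arcs of $X$ lying inside bundles are irrelevant because each bundle retains an intact directed path.
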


\section{Slightly superexponential algorithms}\label{sec:algorithms}

In this section, we present $2^{\Oh(\tw \log \tw)}n^3$-time algorithms for the weighted variants of the considered problems with the exception of \shortect.

	We first present in Theorem~\ref{thm:algosoct} a $2^{\Oh(\tw \log \tw)}n^3$-time algorithm for \ssoct.
	Then, we show that with simple modifications this algorithm can solve \ssfvs.
	We deduce the algorithms for the other problems by reducing these problems to the weighted variant of \ssfvs.

	Let us focus on the \ssoct problem. 
	For a graph $G$ and a vertex set $S$ of $G$, we say that $G$ is \emph{$S$-bipartite} if it has no odd cycle containing a vertex of $S$.
	Solving \ssoct is equivalent to find an $S$-bipartite induced subgraph of maximum size.
	The following characterization of $S$-bipartite graphs will be useful.

\begin{lemma}\label{lem:Sbipartite}
	A graph $G$ is $S$-bipartite if and only if 
	for every block $B$ of $G$, either $B$ has no vertex of $S$, or it is bipartite.
\end{lemma}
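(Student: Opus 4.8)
The statement is a biconditional, so I would prove the two directions separately, with the backward direction being the one that needs a little care. The key structural fact to exploit is that every cycle of $G$ lives entirely inside a single block: a cycle is $2$-connected, hence contained in a maximal $2$-connected subgraph. This is the bridge between the global condition ($S$-bipartiteness) and the block-local condition.

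\emph{Forward direction.} Suppose $G$ is $S$-bipartite, and let $B$ be a block of $G$ that contains a vertex of $S$, say $s \in V(B) \cap S$. I want to show $B$ is bipartite. If not, $B$ contains an odd cycle $C$. If $C$ passes through $s$ we are immediately done, so the work is to produce an odd cycle through $s$ from an odd cycle somewhere in $B$. Here I would use $2$-connectivity of $B$: since $B$ is $2$-connected and has at least three vertices (a block on at most two vertices is a single edge and trivially bipartite, so we may assume $|V(B)| \geq 3$), there are two vertex-disjoint paths from $s$ to $C$ (by Menger's theorem / the fan lemma), meeting $C$ at two distinct vertices $u,v$. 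These two paths together with the two $u$--$v$ arcs of $C$ give two cycles through $s$ whose ``symmetric difference'' is $C$; since $C$ is odd, exactly one of these two cycles through $s$ is odd. That odd cycle through $s \in S$ contradicts $S$-bipartiteness.

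\emph{Backward direction.} Suppose every block of $G$ either avoids $S$ or is bipartite, and suppose for contradiction that $G$ has an odd cycle $C$ through some vertex $s \in S$. As noted, $C$ is a $2$-connected subgraph, so it lies inside a single block $B$ of $G$. Then $B$ contains $s \in S$, so by hypothesis $B$ is bipartite — but $B$ contains the odd cycle $C$, a contradiction. Hence no such cycle exists and $G$ is $S$-bipartite.

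\emph{Main obstacle.} The only non-routine step is the argument in the forward direction that moves an odd cycle inside a block to an odd cycle through a prescribed vertex $s$ of that block. I would phrase this cleanly via the fan lemma (two vertex-disjoint $s$--$V(C)$ paths in the $2$-connected graph $B$) and then the parity bookkeeping: writing $C = P_1 \cdot P_2$ split at the two attachment points $u,v$, and $Q_1, Q_2$ for the two disjoint $s$-paths, the closed walks $Q_1 \cdot P_1 \cdot \overline{Q_2}$ and $Q_1 \cdot P_2 \cdot \overline{Q_2}$ are cycles through $s$ whose lengths sum, modulo $2$, to $|C|$; since $|C|$ is odd, one of them is an odd cycle through $s$. (If the two $s$-paths happen to meet $C$ in a way that makes a closed walk not a simple cycle, one can instead observe that a closed walk of odd length always contains an odd cycle, which suffices.) Everything else is immediate from the definition of a block and the observation that cycles are $2$-connected.
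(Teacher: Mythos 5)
Your proposal is correct and follows essentially the same route as the paper: the backward direction uses the fact that every cycle lies within a single block, and the forward direction uses two internally disjoint paths from $s$ to an odd cycle $C$ in the block (the fan lemma) together with the parity argument that one of the two resulting cycles through $s$ must be odd. The only cosmetic difference is that you explicitly dispose of blocks on at most two vertices and note the closed-walk caveat, both of which the paper leaves implicit.
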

\begin{proof}		
	($\Rightarrow$)
	Assume toward a contradiction that $G$ is $S$-bipartite and that a block $B$ of $G$ contains a vertex $s\in S$ and $B$ is not bipartite.
	Because $B$ is not bipartite, there exists an odd cycle $C$ in $B$.
	Since $G$ is $S$-bipartite by assumption, $C$ does not contain $s$.
	
	Since $B$ is 2-connected and has at least 3 vertices, there exist two paths $P_{sc}$ and $P_{c's}$ between $s$ and two distinct vertices $c,c'$ of $C$ such that the internal vertices of $P_{sc}$ and $P_{c's}$ and the vertices of $C$ are pairwise distinct.
	Let $P_{cc'}$ and $\widehat{P}_{cc'}$ be the two paths between $c$ and $c'$ in $C$.
	The concatenations $C_1=P_{sc}\cdot P_{cc'} \cdot P_{c's}$ and $C_2=P_{sc}\cdot \widehat{P}_{cc'} \cdot P_{c's}$ are two $S$-traversing cycles.
	Since $C$ is an odd cycle, the parity of $P_{cc'}$ and $\widehat{P}_{cc'}$ are not the same.
	Hence, one of the two cycles $C_1$ and $C_2$ is an odd $S$-traversing cycle. This yields a contradiction.
	
	\medskip
	
	($\Leftarrow$) Assume that $G$ is not $S$-bipartite.
	Then, $G$ contains an odd $S$-traversing cycle $C$. This cycle is contained in a block $B$ of $G$.
	Thus $G$ has a block that is not bipartite and that contains at least one vertex in $S$.
\end{proof}

One can easily modify the proof of the first direction of Lemma \ref{lem:Sbipartite} to prove the following fact.

\begin{fact}\label{fact:notbipartite}
	If a graph $G$ is 2-connected and not bipartite, then there exists an odd path and an even path between every pair of vertices.
\end{fact}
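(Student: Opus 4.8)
The plan is to mimic the argument for the forward direction of \cref{lem:Sbipartite}, which, given a vertex $s$ in a $2$-connected graph containing an odd cycle $C$ avoiding $s$, produced two cycles through $s$ of opposite parity (hence one odd). Here the goal is dual: given distinct vertices $u,v$ of a $2$-connected non-bipartite graph $G$, I want a $u$--$v$ path of each parity. I would do this in two stages, invoking twice the same consequence of $2$-connectivity used in the proof of \cref{lem:Sbipartite} (namely: from any vertex $x$ to any vertex set $U$ of size at least two with $x\notin U$, there are two paths meeting only at $x$, each internally disjoint from $U$, ending at distinct vertices of $U$).

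\emph{Stage 1: an odd cycle through $u$.} Since $G$ is non-bipartite it has an odd cycle $C_0$. If $u\in V(C_0)$, set $C:=C_0$. Otherwise, exactly as in the proof of \cref{lem:Sbipartite} (using $2$-connectivity and $|V(C_0)|\ge 3$), there are two paths $Q_1,Q_2$ from $u$ to $V(C_0)$ meeting only at $u$, internally disjoint from $C_0$, and ending at distinct vertices $c_1,c_2$ of $C_0$. The two $c_1$--$c_2$ arcs $A,A'$ of $C_0$ satisfy $|E(A)|+|E(A')|=|E(C_0)|$, which is odd, so $Q_1 A \overline{Q_2}$ and $Q_1 A' \overline{Q_2}$ are two cycles through $u$ whose lengths differ by an odd number; let $C$ be whichever one is odd. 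In all cases $C$ is an odd cycle containing $u$.

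\emph{Stage 2: routing $v$ onto $C$.} If $v\in V(C)$, the two $u$--$v$ arcs of $C$ have lengths summing to $|E(C)|$, which is odd, so one is an odd $u$--$v$ path and the other even. If $v\notin V(C)$, apply the same consequence of $2$-connectivity to $v$ and $V(C)$ to obtain two paths from $v$ to $V(C)$ meeting only at $v$, internally disjoint from $C$, ending at two distinct vertices of $C$. At most one of these endpoints equals $u$, so I may choose one such path $S$ whose endpoint $d$ on $C$ differs from $u$. Writing $B,B'$ for the two $u$--$d$ arcs of $C$, the concatenations $B\cdot\overline{S}$ and $B'\cdot\overline{S}$ are $u$--$v$ paths (each is an arc of $C$, which meets $S$ only at $d$, followed by $S$ reversed), and their lengths $|E(B)|+|E(S)|$ and $|E(B')|+|E(S)|$ differ by $|E(B)|-|E(B')|\equiv|E(C)|\equiv 1\pmod{2}$. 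Hence one is odd and the other even.

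The routine point to watch is that each concatenation is a simple path and not merely a walk; this is exactly why one insists, both times, that the paths obtained from $2$-connectivity are internally disjoint from the relevant cycle (and from each other), which is precisely the property already used in the proof of \cref{lem:Sbipartite}. I do not foresee a genuine obstacle: the only mild subtlety is Stage~1, i.e.\ first upgrading an arbitrary odd cycle to one passing through $u$, after which Stage~2 is essentially a repetition of the same parity-of-arcs trick.
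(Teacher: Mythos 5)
Your proof is correct and follows exactly the route the paper intends: the paper offers no explicit proof of this fact, saying only that one can modify the forward direction of Lemma~\ref{lem:Sbipartite}, and your two-stage argument (first routing $u$ onto an odd cycle via the fan property of $2$-connectivity, then routing $v$ onto that cycle and using the parity of the two arcs) is precisely that modification, with the disjointness details needed for simple paths handled correctly.
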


\begin{theorem}\label{thm:algosoct}
	(\textsc{Weighted}) \soct can be solved in time $2^{\Oh(\tw \log \tw)}n^3$ on $n$-vertex graphs with treewidth $\tw$.
\end{theorem}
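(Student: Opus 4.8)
The plan is to solve the weighted \ssoct problem by computing, via dynamic programming over a nice tree decomposition, an $S$-bipartite induced subgraph of maximum weight; this is equivalent to the optimization version of \soct. First I would invoke \cref{thm:approxtw} and \cref{lem:nicetd} to obtain, in time $2^{\Oh(\tw)}n$, a nice tree decomposition $(T,\cB)$ of width $\Oh(\tw)$ with $\Oh(\tw\cdot n)$ nodes. For a node $t$ with bag $B_t$, let $G_t$ be the subgraph induced by $B_t$ together with all vertices appearing in bags below $t$. A \emph{partial solution} at $t$ is a set $X\subseteq V(G_t)$ such that $G_t[X]$ is $S$-bipartite. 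Using \cref{lem:Sbipartite}, being $S$-bipartite is a block-local condition, so I would track, for each partial solution, only the information about $X\cap B_t$ that is needed to know (i) which blocks of $G[X]$ survive into the ``future'' part of the graph and how they are glued, (ii) which of those still-extendable blocks already contain a vertex of $S$, and (iii) for pairs of vertices of $B_t$ lying in a common not-yet-bipartite block, the parities of connecting paths inside $G_t[X]$ (this is exactly the data controlled by \cref{fact:notbipartite}).

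Concretely, I would define an equivalence relation $\equiv_t$ on partial solutions of $G_t$ by recording: the set $X\cap B_t$; the partition of $X\cap B_t$ induced by the blocks of $G_t[X]$ that have a vertex in $B_t$ (together with a coarser partition into connected components, to detect when a future edge closes a new cycle); for each block meeting $B_t$, a flag indicating whether it is bipartite and whether it contains a vertex of $S$; and, for each bipartite block meeting $B_t$ in at least two vertices, a proper $2$-colouring of $X\cap B_t$ restricted to that block (equivalently the parity classes), which encodes condition (iii). The key claim to verify is that $\equiv_t$ is a \emph{congruence} for extension: if $X\equiv_t Y$ then for every $W\subseteq V(G)\setminus V(G_t)$, $G[X\cup W]$ is $S$-bipartite iff $G[Y\cup W]$ is $S$-bipartite. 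This follows because any odd $S$-traversing cycle in $G[X\cup W]$ lies in a single block, and by the tree-decomposition separator property its intersection with the ``past'' decomposes into $B_t$-walks whose relevant attributes (endpoints, parity, whether the enclosing block is already non-bipartite or already hits $S$) are precisely what $\equiv_t$ stores; \cref{fact:notbipartite} guarantees that once a block is non-bipartite, both parities of connecting walk are realizable, so only the boolean ``non-bipartite'' flag, not the actual parities, matters for such blocks.

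Next I would bound the number of classes: $X\cap B_t$ has $2^{\Oh(\tw)}$ choices, a partition of a $\tw$-element set into blocks/components has $\tw^{\Oh(\tw)}=2^{\Oh(\tw\log\tw)}$ choices, the per-block flags contribute $2^{\Oh(\tw)}$, and the $2$-colourings contribute $2^{\Oh(\tw)}$ per block, hence $2^{\Oh(\tw\log\tw)}$ in total. For each class I keep one representative of maximum weight. Then I would go through the four node types of a nice tree decomposition. Leaf nodes are trivial. At an introduce node for $v$, I branch on $v\in X$ or $v\notin X$ and, if $v\in X$, update the block/component partition and the parity colouring to account for edges from $v$ to $B_{t'}$, rejecting the branch if an odd $S$-traversing cycle is forced (i.e.\ a new edge merges two vertices of the same parity class within a block flagged as hitting $S$, or merges such vertices while the block contains $S$). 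At a forget node for $v$, I project the stored data down to $B_t$, and crucially mark any block that no longer meets $B_t$ as ``closed'' — such a block can never again gain a vertex, so if it is non-bipartite and contains $S$ the solution was already rejected, and otherwise it can be discarded from the state. At a join node I merge two tables: two classes combine only if they agree on $X\cap B_t$, and I take the common refinement of the two partitions while propagating flags and re-deriving parity classes, again rejecting if a cycle closing across the two sides produces an odd $S$-traversing cycle. Each such operation touches $2^{\Oh(\tw\log\tw)}$ classes and can be implemented in $n^{\Oh(1)}$ time per pair; since testing $X\equiv_t Y$ (equivalently recomputing the canonical state of a partial solution) costs $\Oh(n^2)$, and there are $\Oh(\tw\cdot n)$ nodes, the total running time is $2^{\Oh(\tw\log\tw)}n^3$. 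Reading the optimum off the root bag (which is empty) gives the maximum weight of an $S$-bipartite induced subgraph, and complementing yields the \soct answer.

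The main obstacle I expect is proving that $\equiv_t$ is genuinely a congruence while keeping the state small — in particular, arguing that for blocks that are \emph{still} bipartite we must remember the full parity colouring (so that a future edge does not create an odd cycle), yet for blocks that are \emph{already} non-bipartite we may forget parities but must remember whether the block meets $S$, and that these two regimes interact correctly across introduce, forget, and join operations. Getting the bookkeeping of block merges right at join nodes — where two partial solutions can glue several distinct blocks of each side into one block of the union — and showing that the resulting parity structure is well-defined exactly when no odd $S$-traversing cycle is created, is the delicate part; the rest is standard dynamic programming over a nice tree decomposition.
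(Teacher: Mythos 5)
Your overall architecture---a bottom-up DP over a nice tree decomposition that keeps, for each equivalence class of partial solutions (vertex sets inducing an $S$-bipartite graph), one representative of maximum weight, with the class determined by $X\cap B_t$, block structure, $S$-flags, bipartiteness flags and parities---is exactly the paper's. However, the concrete state you define is too coarse, and the congruence claim fails for it. You record the partition of $X\cap B_t$ into the blocks of $G_t[X]$ meeting $B_t$, a coarser partition into connected components, per-block flags, and per-block $2$-colourings. This forgets information that the extension behaviour genuinely depends on. When a future path joins $u,v\in B_t$ lying in the same component but in \emph{different} blocks, the resulting block of $G[X\cup W]$ swallows \emph{every} block of $G_t[X]$ on the block-cut-tree path between them, including blocks with no vertex in $B_t$; whether the new block meets $S$, and whether it is bipartite, therefore depends on the $S$-intersection, bipartiteness and parity of that entire chain, none of which your state records. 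A concrete counterexample: take $B_t=\{u,v\}$ and let $G_t[X]$ be a long induced $u$--$v$ path, in one case of even length with an internal vertex in $S$, in another of even length with no $S$-vertex, in a third of odd length with an internal $S$-vertex. All three have identical states under your $\equiv_t$ (same block partition $\{\{u\},\{v\}\}$, same component partition, same flags on the two edge-blocks meeting $B_t$, and no $2$-colouring since no block meets $B_t$ twice), yet adding the edge $uv$ later creates an odd $S$-traversing cycle in exactly the first case. A related defect is that the ``partition plus coarser partition'' does not determine the tree arrangement of the blocks meeting $B_t$ inside the block-cut forest, and that arrangement determines \emph{which} blocks get merged (hence which $S$ and non-bipartiteness flags get absorbed) when a future connection is made.

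The paper repairs exactly this with an auxiliary forest $\aux(X,t)$: the block-cut tree of $G[X]$ with inactive leaves pruned and inactive degree-$2$ chains contracted to single edges, where each contracted edge $uv$ carries the attributes of the union $M_{uv}$ of the blocks it replaces (whether it meets $S$, whether it is bipartite), and where, in addition, the parity of paths between \emph{every} pair of vertices of $B_t\cap X$ (condition~6 of the paper's $\equiv_t$), not only pairs inside one block, is part of the class. Since this annotated forest has $\Oh(\tw)$ vertices, the class count remains $2^{\Oh(\tw\log\tw)}$, so the repair costs nothing asymptotically; but without it your introduce, forget and join updates cannot be carried out, because the data needed to decide whether a merge creates a non-bipartite block containing a vertex of $S$ is simply absent from your table. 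You correctly identified the congruence proof and the join-node block merging as the delicate point; the missing ingredient is this annotated, contracted block-cut forest.
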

\begin{proof}
In the following, we fix a graph $G$, $S\subseteq V(G)$, and a weight function $w:V(G)\rightarrow \mathbb{R}$.
Using Theorem~\ref{thm:approxtw} and Lemma~\ref{lem:nicetd}, 
we obtain a nice tree decomposition of $G$ of width at most $5\tw+4$ in time $\mathcal{O}(c^\tw\cdot n)$ for some constant $c$.
Let $(T, \{B_t\}_{t\in V(T)})$ be the resulting nice tree decomposition.
For each node $t$ of $T$, let $G_t$ be the subgraph of $G$ induced by the union of all bags $B_{t'}$ where $t'$ is a descendant of $t$.

\medskip

Let $t$ be a node of $T$.
A \textit{partial solution of $G_t$} is a subset $X\subseteq V(G_t)$ such that $G[X]$ is $S$-bipartite.
In the following, we introduce a notion of auxiliary graph in order to design an equivalence relation  $\equiv_t$  between partial solutions such that $X\equiv_t Y$ if, for every $W\subseteq V(\overline{G_t})$, $G[X\cup W]$ is $S$-bipartite if and only if $G[Y\cup W]$ is $S$-bipartite.

Let $X\subseteq V(G)$ (not necessarily contained in $G_t$).
We denote by $\inc(X)$ the block-cut tree of $G[X]$, that is the bipartite graph whose vertices are the blocks and the cut vertices of $G[X]$ and where a block~$B$ is adjacent to a cut vertex $v$ if $v\in V(B)$.
Observe that $\inc(X)$ is by definition a forest.

We say that a vertex $v$ of $\inc(X)$ is \textit{active} (with respect to $t$) if:
\begin{itemize}
	\item $v$ is a cut vertex of $G[X]$ in $B_t$,
	
	\item $v$ is a block of $G[X]$ that contains at least two vertices in $B_t$, or
	
	\item $v$ is a block of $G[X]$ that contains exactly one vertex in $B_t$ that is not a cut vertex.
\end{itemize}
Note that every vertex in $B_t$ is an active cut vertex or it is in an active block of $G[X]$.
Intuitively, the auxiliary graph associated with a partial solution $X$ needs to encode how the active blocks of $\inc(X)$ are connected together.

We construct the auxiliary graphs $\aux_p(X,t)$ and $\aux(X,t)$ from $\inc(X)$ by the following operations:
\begin{enumerate}
	\item We remove recursively the leaves and the isolated vertices that are inactive. Let $\aux_p(X,t)$ be the resulting graph ($p$ for ``prototype'').
	
	\item For every maximal path $P$ of $\aux_p(X,t)$ between $u$ and $v$ and with inactive internal vertices of degree~2, we remove the internal vertices of $P$ and we add an edge between $u$ and $v$ (shrinking degree 2 nodes that are inactive).
\end{enumerate}  
Figure \ref{fig:auxiliary} illustrates the constructions of $\aux_p(X,t)$ and $\aux(X,t)$.
Observe that Operation 1 removes the inactive blocks of $G[X]$ that contain one vertex in $B_t$.
Thus, every block in $\aux_p(X,t)$ that contains vertices in $B_t$ is active.
By construction, $\aux(X,t)$ is a forest whose vertices are the active vertices of $\inc(X)$ and the inactive vertices that have degree at least 3 in $\aux_p(X,t)$.
An important remark is that the algorithm uses the graphs $\aux(X,t)$ for $X\subseteq V(G_t)$ and in the proof we will use $\aux(X,t)$ and $\aux_p(X,t)$ for $X\subseteq V(G_t)$ or $X\subseteq B_t \cup V(\overline{G_t})$.

By Step 2, any edge $uv$ of $\aux(X, t)$ corresponds to an alternating sequence $P$ of cut vertices and blocks $A_1, A_2, \ldots, A_x$ that forms a path from $u=A_1$ to $v=A_x$ in $\inc(X)$. 
We define the graph $M_{uv}$ as the union of the blocks in $P$.
Note that one of $A_1$ and $A_2$ is a cut vertex and one of $A_{x-1}$ and $A_x$ is a cut vertex. We say that these cut vertices are the endpoints of $M_{uv}$.

\begin{figure}[h]
	\centering
	\includegraphics[width=0.9\linewidth]{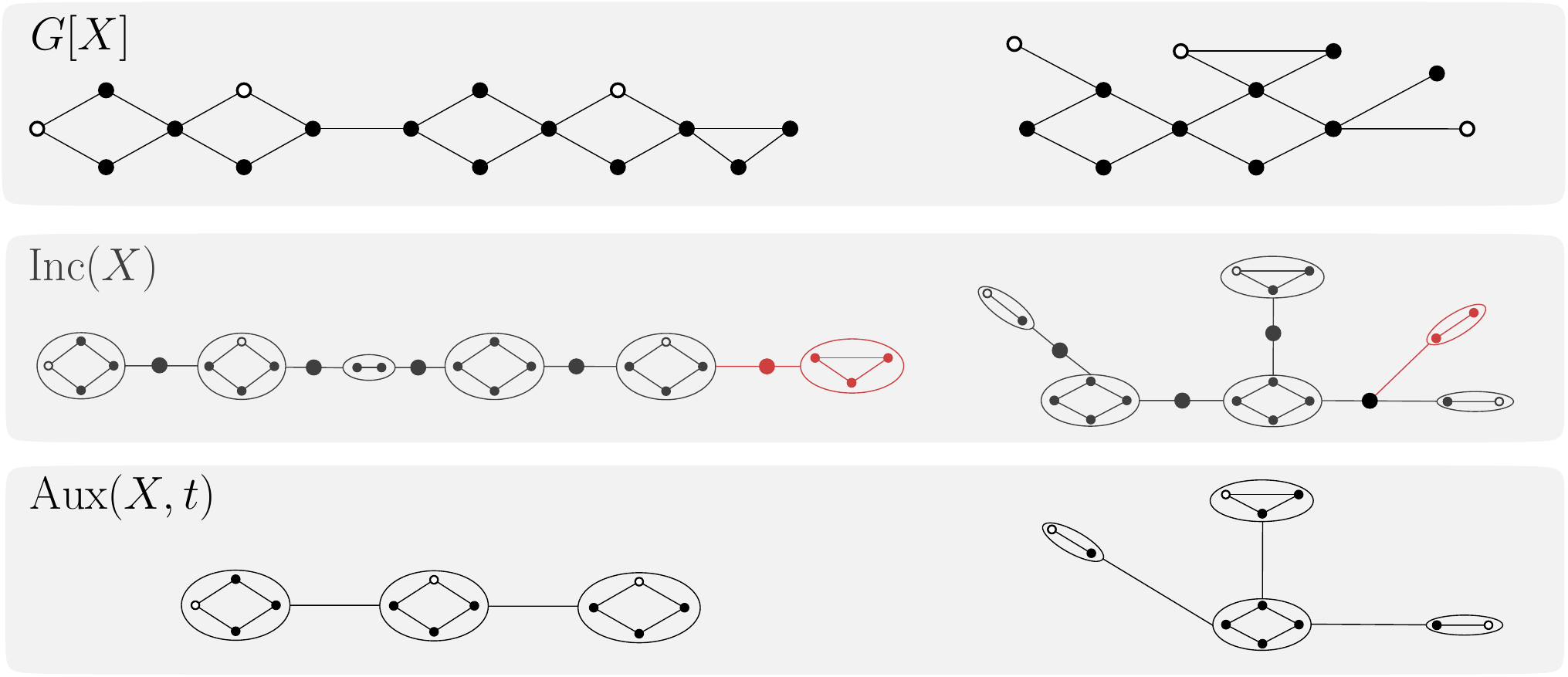}
	\caption{Example of graphs $\inc(X)$ and $\aux(X,t)$ constructed from a graph $G[X]$. The vertices in $B_t$ are white filled.
	The red vertices and edges in $\inc(X)$ are those we remove to obtain $\aux_p(X,t)$.}
	\label{fig:auxiliary}
\end{figure}

\medskip

Let $X$ and $Y$ be two partial solutions of $G_t$.
We say that $X\equiv_t Y$ if $X\cap B_t=Y\cap B_t$, and there is an isomorphism $\varphi$ from $\aux(X,t)$ to $\aux(Y,t)$ such that the following conditions are satisfied:
\begin{enumerate}	
	\item For every vertex $v$ in $\aux(X,t)$, $v$ is active if and only if $\varphi(v)$ is active.
	\item For every vertex $v$ in $\aux(X,t)$, $v$ is a block if and only if $\varphi(v)$ is a block.
	\item For every active cut vertex $v$ in $\aux(X,t)$, we have $\varphi(v)=v$.
	\item For every active block $B$ in $\aux(X,t)$:
	\begin{enumerate}
		\item $V(B)\cap B_t = V(\varphi(B))\cap B_t$, 
		\item $V(B)\cap S \neq \emptyset$ if and only if $V(\varphi(B))\cap S\neq \emptyset$, and
		\item $B$ is bipartite if and only if $\varphi(B)$ is bipartite.
	\end{enumerate}
	\item For every edge $uv$ in $\aux(X,t)$:
	\begin{enumerate}
		\item $M_{uv}$ is bipartite if and only if $M_{\varphi(u)\varphi(v)}$ is bipartite, and
		\item $V(M_{uv})\cap S\neq \emptyset$ if and only if $V(M_{\varphi(u)\varphi(v)})\cap S\neq \emptyset$.
	\end{enumerate} 
	\item For every pair $(u,v)$ of vertices in $B_t\cap X$ and every path $P_X$ between $u$ and $v$ in $G[X]$, there exists a path $P_Y$ in $G[Y]$ between $u$ and $v$ with the same parity as $P_X$.
\end{enumerate}

\begin{claim}\label{claim:boundequiv}
	For every node $t$ of $T$, the equivalence relation $\equiv_t$ has $2^{\Oh(\tw \log \tw)}$ equivalence classes.
\end{claim}
\begin{proof}
	Let $t$ be a node and $X$ be a partial solution of $G_t$. Let $k=\abs{B_t}$.
	In the following, we will upper bound the number of possibilities for the conditions in the definition of $\equiv_t$.
	Notice that there are at most $2^k$ possibilities for $X\cap B_t$.
	
	Now, observe that the number of active blocks of $G[X]$ is at most $k$.
	Note that if an active block contains one vertex of $B_t$, then it is not a cut vertex of $G[X]$, and if an active block intersects at least two vertices of $B_t$, then it contains either two cut vertices of $G[X]$ contained in $B_t$, or it contains at least one vertex of $B_t$ that is not a cut vertex of $G[X]$. We consider $\inc(X)$ as a rooted forest (where each tree has a root), and
	give an injection $\phi$ from the set of active blocks to $B_t$ as follows.
	For each active block $B$, 
	\begin{itemize}
		\item if it contains a vertex in $B_t$ that is not a cut vertex of $G[X]$, 
		then choose such a vertex $v$ and set $\phi(B)=v$, and
		
		\item if all vertices of $B_t\cap V(B)$ are cut vertices of $G[X]$, then $\abs{B_t\cap V(B)} \geq 2$ and we choose one vertex $v\in B_t\cap V(B)$ that is a child of $B$ in $\inc(X)$, and set $\phi(B)=v$.

	\end{itemize}
	Clearly, $\phi$ is an injection, and it shows that the number of active blocks of $G[X]$ is at most $k$.
	We deduce that $\aux(X,t)$ contains at most $2k$ active vertices as there are at most $k$ active blocks and at most $k$ active cut vertices.

	By construction, all the vertices of degree at most 2 in $\aux(X,t)$ are active vertices of $\inc(X)$. 
	In particular, the leaves of $\aux(X,t)$ are active vertices.
	The leaves connected to vertices of degree at least~3 induce an independent set in $\aux(X,t)$.
	We can easily show from  this fact that there are at most $k$ leaves connected to vertices of degree 3 (at most the number of part in a partition of $k$ elements).
	Since $\aux(X,t)$ is a forest, the number of vertices of degree at least 3 is at most $k$.
	We deduce that $\aux(X,t)$ has at most $2k$ active vertices and $k$ inactive vertices.
	Hence, there are at most $(2k+1)(k+1)$ possibilities for Condition~1 as it is at most the number of ways of choosing two numbers one in $[0,2k]$ and one in $[0,k]$.
	By Cayley's formula \cite{Cayley89}, the number of forests on $3k$ labeled vertices is $(3k+1)^{3k-1}$.
	Thus, there are $2^{\Oh(\tw\log \tw)}$ non-isomorphic graphs in $\{\aux(W,t)\mid W$ is a partial solution of $G_t\}$.

	For Condition 2, there are 2 possibilities for each vertex in $\aux(X,t)$: either it is a block or a cut vertex.
	Thus, there are at most $2^{3k}$ possibilities for this condition.
	
	We claim that there are $2^{\Oh(k\log k)}$ possibilities for Conditions 3 and 4.a.
	Let $v_1,\dots,v_d$ be the cut vertices of $G[X]$ in $B_t$ and $X_1,\dots,X_\ell$ be the intersections between $B_t$ and the vertex sets of the active blocks of $G[X]$.
	Note that for every distinct $X_i$ and $X_j$, $\abs{X_i\cap X_j}\le 1$.
	Moreover, since they came from $\inc(X)$, there is no cyclic structure; that is, $X_{i_1}-v_{i_2}-X_{i_3}\cdots -v_{i_{\alpha-1}}-X_{i_{\alpha}}$ where $X_{i_1}=X_{i_{\alpha}}$ and each $v_{i_j}$ only belongs to $X_{i_j}$ and $X_{i_{j+1}}$.
	This means that the number of possibilities for  $v_1,\dots,v_d$ and $X_1, \ldots, X_{\ell}$
	is the same as the number of ways of partitioning a set of $k$ vertices into blocks and cut vertices, as isolated vertices are single blocks.
	
	We claim that the number of ways of partitioning a set of $k$ vertices into blocks and cut vertices is $2^{\mathcal{O}(k\log k)}$.
	Let $T$ be a set of $k$ vertices. 
	First take a partition $\cP$ of $T$. There are at most $2^{k \log k}$ possibilities for $\cP$.
	Choose among the singletons of $\cP$ the cut vertices. There are at most $2^{k}$ possibilities.
	The other parts of $\cP$ indicate the vertex set of blocks after removing cut vertices.
	We add $k$ new dummy parts to $\cP$ representing the possible blocks that may contain only cut vertices.
	Now, we have at most $2k$ parts in $\cP$.
	Observe that any forest between the parts of $\cP$ that represent the cut vertices and those that represent the blocks induces one way of decomposing the $k$ vertices into blocks and cut vertices.
	A dummy part adjacent to the singletons containing the vertices $w_1, w_2, \ldots, w_{\ell}$ indicate that $\{w_1, w_2, \ldots, w_{\ell}\}$ forms a block.
	By Cayley's formula~\cite{Cayley89}, the number of forests on $r$ labeled vertices is $(r+1)^{r-1}$.
	So, there are at most $(2k+1)^{2k-1}$ ways.
	Hence, the number of ways of partitioning a set of $k$ vertices into blocks and cut vertices is at most $2^{\mathcal{O}(k\log k)}$.
	
	For Conditions 4.b and 4.c, there are 3 possibilities for each active block of $\aux(X,t)$.
	Indeed, since $G[X]$ is $S$-bipartite and by Lemma~\ref{lem:Sbipartite}, if a block is not bipartite, then it cannot contain vertices in $S$. Thus, there are at most $3^k$ possibilities for Condition 4.b and 4.c.
	
	For Condition 5, there are 6 possibilities for each edge $uv$ of $\aux(X,t)$: 3 possibilities for the parity of paths between the endpoints of $M_{uv}$ and two for the existence of a vertex in $S$ in $M_{uv}$.
	Since $\aux(X,t)$ is a forest with at most $3k$ vertices, we have at most $6^{3k-1}$ possibilities for Condition 5.
	
	It remains to upper bound the number of possibilities for Condition 6 on the parities of the paths between the vertices in $B_t$.
	Let $u$ and $v$ be two vertices in $X\cap B_t$.
	If there is no path between $u$ and $v$, then we can see this in $\aux(X,t)$ as it implies that there is no path between the active vertices associated with $u$ and $v$.
	
	Assume that $u$ and $v$ are connected in $G[X]$.
	Let $P_{uv}$ be a path between $u$ and $v$ in $G[X]$.
	Let $B_u$ be the block of $G[X]$ that contains $u$ and its neighbor in $P_{uv}$.
	Similarly, let $B_v$ be the block that contains $v$ and its neighbor $P_{uv}$.
	Observe that we can have $B_u=B_v$ if and only if $u$ and $v$ are in the same block.
	By construction, there exists a unique path $P$ between $B_u$ and $B_v$ in $\aux(X,t)$ (this path can have length 0 if $B_u=B_v$).
	If there exists a block $B$ in $P$ that is not bipartite, then by Fact \ref{fact:notbipartite}, we deduce that there exist an odd path and an even path between $u$ and $v$. 
	If such a non-bipartite block $B$ exists, then either $B=B_u=B_v$ or there exists an edge $uv$ used by $P$ such that $M_{uv}$ contains $B$.
	If $B=B_u=B_v$, then $B$ is an active block of $\inc(X)$ since it contains at least two vertices in $B_t$. 
	In this case, Condition 4.b stores the information that $B$ is not bipartite.
	Otherwise, if there is an edge $uv$ used by $P$ such that $M_{uv}$ contains $B$, then $M_{uv}$ is not bipartite and Condition 5 stores this information.
	
	Suppose now that every block in $P$ is bipartite.
	Let $H$ be the subgraph that is the union of the bipartite blocks in $G[X]$, and let $(X_1,X_2)$ be a bipartition of $H$.
	By assumption the union of the blocks in $P$ is a subgraph of $H$.
	Thus, every path between $u$ and $v$ has the same parity and this only depends on whether $u$ and $v$ belong to the same part of $(X_1 \cap B_t,X_2 \cap B_t)$.
	We deduce that the number of possibilities for Condition 6 are at most $2^{k}$.
	
	For each condition on $\equiv_t$, we proved that the number of possibilities are $2^{\Oh(k)}$ or $2^{\Oh(k \log k)}$.
	Since $k=B_t\leq 5\tw + 4$, we conclude that $\equiv_t$ has at most $2^{\Oh(\tw \log \tw)}$ equivalence classes.
\end{proof}

\begin{claim}
	Let $t$ be a node of $T$ and $X,Y$ be two partial solutions associated with $t$.
	If $X \equiv_t Y$, then, for every $Z\subseteq V(\overline{G_t})$, the graph $G[X\cup Z]$ is $S$-bipartite if and only if $G[Y\cup Z]$ is $S$-bipartite.
\end{claim}
\begin{proof}
	Assume that $X\equiv_t Y$ and let $Z\subseteq V(\overline{G_t})$ such that $G[Y\cup Z]$ is $S$-bipartite.
	Let $\varphi$ be the isomorphism from $\aux(X,t)$ to $\aux(Y,t)$ given by $\equiv_t$.
	We show that $G[X\cup Z]$ is $S$-bipartite.
	This will prove the claim because $\equiv_t$ is an equivalence relation.
	To prove that $G[X\cup Z]$ is $S$-bipartite, by Claim \ref{lem:Sbipartite}, it is sufficient to prove that every block $B$ of $G[X\cup Z]$ is $S$-bipartite.
	
	Let $G_X:=G[X], G_Y:=G[Y]$, and $G_Z:=G[Z\cup (B_t\cap X)]$.
	We observe that 
	$G[X\cup Z]:=(G_X, (B_t\cap X))\oplus (G_Z, (B_t\cap X))$ and $G[Y\cup Z]:=(G_Y, (B_t\cap X))\oplus (G_Z, (B_t\cap X))$.
	
	Let $B$ be a block of $G[X\cup Z]$.
	Observe that if $B$ is a block of $G[X]$, then it is $S$-bipartite because $X$ is a partial solution.
	Moreover, if $B$ is a block of $G[Z]$, then it is $S$-bipartite because $G[Y\cup Z]$ is $S$-bipartite.
	
	In the following, we assume that $B$ contains vertices from $X$ and $Z$.
	Consequently, $B$ has at least 2 vertices in $B_t$.
	Observe that for every block $B'$ of $G_X$ or $G_Z$, either $\abs{V(B')\cap V(B)}\leq 1$, or $B'$ is fully contained in $B$. 
	Thus, all the blocks of $G_X$ or $G_Z$ contained in $B$ are in $\aux_p(X,t)$ or $\aux_p(Z \cup (X \cap B_t),t)$.

	We will take a corresponding $2$-connected subgraph in $G[Y\cup Z]$.
	Let $\cB_X$ be the set that contains the blocks and the cut vertices of $\aux(X,t)$ contained in $B$.
	We take the subset $Y_B$ of $Y$ that contains (1)~$\varphi(v)$ for every cut vertex $v$ in $\cB_Y$, (2)~$V(\varphi(B'))$ for every blocks $B'$ in $\cB_X$ and (3)~$V(M_{\varphi(u)\varphi(v)})$ for every edge $uv$ of $\aux(X, t)$ with $u, v\in \mathcal{B}_X$. 	
	Let $F:=G[Y_B\cup (V(B)\cap Z)]$. 
	Since all the blocks in $\aux(X,t)$ with vertices in $S$ are active, Condition 4.a of $\equiv_t$ guarantees that $V(B)\cap B_t= V(F)\cap B_t$.
	
	We claim that $F$ is a $2$-connected induced subgraph of $G[Y\cup Z]$ such that 
	\begin{itemize}
		\item $F$  contains a vertex of $S$ if and only if $B$ contains a vertex of $S$, 
		\item $F$ is bipartite if and only if $B$ is bipartite.
	\end{itemize}
	By Lemma \ref{lem:Sbipartite}, 
	this will imply that $B$ is $S$-bipartite, because $F$ is a subgraph of the $S$-bipartite graph $G[Y\cup Z]$. 
	Let $F_Y:=F\cap G_Y$ and $F_Z:=F\cap G_Z$.
	\medskip
	
	(1) ($F$ is $2$-connected.)
	It is not difficult to see that $F$ is connected from the construction of $Y_B$ and because $Y\equiv_t X$ implies that (1)~$V(B')\cap B_t = V(\varphi(B'))\cap B_t$ for every active block of $G[X]$ and~(2) two blocks $B',\widehat{B}$ of $\aux(X,t)$ are connected if and only if $\varphi(B')$ and $\varphi(\widehat{B})$ are connected in $\aux(Y,t)$.
	
%
	Assume towards a contradiction that $F$ has a cut vertex $c$.  
	Let $a_1,a_2$ be the neighbors of $c$ that are contained in distinct components of $F-c$. For each $i\in \{1, 2\}$, let $U_i'$ be a block of $G_Y$ or $G_Z$ containing $a_ic$.
	Note that $U_1'$ and $U_2'$ are fully contained in $F$, because every block of $G_Y$ or $G_Z$ containing two vertices of $F$ is contained in $F$.
	Therefore, $U_i'$ appears in $\aux_p(Y, t)$ if it is a block of $G_Y$ and it appears in $\aux_p(V(G_Z), t)$ otherwise.
	
	Now, we choose $U_1$ and $U_2$ in $\aux(Y, t)$ and $\aux(V(G_Z), t)$ related to $U_1'$ and $U_2'$, respectively.
	\begin{itemize}
		\item (Case 1. $U_1', U_2'$ are in the same part of $\aux_p(Y, t)$ or $\aux_p(V(G_Z), t)$.)
		
		Let us assume that both $U_1'$ and $U_2'$ are in $\aux_p(Y,t)$. A similar argument holds for the other case.
		As $c$ is the intersection of $U_1'$ and $U_2'$ which are blocks of $G_Y$, 
		$c$ is a cut vertex of $G_Y$, and it appears in $\aux_p(Y,t)$.
		Following the path of $\aux_p(Y, t)$ with direction from $c$ to $U_i'$, we choose the first vertex $U_i$ in $\aux(Y, t)$.
		\item (Case 2. $U_1', U_2'$ are not in the same part of $\aux_p(Y, t)$ or $\aux_p(V(G_Z), t)$.)
		
		Without loss of generality, we assume that $U_1'$ is in $\aux_p(Y, t)$ and $U_2'$ is in $\aux_p(V(G_Z), t)$.
		We explain how to choose $U_1$. The symmetric argument is applied to $U_2$.
		In this case, $U_1'$ and $U_2'$ share $c$, and therefore, 
		either $c$ is an active cut vertex in $G_Y$, or $U_1'$ is an active block of $G_Y$.
		In the former case, 
		following the path of $\aux_p(Y, t)$ with direction from $c$ to $U_1'$, we choose the first vertex $U_1$ in $\aux(Y, t)$.
		In the latter case, we set $U_1:=U_1'$.
	\end{itemize}

	For each $i\in \{1,2\}$, if $U_i$ is block in $\aux(Y,t)$, let $V_i=\varphi^{-1}(U_i)$, otherwise, if $U_i$ is a block in $\aux(V(G_Z),t)$, let $V_i=U_i$.
	Because of the construction, $V_1$ and $V_2$ are blocks contained in $B$.
	We choose a vertex $c_X$ in $B$ corresponding to $c$ in $F$.
	If $c\in B_t \cup Z$, then we set $c_X = c$.
	Otherwise, $c\in Y \setminus B_t$ and $c$ must be a cut vertex in $G_Y$ and either (A)~$c$ is a vertex of $\aux(Y,t)$ with neighbors $U_1$ and $U_2$ or (B)~$U_1U_2$ is an edge of $\aux(Y,t)$ and $c$ is a vertex of $M_{U_1U_2}$.
	If (A) holds, then we set $c_X = \varphi^{-1}(c)$ and if (B)~holds, then we take $c_X$ a cut vertex in $M_{\varphi^{-1}(U_1)\varphi^{-1}(U_2)}$ (every $M_{uv}$ admits at least one cut vertex by definition).

	Since $B$ is $2$-connected, there is a path $p_1p_2 \cdots p_m$ from $V_1-c_X$ to $V_2-c_X$ in $B-c_X$.
	This provides a sequence $B_1, B_2, \ldots, B_{m'}$ of blocks that appear in $\aux(X, t)$ or $\aux(V(G_Z), t)$
	 such that $B_1=V_1$ and $B_{m'}=V_2$ and for every $i\leq m'-1$, either $B_iB_{i+1}$ is an edge in $\aux(X, t)$ or in $\aux(V(G_Z), t)$, or $B_i$ and $B_{i+1}$ are contained in distinct parts of $G_X$ and $G_Z$ and they share a vertex in $B_t$.
	 
	For every $i\leq m'$, let $\widehat{B}_i$ be $B_i$ if $B_i$ is a block in $\aux(V(G_Z),t)$ or $\varphi(B_i)$ if $B_i$ is a block in $\aux(X,t)$.
	Let $i\leq m'-1$. 
	If $B_i$ and $B_{i+1}$ is an edge of $\aux(X,t)$, then $\varphi(B_i)\varphi(B_{i+1})=\widehat{B}_i\widehat{B}_{i+1}$ is an edge in $\aux(Y,t)$. 
	Now, suppose that $B_i$ and $B_{i+1}$ are contained in distinct parts of $G_X$ and $G_Z$ and assume w.l.o.g. that $U_i$ is a block in $G_Y$.
	By Condition 4.a in the definition of $\equiv_t$, we have $V(B_i)\cap B_t = V(\widehat{B}_i)\cap B_t$.
	We deduce that $\widehat{B}_i$ and $B_{i+1}=\widehat{B}_{i+1}$ share a vertex in $B_t$.
	From the sequence $\widehat{B}_1,\dots,\widehat{B}_{m'}$, we conclude that there exists a path in $F$ between $U_1-c$ and $U_2-c$ in $F-c$.
	This contradicts the assumption that $c$ is a cut vertex. 
	
	\medskip
	
	(2) ($F$  contains a vertex of $S$ if and only if $B$ contains a vertex of $S$.)
	Observe that 
	for each block $U$ of $\aux(Y,t)$, $U$ contains a vertex of $S$ if and only if $\varphi(U)$ contains a vertex of $S$, and
	for every edge $uv$ of $\aux(Y,t)$, $M_{uv}$ has a vertex of $S$ if and only if $M_{\varphi(u) \varphi(v)}$ has a vertex of $S$. Since $F\cap G_Z=B\cap G_Z$, we obtain the result.
	\medskip
	
	(3) ($F$ is bipartite if and only if $B$ is bipartite.)
	Suppose that $B$ is bipartite. 
	We take the bipartition $(L, R)$ of $B$. As a connected bipartite graph has a unique bipartition, 
	this is unique up to changing $L$ and $R$.
	
	As $F\cap G_Z=B\cap G_Z$, this gives a bipartition $(L', R')$ of $F\cap G_Z$.
	Let $u, v\in V(F)\cap B_t$ be vertices that are contained in the same connected component of $F\cap G_Y$. 
	Assume $u, v$ are contained in the same part of $L'$ and $R'$.
	Since $u, v$ are also contained in the same connected component of $B\cap G_X$ and they are in the same part of $L$ and $R$, 
	all the paths from $u$ to $v$ in $B\cap G_X$ have even length.
	Note that the blocks containing edges of the path from $u$ to $v$ are all contained in $B$, and those blocks appear in 
	$\aux_p(X, t)$. As $B$ is bipartite, each of these blocks is bipartite. 
	
	Since $\aux(Y, t)$ is isomorphic to $\aux(X, t)$, 
	there is a corresponding sequence of blocks whose last blocks contain $u$ and $v$, respectively, 
	and all these blocks are bipartite.
	By the last condition of the equivalence relation $\equiv_t$, 
	there is an even path from $u$ to $v$ in $G_Y$.
	This shows that a bipartition of $F\cap G_Y$ is compatible with the bipartition $(L', R')$ of $F\cap G_Z$. 
	Thus, $F$ is bipartite.
\end{proof}

	We are now ready to describe our algorithm.
	For each node $t$ of $T$ and $I\subseteq B_t$, 
	let $\cP[t, I]$ be the set of all partial solutions $X$ of $G_t$
	where $X\cap B_t=I$.
	A reduced set $\cR[t, I]$ is a subset of $\cP[t, I]$ satisfying that 
	\begin{itemize}
		\item for every partial solution $X\in \cP[t, I]$, 
		there exists $X'\in \cR[t, I]$ where $X\equiv_t X'$ and $w(X')\ge  w(X)$, and 
		\item no two partial solutions in $\cR[t, I]$ are equivalent.
	\end{itemize} 
	We will recursively compute a reduced set $\cR[t, I]$ for every node $t$ of $T$ and $I\subseteq B_t$.
	Claim~\ref{claim:boundequiv} guarantees that $\abs{\bigcup_{I\subseteq B_t}\cR[t, I]}= 2^{\mathcal{O}(\tw\log \tw)}$. 
	
	We describe how to compute a reduced set $\cR[t, I]$ depending on the type of the node $t$, 
	and prove the correctness and the running time of each procedure.
	We fix a node $t$ and $I\subseteq B_t$.
	For each leaf node $t$ and $I=\emptyset$, we assign $\cR[t,I]:=\emptyset$. 
	For $\cA\subseteq 2^{V(G_t)}$, we define $\reduce_t(\cA)$ as the operation which removes the elements of $\cA$ that does not induce $S$-bipartite graph and then returns a set that contains, for each equivalence class $\cC$ of $\equiv_t$ over $\cA$, a partial solution of $\cC$ of maximum weight.

	\medskip\medskip
	\noindent\textbf{1) $t$ is an introduce node with child $t'$ and $B_t\setminus B_{t'}=\{v\}$:}
	\medskip
	
	If $v\notin I$, then it is easy to see that $\cR[t', I]$ is a reduced set of $\cP[t,I]=\cP[t',I]$.
	In this case, we take $\cR[t,I]=\cR[t',I]$.
	
	Assume now that $v\in I$.
	We set $\cR[t,I]=\reduce_t(\cA)$ with $\cA$ the set that contains $X\cup \{v\}$ for every $X\in \cR[t',I\setminus \{v\}]$.
	
	We claim that $\cR[t, I]$ is a reduced set of $\cP[t, I]$.
	Let $X\in \cP[t, I]$. 
	As $v\in I$, we have that $X\setminus \{v\}\in \cP[t', I\setminus \{v\}]$.
	Since $\cR[t', I\setminus \{v\}]$ is a reduced set of $\cP[t', I\setminus \{v\}]$, 
	there exists $X'\in \cP[t', I\setminus \{v\}]$ such that 
	$X'\equiv_{t'}  X\setminus \{v\}$ and $w(X')\ge w(X\setminus \{v\})$.
	By the construction of $\cR[t, I]$, 
	we added $\widehat{X}$ to $\cR[t, I]$ where $\widehat{X}\equiv_t X'\cup \{v\}$ and $w(\widehat{X})\ge w(X'\cup \{v\})$.
	It is not difficult to check that $\widehat{X}\equiv_t X$ by considering $G_t$ and the sum $(G_{t'}, B_t\setminus \{v\})\oplus (G[B_t], B_t\setminus \{v\})$ and applying Claim 21.
	

	\medskip\medskip
	\noindent\textbf{2) $t$ is a forget node with child $t'$ and $B_{t'}\setminus B_t=\{v\}$:}
	\medskip
	
	We set $\cR[t,I]=\reduce_t(\cR[t', I]\cup \cR[t', I\cup \{v\}])$.
	We easily deduce that $\cR[t, I]$ is a reduced set of $\cP[t, I]$ from the fact that, by definition, $\cP[t,I]=\cP[t',I]\cup \cP[t',I\cup \{v\}]$.
	
	\medskip\medskip
	\noindent\textbf{3) $t$ is a join node with two children $t_1$ and $t_2$:}
	\medskip
	
	We set $\cR[t,I]=\reduce_t(\cA)$ where $\cA$ is the set that contains $X_1\cup X_2$ for every $X_1\in \cR[t_1,I]$ and $X_2\in \cR[t_2,I]$.

	We claim that $\cR[t, I]$ is a reduced set of $\cP[t, I]$.
	Let $X\in \cP[t, I]$. 
	For each $i\in \{1, 2\}$, 
	let $X_i:=X\cap V(G_{t_i})$.
	It is not difficult to see that $X_i\in \cP[t_i, I]$, as it is a partial solution of $G_{t_i}$.
	
	Since $\cR[t_i, I]$ is a reduced set of $\cP[t_i, I]$, 
	there exists $X_i'\in \cR[t_i, I]$ such that $X_i'\equiv_t X_i$ and $w(X_i')\geq w(X_i)$.
	By construction, $X_1'\cup X_2'\in \cA$, and there exists $X'\in \cR[t, I]$ such that 
	$X'\equiv_t X_1'\cup X_2'$ and $w(X')\geq w(X_1'\cup X_2')$.
	It is not difficult to check that $X'\equiv_t X$ by considering $G_t$ and the sum $(G_{t_1}, B_t)\oplus (G_{t_2}, B_{t_2})$ and applying Claim 21 twice.
	As $w(X_1'\cup X_2')\geq w(X_1\cup X_2)=w(X)$, we conclude that $\cR[t,I]$ is a reduced set.

\medskip
	
	It remains to prove the correctness and the running time of our algorithm.
	We can assume w.l.o.g. that the bag $B_r$ associated with the root $r$ of $T$ is empty.
	By definition $\cP[t,\emptyset]$ contains an optimal solution.
	Since $\cR[t,\emptyset]$ is a reduced set, we conclude that $\cR[t,\emptyset]$ also contains an optimal solution.
	
	For the running time, we use the fact that we can compute $\reduce_t(\cA)$ in time $\abs{\cA}2^{\Oh(\tw\log\tw)}n^2$.
	This follows from the upper bound of Claim \ref{claim:boundequiv} on the number of equivalence classes of $\equiv_t$ and the fact that for every partial solutions $X,Y$ of $G_t$, we can decide whether $X\equiv_t Y$ in time $\Oh(n^2)$.
	
	Let $t$ be a node of $T$ that is not a leaf.
	Observe that we set $\cR[t,I]=\reduce_t(\cA)$ where $\cA$ is some sets that depends on the type of the node $t$.
	Let $k$ be maximum size of a reduced set $\cR[t',I']$ for $t'\in V(T)$ and $I'\subseteq B_{t'}$.
	By construction, if $t$ is an introduce node, then the size of $\cA$ is at most $k$.
	If $t$ is a forget node, then the size of $\cA$ is at most $2k$.
	And if $t$ is a join node, then the size of $\cA$ is at most $k^2$.
	From Claim \ref{claim:boundequiv}, we have $k\leq 2^{\Oh(\tw\log\tw)}$.
	We deduce that, for every node $t$ and $I\subseteq B_t$, we can compute $\cR[t,I]$ in time $2^{\Oh(\tw\log\tw)}\cdot n^2$.
 	Since $T$ has at most $\Oh(n \cdot \tw^2)$ nodes and there are at most $2^{\Oh(\tw)}$ possibilities for $I$, the running time of our algorithm is $2^{\Oh(\tw\log\tw)}n^3$.
\end{proof}

	The dependency $n^3$ on the input size $n$ in Theorem~\ref{thm:algosoct}
	was obtained because we keep the partial solutions themselves, and have to check their equivalences.
	We believe that with a careful argument by keeping only auxiliary graphs $\aux(X, t)$, we can reduce the dependence of the input size;
	however, for simplicity, we present the running time with $n^3$ factor.
	
	Now, we solve the other problems.

\begin{theorem}
	  \sfvs, \sfes, \vmwc, and their weighted variants can be solved in time $2^{\Oh(\tw\log\tw)}n^3$ on $n$-vertex graphs with treewidth $\tw$.
\end{theorem}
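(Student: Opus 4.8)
The plan is to derive all three results (and their weighted versions) from a single algorithm for weighted \ssfvs, obtained by a light modification of the algorithm of \cref{thm:algosoct}, and then to invoke the parameter-preserving reductions outlined in the remark in \cref{parampres}.

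\textbf{Step 1: weighted \ssfvs.} Call a graph $G$ an \emph{$S$-forest} if no cycle of $G$ contains a vertex of $S$; solving \ssfvs is the same as computing a maximum-weight induced $S$-forest. First I would record the block characterization analogous to \cref{lem:Sbipartite}: $G$ is an $S$-forest if and only if every block of $G$ has at most two vertices or contains no vertex of $S$ --- one direction is immediate since a cycle lives inside a single block, and for the other direction a block with at least three vertices is $2$-connected and hence has a cycle through any of its vertices. Then I would run the dynamic program of \cref{thm:algosoct} verbatim --- same nice tree decomposition, same auxiliary graphs $\aux(X,t)$ --- but with the equivalence relation $\equiv_t$ trimmed: keep Conditions 1, 2, 3, 4.a, 4.b and 5.b, and drop the bipartiteness Conditions 4.c and 5.a together with the path-parity Condition 6, all of which only served to detect odd cycles. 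Since this only removes constraints, the count in \cref{claim:boundequiv} still gives $2^{\Oh(\tw\log\tw)}$ classes. For correctness, the proof of the equivalence-preservation claim in \cref{thm:algosoct} (Claim~21) carries over, keeping only its first two parts: for $X\equiv_t Y$ and $Z\subseteq V(\overline{G_t})$, any block $B$ of $G[X\cup Z]$ meeting both sides has at least two vertices in $B_t$, and $\varphi$ yields a $2$-connected $F\subseteq G[Y\cup Z]$ with $V(B)\cap B_t=V(F)\cap B_t$ and $V(B)\cap S\neq\emptyset$ iff $V(F)\cap S\neq\emptyset$ (Conditions 4.a, 4.b, 5.b); the block characterization above then shows $G[X\cup Z]$ is an $S$-forest iff $G[Y\cup Z]$ is. The introduce/forget/join transitions and the running-time analysis are unchanged, so this yields a $2^{\Oh(\tw\log\tw)}n^3$-time algorithm; assigning a vertex the ``infinite'' weight $1+\sum_{v\in V(G)}w(v)$ forces it into every optimal $S$-forest, so mandatory vertices are supported.

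\textbf{Step 2: the reductions.} From a weighted \vmwc instance $(G,T,k)$, I would build $G'$ by adding one vertex $v$ adjacent to all of $T$, set $S:=\{v\}$, give $v$ and every terminal ``infinite'' weight, and keep the original weights on $V(G)\setminus T$; the cycles of $G'$ through $v$ are exactly the two-edge extensions of $T$--$T$ paths of $G$, so a weight-$\le k$ \ssfvs solution (necessarily disjoint from $v$ and $T$) is precisely a \vmwc solution, and $\tw(G')\le\tw(G)+1$. From a weighted \sfes instance $(G,S_E,k)$ with $S_E\subseteq E(G)$, I would subdivide every edge $e$ once with a new vertex $m_e$, set $S:=\{m_e:e\in S_E\}$, give ``infinite'' weight to every original vertex and to every $m_e$ with $e\in S_E$, and weight $w(e)$ to $m_e$ for $e\notin S_E$; then deleting $m_e$ mirrors deleting $e$, cycles of $G'$ through some $m_e$ with $e\in S_E$ mirror cycles of $G$ through $e$, and since subdividing edges does not push the treewidth above $\max(\tw(G),2)$ we get $\tw(G')=\Oh(\tw(G))$ (the case $\tw(G)\le 1$ is trivial). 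Plugging these into Step 1 gives the claimed running time for \sfvs, \sfes, \vmwc and their weighted variants.

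\textbf{Main obstacle.} The delicate point is Step 1: I must re-run the correctness argument of \cref{thm:algosoct} and confirm that, after deleting all the bipartiteness/parity bookkeeping, the remaining conditions of $\equiv_t$ still certify that gluing two $\equiv_t$-equivalent partial solutions to an arbitrary outside part $Z$ preserves the ``no cycle through $S$'' property --- intuitively this is easier than the \ssoct case, but it requires checking that every block that could become ``large and $S$-hitting'' after gluing is recorded by one of Conditions 4.b or 5.b. The reductions themselves are routine, the only things to verify being that ``infinite'' weights can be simulated within the budget and that the treewidth changes by at most an additive or multiplicative constant.
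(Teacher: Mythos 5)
Your proposal is correct and matches the paper's proof essentially step for step: the paper likewise solves weighted \ssfvs by running the \ssoct dynamic program with the equivalence relation stripped of exactly the bipartiteness/parity conditions (keeping Conditions 1, 2, 3, 4.a, 4.b, 5.b), and then handles \vmwc and \sfes via the same two reductions (the apex vertex over the terminals with $S=\{v\}$, and the edge-subdivision construction), using infinite weights in the same way. The only differences are presentational (e.g., you spell out how to realize ``infinite'' weights concretely), so there is nothing further to add.
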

\begin{proof}
	It is easy to check that a graph has no $S$-traversing cycle if and only if every block of size at least $3$ has no vertex of $S$.
	So, for \ssfvs, we can simply adapt the algorithm for \ssoct, to ignore checks for bipartiteness.
	A partial solution at a node $t$ is a subset $Y$ of $V(G_t)$ such that $G_t[Y]$ is a graph having no $S$-traversing cycles.
	
	We use the same auxiliary graph $\aux(Y, t)$, and use the equivalence relation obtained by removing the conditions for bipartiteness in the equivalence relation for \ssoct.
	More specifically, we say that two partial solutions are equivalent for \ssfvs
	if $X\cap B_t=Y\cap B_t$, and there is an isomorphism $\varphi$ from $\aux(X,t)$ to $\aux(Y,t)$ such that the following conditions are satisfied:
	\begin{itemize}	
		\item For every vertex $v$ in $\aux(X,t)$, $v$ is active if and only if $\varphi(v)$ is active.
		\item For every vertex $v$ in $\aux(X,t)$, $v$ is a block if and only if $\varphi(v)$ is a block.
		\item For every active cut vertex $v$ in $\aux(X,t)$, we have $\varphi(v)=v$.
		\item For every active block $B$ in $\aux(X,t)$:
		\begin{itemize}
			\item  $V(B)\cap B_t = V(\varphi(B))\cap B_t$, and
			\item $V(B)\cap S \neq \emptyset$ if and only if $V(\varphi(B))\cap S\neq \emptyset$.
		\end{itemize}
		\item For every edge $uv$ in $\aux(X,t)$:
		\begin{itemize}
			\item $V(M_{uv})\cap S\neq \emptyset$ if and only if $V(M_{\varphi(u)\varphi(v)})\cap S\neq \emptyset$.
		\end{itemize} 
	\end{itemize}
	It is straightforward to adapt an algorithm for \ssoct to \ssfvs with this equivalence relation.
	We conclude that \ssfvs admits a $2^{\Oh(\tw\log\tw)}n^3$-time algorithm.
	
	For the weighted variant of \vmwc, we use the reduction to \ssfvs described in the introduction.
	Given an instance $(G,S,k)$ of \textsc{Weighted} \vmwc with weight function $w : V(G)\to \mathbb{R}$, we construct an equivalent instance $(G',S',k)$ of \textsc{Weighted} \ssfvs with $\tw(G')\leq \tw(G) + 1$ as follows.
	We add a new vertex $v'$ to $G$ adjacent to all the vertices in $S$.
	Let $G'$ be the resulting graph and $S'=\{v'\}$.
	Since we only add a single vertex, $\tw(G')\leq \tw(G)+1$.
	For every vertex $v$ in $G$, we set the weight of $v$ to be ``infinite'' if $v\in S\cup \{v'\}$ and $w(v)$ otherwise.
	Consequently, a solution cannot contain vertices in $S\cup \{v'\}$.
	One can easily prove that $(G,S,k)$ is a yes-instance of \textsc{Weighted} \vmwc if and only if $(G',S',k)$ is a yes-instance of \textsc{Weighted} \ssfvs.
	Hence, \vmwc and its weighted variant admit $2^{\Oh(\tw \log \tw)}n^3$-time algorithms.
	
	It remains to prove that the theorem holds for \textsc{Weighted} \sfes (\textsc{WSFES} for short).
	For doing so, we use a reduction to \textsc{Weighted} \ssfvs.
	Let $(G,S,k)$ be a instance of \textsc{WSFES} with weight function $w : E(G)\to \mathbb{R}$.
	We will construct a instance $(G',S',k)$ of \textsc{Weighted} \ssfvs where $\tw(G')= \tw(G)$.
	
	Let $G'$ be the graph obtained by subdividing the edges of $G$.
	Since subdivisions do not increase the treewidth, we have $\tw(G')=\tw(G)$.
	For every edge $e$ of $G$, we call $v_e$ the vertex in $G'$ created from the subdivision of $e$.
	Let $S'$ be the set $\{v_e \mid e \in S\}$.
	We give ``infinite'' weights to the vertices of $G'$ that belong to $V(G)\cup S'$.
	Moreover, for every edge $e\in E(G)\setminus S$, we give to $v_e$ the same weight as $e$.
	Consequently, a solution cannot contain vertices in $V(G)\cup S'$.
	It is easy to prove that $(G,S,k)$ is a yes-instance of \textsc{WSFES} if and only $(G',S',k)$ is a yes-instance of \textsc{Weighted} \ssfvs.
	We conclude that \textsc{WSFES} admits a $2^{\Oh(\tw \log \tw)}n^3$-time algorithm.
\end{proof}



\newpage

\appendix

\section{Semi-regular instances of \textsc{$k \times k$-Clique}}


Here we show the claimed ETH lower bound for \textsc{Semi-Regular $k \times k$-Clique}, namely that it is not easier than \textsc{$k \times k$-Clique}.
We insist that we do not need this result in the paper, but we believe that it may be helpful in a different context.
We recall that by \emph{semi-regular}, we mean that every vertex of a given column has the same degree towards another fixed column.

  The known parameterized reduction from \textsc{$k$-Clique} on general graphs to \textsc{$k$-Clique} on regular graphs will not work here, since it would increase the parameter polynomially (which we cannot afford).
  We take a couple of steps back and show how to partition the vertex set $V(G)$ of a hard instance of \textsc{Bounded-Degree $3$-Coloring} in such a way that the seminal reduction from \textsc{$3$-Coloring} to \textsc{$k \times k$-Clique} only produces semi-regular instances.
  Let us recall that the latter reduction builds one vertex per $3$-coloring of a part of the partition, and links by an edge every pair of consistent partial colorings (i.e., the union of the colorings is proper in the graph induced by the two parts).
  If the number of parts $k$ is chosen so that $3^{|V(G)|/k} \approx k$, we obtain a ``square'' $k$-by-$k$ instance of \textsc{Clique}.
  Now we want to ensure, in addition, that each $3$-coloring of a part $P$ has the same number of consistent $3$-colorings of another part $P'$.
  
  It is a folklore consequence of the Sparsification Lemma~\cite{sparsification} and known reductions that 3-coloring a~bounded-degree $n$-vertex graph cannot be done in $2^{o(n)}$.
  For instance Cygan et al. show the following.
  \begin{theorem}[Lemma 1 in \cite{Cygan17}]\label{hardness:3col}
    Unless the ETH fails, \textsc{3-Coloring} on $n$-vertex graphs of maximum degree 4 cannot be solved in $2^{o(n)}$.
  \end{theorem}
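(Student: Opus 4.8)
The plan is to derive Theorem~\ref{hardness:3col} by composing linear-size reductions starting from \textsc{3-SAT}. Since the ETH as stated (via~\cite{ImpagliazzoP01}) concerns the number of variables, the first step is to invoke the Sparsification Lemma~\cite{sparsification}: for every fixed $\varepsilon>0$, an $n$-variable \textsc{3-SAT} instance can be written as a disjunction of at most $2^{\varepsilon n}$ instances, each with $\Oh(n)$ clauses (the hidden constant depending on $\varepsilon$). Consequently, a $2^{o(n)}$ algorithm for \textsc{3-SAT} restricted to instances with $\Oh(n)$ clauses would, applied to all sub-instances, solve \textsc{3-SAT} in time $2^{\varepsilon n+o(n)}$, and choosing $\varepsilon$ small enough refutes the ETH. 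Applying in addition the standard ``cyclic implication'' trick---replacing a variable $x$ occurring $k$ times by fresh copies $x_1,\dots,x_k$ together with the clauses $(\neg x_i\vee x_{i+1}\vee x_{i+1})$ for $i\in[k]$ read cyclically, so that all copies are forced equal---yields \textsc{3-SAT} instances in which every variable occurs in at most $3$ clauses and the numbers of variables and clauses are still $\Oh(n)$. So it suffices to reduce this bounded-occurrence variant, linearly, to \textsc{3-Coloring} on graphs of maximum degree $4$.

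First I would use the classical reduction from \textsc{3-SAT} to \textsc{3-Coloring}: a palette triangle $\{v_T,v_F,v_B\}$; for each variable $x$ a pair of vertices $x,\bar x$ forming a triangle with $v_B$, which forces $c(x)\ne c(\bar x)$ and both $\ne c(v_B)$; and for each clause a constant-size ``$\mathrm{OR}$'' gadget on its three literal vertices (plus a few auxiliary vertices adjacent to the palette) that is properly $3$-colorable exactly when at least one literal vertex is coloured $c(v_T)$. This reduction has size $\Oh(n)$, and because each variable occurs a bounded number of times, every vertex \emph{except} the three palette vertices already has bounded degree; the palette vertices, however, have degree $\Theta(n)$. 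To fix this, I would reduce their degree by the standard degree-reduction technique for \textsc{3-Coloring}: replace each high-degree vertex by many low-degree copies all forced to share a colour, the equality being enforced by small gadgets such as the ``diamond'' $K_4$ minus an edge (which forces its two degree-$2$ terminals to receive the same colour), arranged so that every vertex of the final graph has degree at most $4$. Alternatively, one can simply invoke the classical fact that \textsc{3-Coloring} is \textsf{NP}-hard on graphs of maximum degree $4$ and observe that the witnessing reduction is linear. Correctness is the usual two-directional check, and the graph still has $\Oh(n)$ vertices.

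Putting the pieces together, a hypothetical $2^{o(N)}$-time algorithm for \textsc{3-Coloring} on $N$-vertex graphs of maximum degree $4$ would solve each sparsified sub-instance in time $2^{o(n)}$, hence \textsc{3-SAT} in time $2^{\varepsilon n+o(n)}$, contradicting the ETH; this proves the theorem. The step I expect to be the main obstacle is making the degree bound \emph{exactly} $4$ while keeping the blow-up linear: naive chains of equality gadgets leave the internal copies with degree $5$, so the de-degreeing of the palette vertices needs some care (or must be outsourced to a ready-made maximum-degree-$4$ hardness construction). Keeping every reduction of linear size is likewise essential, since the conclusion we need is a $2^{o(n)}$ lower bound rather than $2^{o(n^c)}$---a polynomial blow-up anywhere would be fatal.
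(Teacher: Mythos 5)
The paper does not actually prove this statement---it is imported verbatim as Lemma~1 of Cygan et al.~\cite{Cygan17}---and your reconstruction (Sparsification Lemma, bounded-occurrence \textsc{3-SAT} via cyclic implications, the classical linear reduction to \textsc{3-Coloring}, then degree reduction of the palette vertices) is precisely the standard chain underlying the cited lemma, so your approach matches the intended one. The single step you flag as delicate, getting maximum degree exactly $4$ with only linear blow-up, is indeed resolved by the classical maximum-degree-$4$ hardness construction you point to, so the argument is sound.
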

  Ultimately this result is a reduction from \textsc{$3$-SAT}.
  The \textsc{3-Coloring} instances produced by that reduction serve as our starting point. 

  \begin{theorem}
    Unless the ETH fails, \textsc{Semi-Regular $k \times k$-Clique} cannot be solved in time $2^{o(k \log k)}$.
  \end{theorem}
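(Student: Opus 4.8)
The plan is to reduce from \textsc{3-Coloring} on graphs of maximum degree $4$, which by \cref{hardness:3col} cannot be solved in time $2^{o(n)}$ under the ETH, and to run the classical ``grouping'' reduction to \textsc{$k \times k$-Clique} in a way that forces the output to be semi-regular. Let $G$ be an $n$-vertex graph with $\Delta(G) \le 4$. First I would normalise: let $s$ be the least integer with $s \cdot 3^s \ge n$, and add $s \cdot 3^s - n$ isolated vertices to obtain $G'$ on exactly $n' := s \cdot 3^s$ vertices (isolated vertices affect neither $3$-colorability nor the maximum degree, and $n' = \Oh(n)$). Put $k := 3^s$, so $n' = sk$ and hence $k \log k = \Theta(n)$; consequently a $2^{o(k \log k)}$ algorithm for \textsc{Semi-Regular $k \times k$-Clique} would give a $2^{o(n)}$ algorithm for our \textsc{3-Coloring} instances, contradicting \cref{hardness:3col}.

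The crux is to partition $V(G')$ into $k$ parts $V_1, \dots, V_k$ that are at once \emph{perfectly balanced} and \emph{pairwise matching-like}. To this end I would compute an equitable proper colouring of the square $G'^{2}$ (the graph on $V(G')$ joining two vertices at distance at most $2$ in $G'$) with exactly $k$ colours. Since $\Delta(G') \le 4$ we have $\Delta(G'^{2}) \le 16$, and for all large $n$ we have $k = 3^s \ge 17 > \Delta(G'^{2})+1$, so the strong form of the Hajnal--Szemer\'edi theorem provides such a colouring, and its constructive version \cite{Kierstead08} computes it in polynomial time. Let $V_1, \dots, V_k$ be the colour classes; as $k \mid n'$ each class has size exactly $s$. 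Properness on $G'^{2}$ gives two facts we need: each $V_j$ is independent in $G'$, and for every pair $j \ne j'$ the edges of $G'$ between $V_j$ and $V_{j'}$ form a matching (two neighbours of a common vertex are at distance $2$, hence lie in different classes).

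Now carry out the standard reduction. Column $j$ of the instance $H$ is the set of proper $3$-colourings of $G'[V_j]$; as $V_j$ is independent of size $s$ there are exactly $3^s = k$ of them, which I assign to the $k$ rows by an arbitrary bijection, so that $V(H) = [k]^2$. I place no edge inside a column, and I join $c_j$ (column $j$) to $c_{j'}$ (column $j' \ne j$) precisely when $c_j \cup c_{j'}$ is proper on $G'[V_j \cup V_{j'}]$. A size-$k$ clique of $H$ then uses one vertex per column and corresponds exactly to a proper $3$-colouring of $G'$ (equivalently of $G$), and conversely; this is the usual correctness argument. For semi-regularity, let $\mu_{j,j'}$ be the number of edges of $G'$ between $V_j$ and $V_{j'}$. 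Because these edges form a matching and both parts are independent, a fixed $c_j$ forbids exactly one colour at each of the $\mu_{j,j'}$ matched vertices of $V_{j'}$ and nothing elsewhere, so $c_j$ is consistent with exactly $2^{\mu_{j,j'}} \cdot 3^{s - \mu_{j,j'}}$ colourings of $G'[V_{j'}]$ --- a quantity that does not depend on $c_j$. Hence every vertex of column $j$ has the same degree towards column $j'$, i.e.\ $H$ is semi-regular. As $H$ has size $k^{\Oh(1)} = n^{\Oh(1)}$ and is built in polynomial time, and $k \log k = \Theta(n)$, the claimed lower bound follows.

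The step I expect to be the real obstacle is producing a partition with \emph{both} properties simultaneously. Semi-regularity forbids the usual remedy of padding short columns with isolated dummy vertices --- such a dummy in column $j$ would have degree $0$ towards column $j'$ while a genuine colouring-vertex has positive degree there --- so all columns must have the same size, which forces the parts to be exactly equal in size and is precisely why the equitable-colouring machinery is invoked. Separately, the uniform count $2^{\mu_{j,j'}} 3^{s-\mu_{j,j'}}$ breaks down for an arbitrary bounded-degree bipartite interaction (a vertex of $V_{j'}$ with two neighbours in $V_j$ may see one or two forbidden colours depending on $c_j$), which is exactly why one colours $G'^{2}$ rather than $G'$. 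Reconciling these two demands --- equitability \emph{and} a proper colouring of $G'^{2}$ --- is the only delicate point; the remainder is the textbook grouping reduction together with the estimate $k \log k = \Theta(n)$.
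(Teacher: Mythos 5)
Your proof is correct and follows essentially the same route as the paper's: an equitable proper colouring of the square $G^2$ (via the constructive Hajnal--Szemer\'edi theorem of Kierstead and Kostochka) yields equal-sized parts that are independent in $G$ and whose pairwise interaction is a matching, so the grouping reduction produces a semi-regular instance with the uniform consistency count $2^{\mu}3^{s-\mu}$. The only cosmetic difference is that you apply Hajnal--Szemer\'edi directly with $k$ colours (after padding $G$ with isolated vertices to force divisibility), whereas the paper takes a $17$-colour equitable colouring and then refines it arbitrarily into $k$ equal classes.
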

  \begin{proof}
    Let $G$ be a hard instance of \textsc{3-Coloring} with maximum degree 4, produced by the reduction of \cref{hardness:3col}.
    $G^2$, the square of $G$, (with an edge between two vertices at distance at most~2) has degree 16.
    By Hajnal-Szemerédi Theorem, $G^2$ admits an equitable coloring using 17 colors.
    This equitable coloring can further be found in polynomial time, by Kierstead and Kostochka~\cite{Kierstead08}.
    We refine this 17 classes arbitrarily into $k$ classes of equal size, such that $\lceil k \log_3 k \rceil = |V(G)|$.
    Let us call $\mathcal P$ the obtained partition of $V(G)$.
    We perform the reduction of \textsc{$3$-Coloring} to \textsc{$k \times k$-Clique} with this equipartition $\mathcal P$.
    The resulting instance is semi-regular.
    By design, for every pair of parts $P \neq P' \in \mathcal P$, $G[P \cup P']$ consists of some isolated edges between $P$ and $P'$, and isolated vertices.
    Indeed an edge within $P$, or within $P'$, or a degree-2 vertex would all contradict the coloring of $G^2$ (with 17 colors). 
    Thus any $3$-coloring of $P$ is consistent with the same number of $3$-colorings of $P'$.
    This shared number is 3 times the number of isolated vertices of $G[P \cup P']$  in $P'$ times twice the number of edges in $G[P \cup P']$.
  \end{proof}
  

\end{document}